\crefname{property}{property}{Property}
\crefname{equation}{eq}{Eq}
\crefname{algocf}{Algorithm}{Algorithm}
\definecolor{ceruleanblue}{rgb}{0.16, 0.32, 0.75}
\definecolor{darkmidnightblue}{rgb}{0.0, 0.2, 0.4}
\definecolor{darkpastelgreen}{rgb}{0.01, 0.75, 0.24}
\definecolor{bleudefrance}{rgb}{0.19, 0.55, 0.91}
\definecolor{RURed}{rgb}{0.65, 0.2, 0.15}
\newcommand{\chen}[1]{\textcolor{RURed}{[\textbf{Chen:} #1]}}
\newcommand{\ex}[1]{\mathbb{E}\left[#1\right]}
\newcommand{\tw}{\tilde{w}}
\newcommand{\td}{\tilde{d}}
\newcommand{\tT}{\tilde{T}}
\newcommand{\cor}{\textsc{Corrupt}}
\newcommand{\OPT}{\ensuremath{\textsf{OPT}}}
\newcommand{\card}[1]{\ensuremath{|#1|}}
\newcommand{\eps}{\ensuremath{\varepsilon}}
\newcommand{\dtilde}{\ensuremath{\widetilde{d}}}
\newcommand{\dest}{\ensuremath{d^{\text{est}}}}
\newcommand{\med}[1]{\ensuremath{\textsf{Median}}\{#1\}}
\DeclareMathOperator*{\R}{\mathbb{R}}
\DeclareMathOperator*{\argmax}{arg\,max}
\newtheorem*{rep@theorem}{\rep@title}
\newcommand{\newreptheorem}[2]{%
	\newenvironment{rep#1}[1]{%
		\def\rep@title{\Cref{##1}}%
		\begin{rep@theorem}}%
		{\end{rep@theorem}}}
\newtheorem{theorem}{Theorem}
\newtheorem{lemma}[theorem]{Lemma}
\newtheorem{remark}[theorem]{Remark}
\newtheorem{definition}{Definition}
\newtheorem{claim}[theorem]{Claim}
\newtheorem{fact}[theorem]{Fact}
\newtheorem{observation}[theorem]{Observation}
\newtheorem{proposition}[theorem]{Proposition}
\newcommand{\bracket}[1]{\left[#1\right]}
\newcommand{\paren}[1]{\ensuremath{\left(#1\right)}\xspace}
\newcommand{\IR}{\ensuremath{\mathbb{R}}}
\newcommand{\prob}[1]{\Pr\paren{#1}}
\newcommand{\expect}[1]{\Exp\bracket{#1}}
\newcommand{\poly}{\mbox{\rm poly}}
\newcommand{\polylog}{\mbox{\rm  polylog}\,}
\newcommand{\ALG}{\ensuremath{\mbox{\sc alg}}\xspace}
\DeclareMathOperator*{\Exp}{\ensuremath{{\mathbb{E}}}}
\DeclareMathOperator*{\Prob}{\ensuremath{\textnormal{Pr}}}
\renewcommand{\Pr}{\Prob}
\newenvironment{tbox}{\begin{tcolorbox}[
		enlarge top by=5pt,
		enlarge bottom by=5pt,
		 breakable,
		 boxsep=0pt,
                  left=4pt,
                  right=4pt,
                  top=10pt,
                  arc=0pt,
                  boxrule=1pt,toprule=1pt,
                  colback=white
                  ]
	}
{\end{tcolorbox}}
\newcommand{\istar}{\ensuremath{i^{\star}}}
\newcommand{\cstar}{\ensuremath{c^{*}}\xspace}
\newcommand{\rstar}{\ensuremath{r^{*}}\xspace}
\newcommand{\ctilde}{\ensuremath{\tilde{c}}\xspace}
\newcommand{\DQ}{\ensuremath{\textnormal{\textsf{WO}}}\xspace}
\newcommand{\LQ}{\ensuremath{\textnormal{\textsf{SO}}}\xspace}
\newcommand{\SO}{\ensuremath{\textnormal{\textsf{SO}}}\xspace}
\newcommand{\WO}{\ensuremath{\textnormal{\textsf{WO}}}\xspace}
\newcommand{\Tstar}{\ensuremath{T^{*}}\xspace}
\newcommand{\ball}[2]{\ensuremath{B(#1, #2)}\xspace}
\newcommand{\lball}[3]{\ensuremath{B^{#1}(#2, #3)}\xspace}
\newcommand{\ldball}[4]{\ensuremath{B_{#1}^{#2}(#3, #4)}\xspace}
\newcommand{\OPTcen}{\ensuremath{\OPT_{\text{k-center}}}\xspace}
\newcommand{\OPTmean}{\ensuremath{\OPT_{\text{k-means}}}\xspace}
\newcommand{\OPTmed}{\ensuremath{\OPT_{\text{k-median}}}\xspace}
\newcommand{\cost}{\ensuremath{\textnormal{\textsf{cost}}}\xspace}
\newcommand{\myqed}[1]{\let\qed\relax \hspace*{\fill} #1 \ensuremath{\square}}
\newcommand{\cA}{\mathcal{A}}
\newcommand{\cB}{\mathcal{B}}
\newcommand{\cC}{\mathcal{C}}
\newcommand{\cE}{\mathcal{E}}
\newcommand{\cH}{\mathcal{H}}
\newcommand{\cQ}{\mathcal{Q}}
\newcommand{\cX}{\mathcal{X}}
\newenvironment{customthm}[1]
  {\innercustomthm}
  {\endinnercustomthm}
\title{
Metric Clustering and MST with Strong and Weak Distance Oracles
}
\author{%
  MohammadHossein Bateni \\
  Google Research \\
  \texttt{bateni@google.com} \\
  \and
  Prathamesh Dharangutte \\
  Rutgers University \\
  \texttt{prathamesh.d@rutgers.edu} \\
  \and
  Rajesh Jayaram \\
  Google Research \\
  \texttt{rkjayaram@google.com} \\
  \and
  Chen Wang \\
  Rutgers University \\
  \texttt{chen.wang.cs@rutgers.edu} \\
}
\begin{document}

\maketitle
\begin{abstract}

We study optimization problems in a metric space $(\mathcal{X},d)$ where we can compute distances in two ways: via a ``strong'' oracle that returns exact distances $d(x,y)$, and a ``weak'' oracle that returns distances $\tilde{d}(x,y)$ which may be arbitrarily corrupted with some probability. This model captures the increasingly common trade-off between employing both an expensive similarity model (e.g. a large-scale embedding model),
and a less accurate but cheaper model. Hence, the goal is to make as few queries to the strong oracle as possible. We consider both so-called ``point queries'', where the strong oracle is queried on a set of points $S \subset \cX $ and returns $d(x,y)$ for all $x,y \in S$, and ``edge queries'' where it is queried for individual distances $d(x,y)$. 


    Our main contributions are optimal algorithms and lower bounds for clustering and Minimum Spanning Tree (MST) in this model. 
    For $k$-centers, $k$-median, and $k$-means, we give constant factor approximation algorithms with only $\tilde{O}(k)$ strong oracle point queries, and prove that $\Omega(k)$ queries are required for any bounded approximation. For edge queries, our upper and lower bounds are both $\tilde{\Theta}(k^2)$.  Surprisingly, for the MST problem we give a $O(\sqrt{\log n})$ approximation algorithm using no strong oracle queries at all, and a matching $\Omega(\sqrt{\log n})$ lower bound. 
 We empirically evaluate our algorithms, and show that their quality is comparable to that of the baseline algorithms that are given all true distances, but while querying the strong oracle on only a small fraction ($<1\%$) of points. 



\end{abstract}

 \renewenvironment{quote}{%
   \list{}{%
     \leftmargin0.1cm   
     \rightmargin\leftmargin
   }
   \item\relax
 }
 {\endlist}

\section{Introduction}
\label{sec:intro}
Large-scale similarity models are ubiquitous in modern machine learning, where they are used to generate real-valued distances for non-metric data, such as images, text, and videos. A popular example is embedding models \cite{mikolov2013efficient,van2008visualizing,he2016deep,devlin2018bert},  which transform a data point $x$ into a point $f(x)$ in a metric space $(\cX,d)$, such that the similarity between $x,y$ can be inferred by the distance $d(f(x),f(y))$. 
    However, as the scale and quality of these models grow, so too increases the
    resources required to run them. Thus, a common component of many ML pipelines is to additionally employ an efficient but less precise similarity model to reduce the number of expensive distance comparisons made with the more accurate model \cite{kusner2015word,9191120}. Common examples of such ``weak'' secondary similarity models include hand-crafted models based on simple features (location, timestamp, bitrate, etc.), lightweight neural network, models trained on cheap but sometimes inaccurate data \cite{9191120}, meta-data obtained in video transcoding \cite{9191120,ringis2021near}, previously computed similarities from historical data \cite{mitzenmacher2022algorithms}, and the retrieve-then-rerank architecture for recommendation systems \cite{LiuXQS00ZT22}, text retrieval \cite{ZLXX22RR}, question-answering \cite{BarzS19} and vision-applications \cite{ZhongZCL17}.

 
Understanding the complexity of computational tasks in the presence of noisy or imprecise oracles is a fundamental problem dating back multiple decades \cite{feige1994computing}, and many problems such as clustering, sorting, and nearest neighbor search have been intensively studied therein \cite{braverman2008noisy, braverman2009sorting,mazumdar2017clustering,green2020clustering, mason2019learning}. However, despite the popularity of combining two oracles in practice, the majority of this work considers only a single imprecise oracle, whereas less work has been done to understand the complexity of tasks using \textit{both} a noisy (weak) oracle, and an exact (strong) oracle. In this paper, we initiate a formal study of this setting for metric optimization problems. 

Specifically, we introduce the \emph{Weak-Strong Oracle Model}: here, we are given a metric space $(\cX, d)$ of $|\cX| = n$ points, where $d: \, \cX \times \cX \rightarrow \IR$ is the underlying metric, representing the output of an expensive but accurate similarity model,\footnote{We often use ``similarity'' and ``distance'' interchangeably, as similarity models, especially embedding-based models, can usually be easily converted to distance models and vice versa. } 
as well as a corruption probability $\delta \in (0,1/2)$. The metric $d$ is not known to the algorithm a priori, but can be accessed through two types of queries: \textit{strong} and \textit{weak oracle} queries. For the strong oracle, we consider two possible forms of queries: edge queries and point queries. These queries are defined as follows:

\begin{itemize}[leftmargin=*]
\item \textbf{Weak oracle queries ($\WO(x,y)$):} given $(x,y) \in \cX^2$, the weak oracle returns a value $\dtilde(x,y)$ such that: with probability $1-\delta$ we have $\td(x,y) = d(x,y)$, and otherwise, with probability $\delta$, the value $\td(x,y)$ is set arbitrarily.\footnote{A pair $(x,y)$ such that the weak oracle distance $\td(x,y)$ can be set arbitrarily is called \textit{corrupted}. } The randomness is \emph{independent} across different pairs $(x,y)$, and drawn \emph{exactly once} (i.e., repeated queries to $\tilde{d}(x,y)$ will yield the same result).
\item \textbf{Strong oracle (point) queries ($\SO(x)$):} given a point $x \in \cX$, a strong point oracle returns a symbolic value $\SO(x)$. The value $\SO(x)$ gives no information on its own. However, given any two values $\SO(x),\SO(y)$, the algorithm can compute the true distance $d(x,y)$.

\item \textbf{Strong oracle (edge) queries ($\SO(x,y)$):} given $x,y \in \cX$, a strong edge oracle returns the true distance $\SO(x,y) = d(x,y)$.
\end{itemize}


The weak oracle distances $\td$ capture a cheap but less precise distance model, whereas the strong oracle is considered to be significantly more expensive. As a result, our goal is to produce a high-quality solution to an optimization problem (e.g. clustering) for the underlying metric $(\cX,d)$ while \textit{minimizing} the number of queries made to the strong oracle. We even allow the corruptions that occur to the weak oracle to be \textit{adversarial} (see Section \ref{sec:prelim} for precise model definitions); this captures a very general class of ``imprecise weak oracles'', allowing them to produce arbitrary bad distances with some probability. 

Depending on the context, it may make sense to allow only one of the two types of strong oracle queries. Therefore, we consider two models: \textbf{(1)} where only strong oracle point queries are allowed, and \textbf{(2)} where only strong oracle edge queries are allowed. Note that the two types of strong oracle queries are closely related. In particular, any algorithm that makes $q$ strong oracle point queries can be simulated by an algorithm that makes $q^2$ strong oracle edge queries. In this paper, we will give algorithms and lower bounds for both strong oracle query models. We recall that an important motivation for strong oracle point queries is the example of an expensive embedding model: in this case, $\SO(x)$ represents the embedding into $(\cX,d)$. Conversely, the strong oracle edge query model is natural in settings where the expensive model can compute pair-wise similarities, such as cross-attention models \cite{brown2020language,thoppilan2022lamda}.  

 We focus on clustering, which is one of the most fundamental unsupervised learning tasks, and the classic metric minimum spanning tree (MST) problem, which has applications to network design and hierarchical clustering. Both tasks have been studied extensively in the literature on noisy oracles \cite{ashtiani2016clustering, mazumdar2017clustering, ailon2017approximate, ergun2022learningaugmented, nguyen2023improved, silwal2023kwikbucks}. However, given the strong type of inaccuracies allowed by our weak oracle, a priori it is not clear whether we can solve these foundational tasks without querying the strong oracle for essentially all the distances. Specifically, we pose the following question:

\begin{quote}
 \begin{center}
  {\it  Is it possible to solve metric optimization tasks, like clustering and MST, in the Weak-Strong Oracle Model while making fewer that $\Omega(n)$ strong oracle point queries (or $\Omega(n^2)$ edge queries)?}
 \end{center}
 \end{quote}

\subsection{Contributions}\label{sec:contributions}
Our main contribution is to answer the above question in the affirmative. Specifically, we design constant factor approximation algorithms for $k$-centers, $k$-means, $k$-medians with $\tilde{O}(k)$\footnote{Throughout, we write $\tilde{O}$ to suppress $\log n$ factors.} point queries to the strong oracle. For MST, we design an algorithm that achieves a $O(\sqrt{\log n})$ approximation without \textit{any} strong oracle queries. For both problems, we prove matching or nearly matching lower bounds, demonstrating the optimality of our algorithms. 
Our results for $k$-clustering hold for any corruption probability $\delta \in (0,1/2)$ bounded away from $1/2$ by a constant, and for MST our results hold for any $\delta \in (0,1)$ bounded away from $1$ by a constant. 

\paragraph{Clustering.}
We begin with our results for $k$-clustering. Here, our goal is to produce a set of $k$ \textit{centers} $c_1,\dots,c_k \in \cX$, as well as a mapping $\cC:\cX \to \{c_i\}_{i=1}^k$, so as to minimize the $k$-clustering cost \textit{with respect to the original metric $(\cX,d)$}. Recall that for $k$-centers, the objective is to minimize $\max_{p \in \cX} d(p,\cC(p))$. For the other two objectives, the goal is to minimize $\sum_{p \in \cX} d^q(p, \cC(p))$, where $q=1$ for $k$-median and $q=2$ for $k$-means. Our results for $k$-clustering tasks are as follows:



\begin{customthm}{\ref{thm:alg-k-center} and \ref{thm:alg-k-means}}[Clustering Upper Bounds]
There exists algorithms in the weak-strong oracle model that, with high probability, obtain $O(1)$ approximations to $k$-centers, $k$-means, and $k$-median. The algorithms use $\tilde{O}(k)$ strong oracle point queries, or $\tilde{O}(k^2)$ edge queries, and run in time $\tilde{O}(nk)$. 
\end{customthm}



Despite the similar query-complexities, the clustering algorithms from Theorems \ref{thm:alg-k-center} and \ref{thm:alg-k-means} require very different techniques. 
Moreover, since it is NP-Hard to give better than a $2$ approximation to any of the above clustering tasks \cite{hsu1979easy,cohen2019inapproximability,cohen2022johnson}, our algorithm's approximations are optimal up to a constant. Next, we show that the query complexity of our algorithms are nearly optimal, even for arbitrarily large approximations, which settles the complexity of this problem up to $\log n$-factors.



\begin{customthm}{\ref{thm:k-clustering-lb}}[Clustering Lower Bound]
Any algorithm which obtains a multiplicative $c$-approximation, for any approximation factor $c$, with probability at least $1/2$, to either $k$-centers, $k$-means, or $k$-medians, must make at least $\Omega(k^2)$ strong oracle edge queries, or $\Omega(k)$ strong oracle point queries.
\end{customthm}    

\paragraph{Minimum Spanning Tree.} In the classic metric MST problem, the goal is to produce a spanning tree $T$ of the points in $\cX$ so as to minimize the weight of the tree in the original metric $(\cX,d)$: namely $w(T) = \sum_{\text{edge } (x,y) \in T} d(x,y)$. We consider the problem in two settings, corresponding to whether or not the weak oracle distances $\td:\cX \times \cX \to \R$ are themselves a metric over $\cX$. We refer to the case where $(\cX,\td)$ is restricted to being a metric as the \textit{metric-weak oracle} setting. This setting is especially motivated by weak oracles which are themselves embedding models, such as lighter-weight embeddings or pre-computed embeddings trained on stale or possibly inaccurate data. We demonstrate that, perhaps surprisingly, given a metric weak oracle, we can obtain a good approximation to the optimal MST without resorting to the strong oracle at all.
\begin{customthm}{\ref{thm:mst-alg-metric-case}}
\label{thm:mst-alg-metric-case}
There is an algorithm that, given only access to the distances $\td$ produced by a metric weak oracle (namely, $(\cX,\td)$ is metric), produces a tree $\hat{T}$ such that $\mathbb{E}[w(\hat{T})] \leq O(\sqrt{ \log n}) \cdot \min_T w(T)$.
\end{customthm}  

A natural question that arises following our MST algorithm is whether a constant approximation is possible, perhaps by allowing for a small number of strong oracle queries as well. We demonstrate, however, that this is impossible in a strong sense: any algorithm that achieves a better than $O(\sqrt{ \log n})$ approximation must essentially query the strong oracle for all the distances in $\cX$. 
\begin{customthm}{\ref{thm:mstLBMain}}
  There exists a constant $c$ such that any algorithm that outputs a spanning tree $\hat{T}$ such that $\mathbb{E}[w(\hat{T})]\leq c  \sqrt{\log n}\cdot \min_{T}w(T)$ must make at least $\Omega(n/\sqrt{\log n})$ queries to the strong oracle, and this holds even when $\td:\cX^2 \to \R$ is restricted to being a metric. 
\end{customthm}  
Thus, Theorems \ref{thm:mst-alg-metric-case} and \ref{thm:mstLBMain} prove tight bounds for the approximation of MST in the metric weak oracle setting. A final question is whether a $O(\sqrt{\log n})$ approximation is possible without the metric restriction on $\td$. We demonstrate that this too is impossible, by proving a $\Omega(\log n)$ lower bound on the approximation in the general case. Taken with our upper bound in Theorem \ref{thm:mst-alg-metric-case}, this proves a strong separation between the metric and non-metric weak oracle models.

\begin{customthm}{\ref{thm:mstLBGeneralMain}}
  There exists a constant $c$ such that any algorithm which outputs a spanning tree $\hat{T}$  such that $\mathbb{E}[w(\hat{T})]\leq c  \log n\cdot \min_{T}w(T)$ must make at least $\Omega(n)$ point queries to the strong oracle. 
\end{customthm}  

We conjecture that the lower bound from Theorem \ref{thm:mstLBGeneralMain} is tight, and that an algorithm exists with \textit{no} strong oracle queries and a $O(\log n)$ approximation. We leave this as an open question for future work to determine the exact query complexity of non-metric MST in the weak-strong oracle model. 


\paragraph{Experiments.} We empirically evaluate the performance of our algorithms on both synthetic and real-world datasets. 
For the synthetic data experiments, we use the extensively studied Stochastic Block Model \cite{holland1983stochastic,dyer1989solution,decelle2011asymptotic,abbe2015exact,abbe2015community,hajek2016achieving,mossel2015consistency}, which has a natural interpretation as clustering with faulty oracles \cite{mazumdar2017clustering}. 
For the real-world dataset, we run experiments to cluster embeddings of the MNIST dataset \cite{deng2012mnist}; specifically, we consider both the SVD and t-SNE embeddings \cite{van2008visualizing}. 
Our experiments demonstrate that our algorithms achieve clustering costs that are competitive with standard benchmark algorithms 
that have access to strong oracle queries on the \emph{entire dataset}, while our algorithms only make strong oracle queries on a small fraction of the points (i.e. $1$-$2\%$ of the points). Furthermore, we show that benchmark algorithms with \emph{no} strong oracle queries produce much significantly worse clusterings than our algorithms, demonstrating the necessity of exploiting the strong oracle.

\subsection{Other Related Work}

Our paper studies metric optimization problems where we have easy access to corrupted distances, but accessing the true distances is expensive. This is closely related to 
 both \textit{active learning} and \textit{clustering under budget constraints}, which limit the number of pair-wise comparisons. For the two oracle setting, active learning with both weak and strong labelers
\cite{ZhangC15, YEC20Active}, as well as active learning with diverse labelers
\cite{HuangCMZ17} have been studied. 
In the budget constrained clustering case, a line of work considered spectral clustering on partially sampled matrices \cite{fetaya2015graph,shamir2011spectral,wauthier2012active}, and \cite{garcia2020query} devise correlation clustering algorithms with approximation depending on the query budget. 
Two other closely related lines of work are \textit{clustering with noisy oracles} and \emph{algorithms with predictions} (see \cite{mitzenmacher2022algorithms} for a survey of the latter\footnote{Also see the website \url{https://algorithms-with-predictions.github.io/}} ). Many tasks, including correlation clustering and signed edge prediction \cite{mazumdar2017clustering, mitzenmacher2016predicting, green2020clustering}, $k$-clustering \cite{ashtiani2016clustering,ailon2017approximate,nguyen2023improved,addanki2021design, ergun2022learningaugmented}, 
and MST \cite{ErlebachLMS22,BBFL23MST}, have been studied. 

The key difference between all the aforementioned settings and ours is that they are given immediate access to the true similarities (i.e. $(\cX,d)$ for us), and their noisy queries provide access to the \textit{optimal} clustering (or ground truth labels); for instance, their oracles can be asked queries like ``should $x$ and $y$ be clustered together''?
Comparatively, in our setting the strong oracle simply provides non-noisy access to the input distances. 
For such oracles, perhaps the most closely related work is the recent paper~\cite{silwal2023kwikbucks}, which studies \textit{correlation clustering} with weak and strong oracles akin to ours. However, their model is limited to correlation clustering (where the input is a graph with binary labels), whereas our model is based in a metric space, and thus captures the entire span of metric optimization algorithms. For the setting where we only employ a weak oracle (such as our MST algorithm), perhaps the most closely related work is \cite{mason2019learning}, which studies finding nearest neighbors when distances are corrupted by Gaussian noise, which is an incomparable noise setting to ours. 

  








\section{Preliminaries}
\label{sec:prelim}


A full instance of the weak-strong oracle model is specified by the triple $(\cX,d,\td)$, where $\td:\cX^2 \to \R$ are the distances returned by the weak oracle. We write $\cor \subseteq \binom{n}{2}$ to denote the set of ``corrupted'' distances (where $\td(x,y) \neq d(x,y)$). 
We allow the values of $\td(x,y)$ for $(x,y) \in \cor$ can be chosen arbitrarily and by an adversary who knows the full metric $(\cX,d)$ as well as the set $\cor$. 

We write $\Delta \geq 1$ to denote the \emph{aspect ratio} of the original metric space $(\cX,d)$. 
Without loss of generality (via scaling), we can assume that $1 \leq d(x,y) \leq \Delta$
for all $x,y \in \cX$. Note that this bound only applies to the strong oracle -- the weak oracle distances $\dtilde$ can of course be arbitrarily larger than $\Delta$ or smaller than $1$. Throughout, we will assume that the aspect ratio is polynomially bounded, namely that $\Delta \leq n^c$ for any arbitrarily large constant $c \geq 0$ -- a natural assumption in the literature. We discuss the generalization of our work to arbitrary aspect ratio in \Cref{seubsec:add-detail}.

Our algorithms in Sections \ref{sec:alg-k-center} and \ref{sec:alg-k-means} will only use strong oracle point queries; since any algorithm that makes $\tilde{O}(k)$ point queries can be transformed into an algorithm that makes at most $\tilde{O}(k^2)$ edge queries (simply by querying all stances between the set $S \subset \cX$ of point-queries), the edge query complexity follows as a corollary. 
Thus, in what follows, ``strong oracle query'' refers to strong oracle point queries, and edge queries will be explicitly specified as such. For simplicity, we present our clustering algorithms for the case of $\delta = 1/3$, and describe in Appendix \ref{appendix:deltaRemark} how we can generalize our algorithms for $\delta \in (0, 1/2)$. 

\textbf{Notation.} 
For a set $S \subset \cX$, we write $d(u,S) = \min_{y \in S}d(x,y)$, and we define $\dtilde(u,S)$ analogously. For a metric space $(\cX, d)$, a point $x \in \cX$, and a radius $r >0$, we write $\cB_d(x,r) = \{ y \in \cX \; | \; d(x,y) \leq r \}$ to denote the closed metric ball centered at $x$ with radius $r$ under $d$. When $d$ is the original metric $(\cX,d)$, we simply write $\cB(x,r) = \cB_d(x,r)$.
We write $\OPTcen(d)$, $\OPTmean(d)$, and $\OPTmed(d)$ to denote the optimal clustering cost of $k$-center, $k$-means and $k$-median on $\cX$ with distance metric $d$. 
  When the problem of study is clear, we also simply use $\OPT$ to denote the optimal cost. %
  For the metric minimum spanning tree problem, given a tree $T = (\cX,E)$ spanning the points in $\cX$, we write $w_d(T):=\sum_{(x,y)\in E} d(x,y)$ to denote the cost of the tree in the metric $d$, and $w_{\dtilde}(T):=\sum_{\text{edge} (x,y)\in T} \td(x,y)$ to denote the cost with respects to $\td$. For clarity, we will sometimes write $w(T) = w_d(T)$ and $\tw(T) = w_{\td}(T)$.  

\section{$k$-Center Clustering in the Weak-Strong Oracle Model}
\label{sec:alg-k-center}




We first consider the problem of $k$-center clustering, where the goal is to find $k$ centers $c_1,\dots,c_k \in \cX$ and a mapping $\cC:\cX \to \{c_i\}_{i=1}^k$ such that the maximum distance of a point in $\cX$ to its respective center, namely $\max_{p \in \cX} d(p,\cC(p))$, is minimized. 
We give an algorithm for this problem that achieves $O(1)$-approximate solution using $\widetilde{O}(k)$ strong oracle queries.


\begin{theorem}
\label{thm:alg-k-center}
    For any $\eps > 0$ and metric space $(\cX,d)$, there is an algorithm in the weak-strong oracle model that, with high probability, obtains a $(14+\eps)$-approximation to the k-center problem using $O(k \log^2{n} \cdot \log{\frac{\log{n}}{\eps}})$ strong oracle queries, $O(kn \log^2{n} \cdot \log{\frac{\log{n}}{\eps}})$ weak oracle queries, and running in time $\tilde{O}(n k \log{\frac{1}{\eps}})$. 
\end{theorem}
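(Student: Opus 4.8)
The plan is to reduce the weak-strong oracle $k$-center problem to a standard greedy/Gonzalez-style argument, but run on the \emph{weak} distances $\td$, with the strong oracle used sparingly only to ``certify'' candidate centers and to break ties caused by corruptions. The core difficulty is that the adversary may corrupt an entire $\delta$-fraction of the edges incident to any given point, so $\td(x,\cdot)$ for a fixed $x$ can be badly misleading; but since corruptions are independent across pairs, for \emph{most} pairs $(x,y)$ the value $\td(x,y)$ equals $d(x,y)$, and any fixed point $x$ has at most a $\delta$-fraction of corrupted incident edges in expectation (and, by Chernoff, $O(\delta)$-fraction with high probability). The key structural tool will be a ``robust ball'' or ``majority vote'' primitive: for a candidate center $c$ and a guessed radius $r$, estimate $|\cB(c,r)|$ (or decide membership of a point $p$) by querying $\td$ from $c$ to a random sample, or to the already-committed center set, and taking a majority. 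Because a clean point's distance is never corrupted, if we look at distances from $p$ to a set $S$ of already-selected centers that we have \emph{strongly} queried among themselves, we can cross-check: $p$ is $r$-close to some $c \in S$ iff $\td(p,c) \le r$ for the uncorrupted edges, and we can detect corruption of the edge $(p,c)$ by a triangle-inequality consistency check against other centers in $S$ whose true pairwise distances we know.

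The concrete steps I would carry out: (i) Guess the optimal radius $\rstar$ up to a $(1+\eps)$ factor by binary/geometric search over the $O(\log_{1+\eps}\Delta) = O(\eps^{-1}\log n)$ scales; this is where the $\log\frac{\log n}{\eps}$ and $\log\frac1\eps$ factors in the bound enter. (ii) For a fixed guessed radius $r$, run an iterative center-selection procedure: maintain a set $S$ of chosen centers, strong-query every newly added center (so all pairwise \emph{true} distances within $S$ are known), and at each step either find a point $p$ that is ``far'' (weak-distance $> c\cdot r$ to every center in $S$, verified as not being an artifact of corruption) and add it to $S$, or certify that every point is within $O(r)$ of $S$. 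To make the ``far point'' test robust, when a point $p$ reports $\td(p,c) > cr$ for all $c \in S$, we need to rule out that many of these are corruptions; here we use that at most a $2\delta < 1$ fraction of $p$'s edges to a large random set are corrupted, plus triangle-inequality certificates among the strongly-queried $S$, to conclude $p$ is genuinely far with high probability. (iii) Show that this process adds at most $O(k)$ centers when $r \ge \rstar$ (a packing argument: genuinely far points are pairwise $\Omega(r)$ apart in the true metric, so at most $k$ of them exist up to the usual Gonzalez slack), giving the $\widetilde O(k)$ strong-query bound; and that the final assignment $\cC$ maps every $p$ to its closest center in $S$ under $\td$, which after one more layer of majority-vote correction is within $(14+\eps)r \le (14+\eps)\rstar$ of $p$ in $d$. (iv) Union-bound the Chernoff failure probabilities over the $O(k\log n)$ relevant points/rounds and $O(\eps^{-1}\log n)$ radius guesses to get the high-probability guarantee, and tally queries: $\widetilde O(k)$ strong, $O(n)$ weak per center per scale giving $O(kn\log^2 n\log\frac{\log n}\eps)$ weak, and $\widetilde O(nk)$ runtime.

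The constant $14$ presumably comes from summing triangle-inequality hops: one factor for ``$p$ close to some true optimal center $c^*_i$'', one for ``$c^*_i$ close to some selected center in $S$ (since otherwise $c^*_i$ itself would have been selected)'', and extra slack for the weak-vs-true discrepancy on the at-most-corrupted edges plus the $(1+\eps)$ radius-guessing loss; I would track these hops carefully at the end but not optimize them.

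The main obstacle I anticipate is step (ii): designing the robust ``is $p$ genuinely far from $S$?'' test so that it (a) never has false negatives that would let the center count blow up past $O(k)$, and (b) never has false positives that would let an uncovered point slip through, all while using only $O(1)$ \emph{amortized} strong queries per selected center rather than per point. The resolution I would pursue is to never strong-query a point just to test it — only strong-query points we actually \emph{commit} as centers — and to lean entirely on (1) the independence of corruptions so that weak distances from $p$ to a freshly drawn random sample are reliable in aggregate, and (2) the triangle-inequality certificates inside the already-strongly-known clique $S$ to locally correct the handful of corrupted edges $(p,c)$, $c\in S$. If $|S|$ is too small for the triangle-inequality cross-check to pin down corruptions, I would instead argue directly that so few centers have been chosen that the packing bound is not yet threatened, i.e.\ handle small $|S|$ by brute force within the query budget.
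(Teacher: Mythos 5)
There is a genuine gap in step (ii), and it is exactly at the point you flag as the main obstacle. Your proposed resolution — triangle-inequality consistency checks of $\td(p,c)$ against the strongly-queried clique $S$ of committed centers, plus "aggregate reliability" of weak distances to a random sample — does not survive the adversary. In the generic case, a point $p$ has exactly \emph{one} center $c^*\in S$ that is genuinely close to it, and all other centers in $S$ are pairwise $\Omega(r)$-separated and hence far from $p$. With probability $\delta$ the single edge $(p,c^*)$ is corrupted to a huge value; the triangle-inequality constraint $|d(p,c^*)-d(p,c')|\le d(c^*,c')$ against any other center $c'$ is then vacuous because $d(c^*,c')$ is large, so the corruption is undetectable and uncorrectable. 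A constant fraction of points will therefore be misclassified as "far," blowing up either the center count or the covering guarantee. The redundancy you need is not many centers spread across the space, but many \emph{strongly-queried reference points packed inside a small ball around each candidate center}: then $d(p,c)$ can be estimated as the \emph{median} of the $\Omega(\log n)$ independent weak distances from $p$ to those reference points, each corrupted independently with probability $\delta<1/2$, so the median is within the ball's radius of $d(p,c)$ w.h.p. This forces you to abandon your design constraint of strong-querying only committed centers: the paper strong-queries $\Theta(k\log n)$ uniformly random points per round precisely so that every optimal cluster containing a $\frac{1}{10k}$-fraction of the surviving points receives $\Omega(\log n)$ reference points, for a total of $\widetilde O(k)$ strong queries, still far below the $n$ that would be needed to test points individually.

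A second, related omission is how you handle optimal clusters too small to be hit by the random sample (or, in your framing, how a single-pass Gonzalez sweep certifies coverage of every point). The paper resolves this by only attempting to cover points in \emph{heavy} clusters each round — which account for a $\ge 8/10$ fraction of the surviving points — then removing the covered points and recursing for $O(\log n)$ rounds, accumulating $O(k\log n)$ candidate centers and reducing them to $k$ at the end with one more greedy ball carving at radius $R$ (this final carving, at cost $+R$ on top of the $6R$ assignment radius, is where the constant $7R\le 14(1+\eps)\OPT$ comes from). Your packing argument for bounding the number of selected centers is fine and mirrors the paper's \Cref{obs:certificate_R}, and your outer loop of geometrically guessing the radius with a binary search over scales matches the paper; but without the localized-median primitive and the heavy-ball recursion, the core covering step does not go through.
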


At a high level, our algorithm is a recursive procedure that, at each step, attempts to correctly cluster (up to a constant approximation) all of the points $p$ whose optimal cluster contains at least a $\frac{1}{10 k}$-fraction of the current set of points. We then remove these clustered points and recurse on the un-clustered points. In what follows, we suppose that we are given a number $R$ such that $2 \OPT \leq R \leq (2+\eps)\OPT$ (we will later find $R$ by guessing it in powers of $(1+\eps)$).  

Let $\cB_1^*,\dots,\cB_k^*$ be the optimal clusters. Our algorithm first samples a set $S$ of $O(k \log n)$ points uniformly from $\cX$ and queries the strong point oracle on $S$; by standard concentration bounds, for any optimal cluster $\cB_i^*$ containing at least a $\frac{1}{10 k}$-fraction of the points, $S$ will contain at least $\Omega(\log n)$ points from $\cB_i^*$ with high probability. We call such a cluster \textit{heavy}. Our goal will be to cluster the points in each heavy ball $\cB_i^*$, remove the clustered points, and recurse on the remaining points. Since at least a $\frac{9}{10}$ fraction of points are in heavy balls, the recursion will complete after $O(\log n)$ iterations.

The main challenge is now to identify the points that belong to a heavy cluster $\cB_i^*$. 
Notice that even if we knew the center $c_i^*$ of $\cB_i^*$, we cannot use  $\td(x,c_i^*)$ alone to determine if a point $x$ should be in $\cB_i^*$, since $\td(x,c_i^*)$ could have been arbitrarily corrupted. Instead, suppose that we were given a sufficiently large set $U \subset \cB_i^*$ of points. Then we observe that one can estimate $d(x,c_i^*)$ by using the \emph{median} of the weak oracle distance between $x$ and the points in $U$. Since each distance $\td(x,u)$ is corrupted with probability $\delta < 1/2$ for $u \in U$, we expect that the median distance should be between $\min_{u \in U} d(x,u)$ and $\max_{u \in U} d(x,u)$.

We now give the formal description and analysis of the algorithm as in \Cref{alg:k-center}. For simplicity, we first state the algorithm using monotonically increasing values of $R$, and we lately describe how this can be sped up via binary searching for a usable value of $R$.

\begin{algorithm}[!htbp]
\caption{\label{alg:k-center}The $k$-center algorithm}
\KwData{Set of points $\cX$, number of centers $k$.}
\KwResult{Clustering $\cC$ with centers $C=\{c_1, \cdots , c_k\}$ and assignment of each point $x \in \cX$}

\For{$R = (1+\eps)^\ell$, $\ell \in [O(\log{n})]$}{
    Run sample and cover (\Cref{alg:sample_cover}) on $\cX$ to obtain candidate centers; \\
    Run greedy ball carving (\Cref{alg:greedy_ball_carve}) on candidate centers with threshold $R$; \\
    If number of centers obtained from the above procedure is $>k$, increase $\ell$ and repeat.
                
}
\end{algorithm}





Before describing the procedures and their guarantees, we first introduce some notations. Let $\cB_1^* = \cB_1^*(c_1^*, \OPT), \cdots , \cB_k^* = \cB_k^*(c_k^*, \OPT)$ centered at $c_1^*, \cdots , c_k^*$ be the clusters corresponding to the optimal k-center solution  on $(\cX,d)$. We define the ``cover'' of a point $x\in \cX$ as follows.

\begin{definition}
\label{def:point-covering}
For any pair of point $(x,y)\in \cX$, we say $y$ is covered by the ball $\cB(x,r)$ centered at point $x$ with radius $r$ if $d(x,y) \le r$ (namely, if $y \in \cB(x,r))$.
\end{definition}


We also define the following notion of ``heavy ball'' that captures the balls with a sufficiently large number of points inside. 
\begin{definition}
\label{def:heavy_ball_k_center}
Let $(\cX,d)$ be a metric space, and fix any $x \in \cX$ and $r > 0$. We say that the ball $\cB(x,r)$ is \underline{heavy} if $\card{\cB(x,r) \cap \cX} \geq  \frac{n}{10k}$. 
\end{definition}

 For $\ell \le k$, denote the corresponding set of \emph{heavy balls} from optimal solution for $k$-center by 
\[\cB^*_\cH = \{\cB_1^*(c_1^*, \OPT), \cdots , \cB_\ell^*(c_\ell^*, \OPT)\}.\]

\begin{observation}
    \label{obs:frac_covering_points}
    The union of all heavy balls from $\cB^*_H$ cover at least $\frac{9n}{10}$  points in $\cX$
\end{observation}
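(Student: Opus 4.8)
The plan is a direct counting argument using the partition of $\cX$ induced by the optimal solution. First I would let $\cX_1,\dots,\cX_k$ be the partition of $\cX$ in which $\cX_i$ consists of the points assigned to center $c_i^*$ in the optimal $k$-center solution. By definition of the $k$-center objective, every point in $\cX_i$ lies within distance $\OPT$ of $c_i^*$, so $\cX_i \subseteq \cB_i^*(c_i^*,\OPT)$, and in particular $\bigcup_{i=1}^k \cB_i^* = \cX$.

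Next I would bound the number of points \emph{not} covered by any heavy ball. Suppose $x \in \cX$ is not covered by any ball in $\cB^*_H$, and let $j$ be the unique index with $x \in \cX_j$. Since $\cX_j \subseteq \cB_j^*$, if $\cB_j^*$ were heavy then $x$ would be covered, a contradiction; hence $\cB_j^*$ is not heavy, i.e.\ $\card{\cB_j^* \cap \cX} < \frac{n}{10k}$, which also gives $\card{\cX_j} < \frac{n}{10k}$. Thus every uncovered point lies in $\bigcup_{j:\ \cB_j^*\text{ not heavy}} \cX_j$, a union of disjoint sets.

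Finally I would sum: there are at most $k$ non-heavy balls, each contributing fewer than $\frac{n}{10k}$ points to this disjoint union, so the number of uncovered points is strictly less than $k \cdot \frac{n}{10k} = \frac{n}{10}$. Consequently the heavy balls in $\cB^*_H$ cover more than $n - \frac{n}{10} = \frac{9n}{10}$ points, as claimed.

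The argument is essentially immediate, and the only point requiring care is that the optimal balls $\cB_i^*$ may overlap, so one cannot directly add $\card{\cB_i^* \cap \cX}$ over non-heavy $i$ to count uncovered points. Passing to the disjoint assignment sets $\cX_i$ resolves this, and since $\cX_i \subseteq \cB_i^*$ the heaviness of $\cB_i^*$ only ever works in our favor, so the bound goes through cleanly.
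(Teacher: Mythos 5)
Your proposal is correct and is essentially the same counting argument as the paper's: the non-heavy optimal balls each contain fewer than $\frac{n}{10k}$ points, so they account for at most $\frac{n}{10}$ points in total, leaving at least $\frac{9n}{10}$ points in heavy balls. Your extra step of passing to the disjoint assignment sets $\cX_i$ is a small refinement in rigor (the paper counts the non-heavy balls directly and slightly more loosely, using $k-1$ of them), but it does not change the substance of the argument.
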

 \begin{proof}
     Consider the balls in optimal solution of $k$-center on $(\cX,d)$ that are not heavy as per \cref{def:heavy_ball_k_center}. There are at most $k-1$ such balls each covering at most $ \frac{n}{10k}$ points in $\cX$ which amounts to at most $n/ 10$ points. As a result, heavy balls cover $\frac{9n}{10}$ points in $\cX$.
 \end{proof}

We define the ``cover'' of set of points by a set of balls as follows.

\begin{definition}
    \label{def:set_covering}
    For a set of points $U$, we say $U$ is covered by $\cB_C =\{\cB(c_1,r_1), \cdots, \cB(c_\ell, r_\ell)\}$, a set of $\ell$ balls, if for all $u \in U$ these exists $\cB(c_i,r_i) \in \cB_C$ such that $\cB(c_i,r_i)$ covers $u$.
\end{definition}

We first describe the procedures ``Greedy Ball Carving'' and ``Sample and Cover'' in greater detail, whose guarantees will ultimately be used to complete the proof for \cref{thm:alg-k-center}. We start by assuming knowledge of an estimate $2\OPT \le R \le 2(1+\eps) \OPT$. In the proof for \cref{thm:alg-k-center} we will guess $R$ in powers of $(1+\eps)$. 

We first discuss the procedure greedy ball carving (\cref{alg:greedy_ball_carve}), which serves as a witness in estimating  $R$. For a set of points with strong oracle queries and some $R$, the procedure first randomly picks a point as center. Removing all points in the set that are at distance $\le R$ from center, it recurses on the remaining points. We will use the centers generated by \cref{alg:greedy_ball_carve} in identification of heavy balls and points within.

\begin{algorithm}[!htbp]
\caption{\label{alg:greedy_ball_carve}Greedy Ball Carving}
\KwData{Set of points $S$, radius $R$}
\KwResult{Set of centers $C=\{c_1, \cdots , c_m\}$ and assignment of $s \in S$ to respective centers.}
\textbf{Init:} $C = \{\}$ \\

\While{$S$ is not empty}{
    Pick arbitrary point $c \in S$ \\ 
    Treating $c$ as center, assign $s$ to $c$ if $d(c,s) \le R$ for all $s \in S$. \\
    Add $c$ to $C$ and remove assigned points.
}
\end{algorithm}

Note that greedy ball carving will \textit{always} be used by our algorithm on sets with strong oracle queries and as a result operates with true distances. The following observation helps us to use greedy ball carving as a witness for guessing $R$, as discussed in \cref{seubsec:add-detail}.


\begin{observation}
\label{obs:certificate_R}
If $\OPT$ is the optimal k-center cost for $(\cX,d)$ and $2\OPT \le R$, then \cref{alg:greedy_ball_carve} run on any subset of points $S \subseteq \cX$, with radius $R$ returns $m \le k$ centers.
\end{observation}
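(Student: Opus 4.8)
\textbf{Proof proposal for Observation~\ref{obs:certificate_R}.}

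The plan is to show the contrapositive: if \cref{alg:greedy_ball_carve} run on $S \subseteq \cX$ with radius $R$ returns $m > k$ centers, then $R < 2\OPT$. So suppose the carving produces centers $c_1, \dots, c_m$ with $m \geq k+1$. The key structural fact about greedy ball carving is that the centers it outputs are \emph{pairwise far apart}: when $c_j$ is selected, it has not yet been assigned to any of $c_1,\dots,c_{j-1}$, which means $d(c_i, c_j) > R$ for every $i < j$. Hence $\{c_1,\dots,c_m\}$ is a set of $m \geq k+1$ points in $\cX$ that are pairwise at distance strictly greater than $R$.

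Now I would invoke a pigeonhole argument against the optimal $k$-center solution. Let $\cB_1^*(c_1^*,\OPT),\dots,\cB_k^*(c_k^*,\OPT)$ be the optimal clusters, so that every point of $\cX$ (in particular every $c_j$) lies within distance $\OPT$ of some $c_i^*$. Since there are $m \geq k+1$ centers $c_j$ but only $k$ optimal clusters, two of them, say $c_a$ and $c_b$ with $a \neq b$, are assigned to the same optimal center $c_i^*$. Then by the triangle inequality in $(\cX,d)$,
\[
d(c_a, c_b) \;\leq\; d(c_a, c_i^*) + d(c_i^*, c_b) \;\leq\; \OPT + \OPT \;=\; 2\,\OPT.
\]
But we established $d(c_a, c_b) > R$, so $R < 2\OPT$, contradicting the hypothesis $2\OPT \leq R$. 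Therefore $m \leq k$, as claimed.

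There is no serious obstacle here; the only point requiring a little care is the first step — correctly extracting from the description of \cref{alg:greedy_ball_carve} the invariant that all output centers are pairwise at distance more than $R$ (this uses that the algorithm always operates on sets with strong oracle queries, so the distances $d(c,s)$ it tests are the true metric distances, as the surrounding text emphasizes). Everything after that is a one-line pigeonhole plus triangle inequality. It is also worth noting the statement holds for \emph{any} subset $S \subseteq \cX$: the optimal clusters for all of $\cX$ still cover $S$, so the pigeonhole argument goes through verbatim regardless of which $S$ the carving is run on.
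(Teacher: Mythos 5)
Your proof is correct and takes essentially the same route as the paper's: the paper also argues by contradiction that $m > k$ centers would yield $k+1$ points with pairwise distance more than $R$, contradicting $2\OPT \le R$. You have simply filled in the pigeonhole-plus-triangle-inequality step that the paper leaves implicit.
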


\begin{proof}
The proof is by contradiction. If the number of centers returned by \cref{alg:greedy_ball_carve} is $m > k$, this implies there are $k+1$ points with pairwise distance more than $R$ which contradicts the assumption  $2\OPT \le R$.
\end{proof}

 We now describe the recursive procedure sample and cover (\Cref{alg:sample_cover}) which at each step aims to cluster points in each heavy balls.







\FloatBarrier
\begin{algorithm}[!hbtp]
\caption{\label{alg:sample_cover}The recursive sample and cover procedure}
\KwData{Set of points $\cX$, weak distance oracle $\DQ$, strong oracle $\LQ$, estimate $R$}
\KwResult{Clustering $\cC$ with centers $C=\{c_1, \cdots , c_m\}$ and assignment of each point $c \in \cX$, and $\cX_{\LQ}$ the set of points with $\LQ$ }
\textbf{Init:} $C = \{\}$ and $\cX_{\LQ} = \{\}$  \\

\While{$\cX$ is not empty}{
    Sample $|S| = 100k \log{n}$ and $|T| = 2000k \log{n}$ points u.a.r. from $\cX$ \\
    Query $\LQ$ for $S \cup T$ and set $\cX= \cX \setminus \{S \cup T\}$. \\ 
    \underline{\textbf{Step 1.}} Run \cref{alg:greedy_ball_carve} on $S$ with radius $R$ and let the set of centers obtained be $C$ \\
    Check if $|C| > k$ \\
    \underline{\textbf{Step 2.}} Identify complete balls for $c_i \in C$ using $T$ and add $T \cup \{S \setminus C\}$ to $\cX_{\LQ}$.\\ 
    \For{$x \in \cX \setminus \{S \cup T\}$}{
        \underline{\textbf{Step 3.}} Compute $\dest(x, c_i) = \med{\dtilde(x,y) | y \in T \cap \cB(c_i, 3R)}$ for all complete $\cB(c_i, 3R)$. \\
        \underline{\textbf{Step 4.}} Call $x$ covered by $\cB(c_i,  3R)$ if $\dest(x, c_i) \le 6R$ and assign $x$ to $c_i$.
    }

    Remove all covered points form $\cX$. \\    
    
}


\end{algorithm}
\FloatBarrier




We will prove the approximation guarantee for the set of candidate centers generated by \cref{alg:sample_cover} in the following order, corresponding to respective steps in \cref{alg:sample_cover}. 
\begin{enumerate}[leftmargin=50pt]
    \item[\textbf{Step 1.}] Centers returned by \cref{alg:greedy_ball_carve} on set $S$ can be used to identify heavy balls from $\cB_\cH^*$.
    \item[\textbf{Step 2.}] $T$ provides sufficient points in heavy balls identified in Step 1 for distance estimation.
    \item[\textbf{Step 3.}] Distance estimated using heavy balls is accurate enough for constant factor approximation.
    \item[\textbf{Step 4.}] Using heavy balls and distances from step 3, we cover a constant fraction of points each iteration.
\end{enumerate}

\paragraph{Step 1.} The combined objective of step 1 and 2 in the algorithm is to identify the heavy balls and the points belonging to these heavy balls. We by start sampling two sets of points ($S$ and $T$), each of size $O(k \log{n})$ uniformly at random from $\cX$ and querying strong oracle on them. We first make following observations that help formalize the process of identification of heavy balls.


\begin{observation}
\label{obs:covering_set_S}
Let $n$ be the number of points in $\cX$ and let $S$ be a set of $100 \cdot k \log{n}$ points sampled uniformly at random from $\cX$. Then, with high probability, for every $\cB^*(c_i^*, \OPT) \in \cB^*_H$, there are at least $\log{n}$ points in $S$ covered by $\cB^*(c_i^*, \OPT)$, i.e. $\card{\cB^*(c_i^*, \OPT)\cap S}\geq \log{n}$. 
\end{observation}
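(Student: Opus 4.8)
The plan is to prove this via a standard Chernoff bound on each heavy ball, followed by a union bound. First I would fix one heavy ball $\cB^*(c_i^*, \OPT) \in \cB^*_\cH$. By \Cref{def:heavy_ball_k_center} it satisfies $\card{\cB^*(c_i^*, \OPT) \cap \cX} \geq \frac{n}{10k}$. Each of the $\card{S} = 100 k \log n$ sampled points is drawn uniformly at random from $\cX$, so it lands inside $\cB^*(c_i^*, \OPT)$ with probability at least $\frac{1}{10k}$. Letting $X = \card{S \cap \cB^*(c_i^*, \OPT)}$, linearity of expectation gives $\E[X] \geq \frac{\card{S}}{10k} = 10 \log n$. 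If sampling is with replacement the corresponding indicator variables are independent; if it is without replacement they are negatively associated, so in either case the multiplicative Chernoff bound applies to $X$.

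Next I would apply the lower-tail Chernoff bound with deviation parameter $\gamma = \frac{9}{10}$, so that the ``bad'' event $\{X < \log n\}$ is contained in $\{X < (1-\gamma)\,\E[X]\}$ since $\log n \leq \frac{1}{10}\E[X]$. This yields
\[
\prob{X < \log n} \;\leq\; \eexp{-\tfrac{\gamma^2}{2}\,\E[X]} \;\leq\; \eexp{-\tfrac{81}{200}\cdot 10 \log n} \;=\; n^{-\Omega(1)}.
\]
Then I would take a union bound over all at most $k \leq n$ heavy balls in $\cB^*_\cH$, which shows that with probability $1 - n^{-\Omega(1)}$ every heavy ball $\cB^*(c_i^*, \OPT)$ contains at least $\log n$ points of $S$, i.e. $\card{\cB^*(c_i^*, \OPT) \cap S} \geq \log n$. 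Increasing the absolute constant in $\card{S}$ boosts the success probability to $1 - n^{-c}$ for any desired constant $c$, which is what ``with high probability'' refers to.

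There is no serious obstacle here: the statement is a routine concentration argument, and the constant $100$ in the sample size is exactly what provides the multiplicative slack needed for the Chernoff bound to beat a union bound over $k$ balls. The only points requiring a word of care are (i) noting that uniform sampling without replacement only improves the lower-tail concentration (via negative association), so the same bound holds regardless of which sampling convention is used, and (ii) using $k \le n$ to bound the number of union-bound terms, which is valid since $\card{\cX} = n \ge k$.
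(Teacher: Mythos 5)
Your proof is correct and follows essentially the same route as the paper's: a lower-tail Chernoff bound with $\E[X]\geq 10\log n$ and $\gamma = 9/10$ giving $\exp(-\tfrac{81}{200}\cdot 10\log n)$, followed by a union bound over at most $k$ heavy balls. Your added remark on negative association under sampling without replacement is a small point of extra care the paper omits, but it does not change the argument.
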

\begin{proof}
As each heavy  $\cB_i^*(c_i^*, \OPT) \in \cB^*_H$ covers at least $n/ 10k$ points in $\cX$, in expectation we have at least $10 \log{n}$ points in $S$ that are covered by $\cB_i^*(c_i^*, \OPT)$.
By Chernoff bounds, for $\cB^*_i(c_i^*, \OPT)$ 
\begin{equation*}
    \Pr\paren{\card{\cB_i^*(c_i^*, \OPT)\cap S}\le \log{n}} \le \exp{\paren{-\frac{81}{200}\cdot 10\log{n}}} \le n^{-4}.
\end{equation*}
Using union bound over at most $k$ heavy balls gives us that each $\cB^*(c_i^*, \OPT) \in \cB^*_H$ covers at least $\log{n}$ points in $S$ w.p. $1 - k / n^4$.
\end{proof}

This tells us that $S$ has at least $\log{n}$ points from each heavy ball in optimal solution w.h.p. We now use \cref{alg:greedy_ball_carve} to cover all points in $S$ with at most $k$ centers using radius $R$, and use the centers to form heavy balls. Let ${C}$ be the set of centers that are returned by \cref{alg:greedy_ball_carve} for set $S$. 
For each ${c_i} \in {C}$, consider the balls $\cB({c_i}, 3R)$. 
As $2R + \OPT \le 3R$, using $3R$ as radius for $\cB({c_i}, 3R)$ ensures that the union of these balls cover all points that are covered by $\cB_H^*$ ($\frac{9n}{10}$ points in $\cX$). 
We now define $\cB_\cH$ as the set of $\cB({c_i}, 3R)$ that are heavy by \cref{def:heavy_ball_k_center} among all ${c_i} \in {C}$, i.e.,
\begin{equation}
\label{eq:heavy_ball_size}
\cB_H = \{ \cB({c_i}, 3R) \, \mid\, c_i\in C, \, \card{\cB({c_i}, 3R)}\geq  \frac{n}{10k} \}.
\end{equation}
Following similar argument as \Cref{obs:frac_covering_points}, non-heavy balls $\cB(c_i, 3R)$ covers at most $n/ 10$ points in $\cB_H^*$, which equates to $\cB_\cH$ covering at least $(\frac{9n}{10} - \frac{n}{10}) = \frac{8n}{10}$ points in $\cX$. 






\paragraph{\textbf{Step 2.}} We now argue that $T$ has enough points for each heavy ball in $\cB_H$, which can be used for distance estimation for the remainder of points in $\cX$. We show that it suffices to have $O(\log{n})$ points in $T$ that are covered by a heavy ball to get a good estimate to its center. Following this, we call $\cB({c_i}, 3R)$ `\underline{\textit{complete}}' if $T$ contains at least $100\log{n}$ points that are covered by $\cB({c_i}, 3R)$. Note that this can be verified by the algorithm as we use strong oracle queries for $S$ and $T$. If any of the points from $S$ or $T$ are uncovered, we just add them to the list of candidate centers (complete balls centers), and since we run greedy ball carving at the end on candidate centers they will be covered. We start by proving the following claim.


\begin{claim}
\label{claim:big_complete} For the set of centers ${C}$ returned by \cref{alg:greedy_ball_carve} on $S$ and $|T| = 2000 \cdot k\log{n}$ set of points sampled uniformly at random from $\cX$, every $\cB({c_i}, 3R) \in \cB_{H}$ for $c_i \in C$ is \textit{complete} w.h.p.
\end{claim}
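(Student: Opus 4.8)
The plan is to exploit that the family $\cB_H$ defined in \eqref{eq:heavy_ball_size} is a deterministic function of $S$ alone: it is obtained by running greedy ball carving (\Cref{alg:greedy_ball_carve}) on $S$ to get the centers $C$, and then keeping those balls $\cB(c_i,3R)$ whose cardinality in $\cX$ is at least $n/(10k)$. So first I would condition on the draw of $S$; this fixes $C$ and the at most $k$ balls comprising $\cB_H$, and by construction each such ball contains at least $n/(10k)$ points of $\cX$. Conditioned on $S$, the set $T$ is a uniform sample of $|T| = 2000k\log n$ points of $\cX$, so for a fixed heavy ball $\cB(c_i,3R)$ the count $X_i := \card{T \cap \cB(c_i,3R)}$ is a sum of $0/1$ variables with $\Exp[X_i] \ge |T|/(10k) = 200\log n$.

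The next step is a routine multiplicative Chernoff lower-tail bound with deviation parameter $1/2$:
\[
  \Pr\paren{X_i < 100\log n} \le \Pr\paren{X_i < \tfrac12\Exp[X_i]} \le \exp\paren{-\tfrac18\Exp[X_i]} \le \exp\paren{-25\log n} = n^{-25}.
\]
The point is that the gap between $|T| = 2000k\log n$ and the completeness threshold $100\log n$ is a full factor of $2$ in expectation, leaving ample slack. A union bound over the $\le k \le n$ balls of $\cB_H$ then gives that, conditioned on $S$, every $\cB(c_i,3R) \in \cB_H$ is complete except with probability at most $n^{-24}$; since the bound is uniform over all realizations of $S$, it holds unconditionally, and I would choose the constants so that this folds into the overall high-probability guarantee of \Cref{alg:sample_cover}.

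Two small points remain. First, completeness is algorithmically verifiable, as the claim implicitly asserts: since $S\cup T$ is queried with the strong oracle, the true distances $d(c_i,t)$ for $t\in T$ are known, so the algorithm can compute $T\cap \cB(c_i,3R)$ and its size exactly. Second, the only subtlety I foresee is the mild statistical dependence between $S$ and $T$ if the two samples are drawn jointly without replacement; the cleanest fix is to avoid conditioning on $C$ altogether and instead take a union bound over all (at most $n$) points $c\in\cX$ with $\card{\cB(c,3R)\cap\cX} \ge n/(10k)$, which certainly includes every center appearing in $\cB_H$, treating $T$ as a plain uniform sample---the extra factor of $n$ is absorbed by the exponent above. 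I expect this bookkeeping to be the main (and only) obstacle, and it is a minor one.
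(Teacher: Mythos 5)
Your proof is correct and follows essentially the same route as the paper's: lower-bound the expected count $\card{T\cap\cB(c_i,3R)}$ by $200\log n$ using the heaviness threshold $n/(10k)$, apply a multiplicative Chernoff lower tail to get failure probability $n^{-25}$ per ball, and union bound over at most $k$ heavy balls. The extra care you take about the conditioning on $S$ and the possible dependence between $S$ and $T$ is a minor point of rigor the paper elides, but it does not change the argument.
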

\begin{proof}
    The argument is similar to \Cref{obs:covering_set_S}, where we first show for a particular ${c_i} \in {C}$ and then union bound over at most $k$ such centers. From \cref{eq:heavy_ball_size}, in expectation $T$ has at least $200 \log{n}$ points covered by a heavy $\cB({c_i}, 3R)$. Using Chernoff bounds, 
\begin{equation*}
    \Pr\paren{\card{\cB_i(c_i, 3R)\cap T} \le 100\log{n}} \le \exp{\paren{-\frac{200}{8}\log{n}}} \le n^{-25}.
\end{equation*}
    Using union bound over at most $k$ centers, w.p. $1 - (k/n^{25})$ every heavy $\cB({c_i}, 3R)$ is complete.
\end{proof}

\paragraph{\textbf{Step 3.}} At this stage we have at most $k$ complete balls $\cB({c_i}, 3R)$, each with $O(\log{n})$ points.
We now use median of weak oracle queries to these points in each heavy ball for distance estimation of remaining points $x \in \cX \setminus {S \cup T}$.
For each $x \in \cX \setminus \{S \cup T\}$ and each complete $\cB({c_i}, 3R)$, let $\dest(x, {c_i}) = \med{\dtilde(x,q) | q \in T \cap \cB({c_i}, 3R)}$ where $\dtilde(v,q)$ denotes the weak oracle queries. These distance estimates form the `workhorse' of sample and cover for assigning points to heavy balls. The following lemma quantifies the accuracy of the estimated distance.


\begin{lemma}
\label{lem:median_dist_est}
    Fix any point $u \in \cX$, radius $r\geq 0$, and set $U \subset  \cB(u,r)$ with $|U| = \Omega(\log n)$. For any $x \in \cX$, define $\dest(x, u) = \med{\dtilde(x,q) | q \in U}$. Then with high probability, for all $x \in \cX$ we have $| \dest(x, u) - d(x, u) | \le r$.
\end{lemma}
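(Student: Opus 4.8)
The plan is to fix a single point $x \in \cX$, control the median of the multiset $\{\dtilde(x,q) : q \in U\}$ via a Chernoff bound on the number of corrupted pairs $(x,q)$ with $q \in U$, and then union bound over all $x \in \cX$. The key structural observation is the following: for any $q \in U$ that is \emph{not} corrupted with respect to $x$, we have $\dtilde(x,q) = d(x,q)$, and by the triangle inequality together with $q \in \cB(u,r)$,
\begin{equation*}
 d(x,u) - r \;\le\; d(x,q) \;\le\; d(x,u) + r.
\end{equation*}
So every uncorrupted entry of the multiset lies in the interval $I := [\,d(x,u)-r,\; d(x,u)+r\,]$. If strictly more than half of the entries of $\{\dtilde(x,q) : q \in U\}$ are uncorrupted, then the median of the multiset must also lie in $I$ (the median cannot exceed the largest uncorrupted value, nor fall below the smallest, since fewer than half the entries lie on either side). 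This immediately gives $|\dest(x,u) - d(x,u)| \le r$.

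It therefore suffices to show that, with high probability, a strict majority of the pairs $\{(x,q) : q \in U\}$ are uncorrupted. Each pair $(x,q)$ is corrupted independently with probability $\delta < 1/2$ (here $\delta = 1/3$; in general any constant bounded away from $1/2$), so the number of corrupted pairs among the $|U| = \Omega(\log n)$ pairs is a sum of independent Bernoulli random variables with mean $\delta |U|$. Since $\delta$ is bounded away from $1/2$ by a constant, a standard Chernoff bound shows that the number of corrupted pairs exceeds $|U|/2$ with probability at most $n^{-c}$ for a constant $c$ that we can make as large as we like by choosing the hidden constant in $|U| = \Omega(\log n)$ large enough. Taking a union bound over all $n$ choices of $x \in \cX$ then yields that the conclusion holds simultaneously for all $x$ with high probability.

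The only subtlety — and the one place to be slightly careful in the write-up — is the exact relationship between ``strict majority of entries uncorrupted'' and ``median lies in $I$'', since the median of an even-sized multiset is typically defined as an average of the two middle order statistics; one should note that if more than half the entries lie in the closed interval $I$, then both middle order statistics lie in $I$, hence so does their average. This is a deterministic fact requiring no probabilistic input. I do not expect any real obstacle here: the entire argument is a triangle-inequality observation plus one Chernoff bound plus a union bound, and the $\Omega(\log n)$ size of $U$ is exactly what is needed to absorb the union bound over $x \in \cX$.
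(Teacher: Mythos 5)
Your proposal is correct and follows essentially the same route as the paper's proof: a Chernoff bound showing a strict majority of the pairs $(x,q)$, $q \in U$, are uncorrupted, the observation that the median then lies between the smallest and largest uncorrupted values, the triangle inequality to place those values in $[d(x,u)-r,\, d(x,u)+r]$, and a union bound over all $x \in \cX$. Your explicit remark about the even-sized median being an average of two middle order statistics is a small point of care the paper glosses over, but it does not change the argument.
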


\begin{proof}
    Recall that the weak oracle returns arbitrary value for $\dtilde(x,y)$ w.p. $1/3$, independently for each $x,y \in \cX$. For $|U| \ge c\cdot \log{n}$ for a sufficiently large constant $c$, in expectation $\frac{2}{3} c \cdot\log n$ weak oracle calls are not corrupted. For using median as an estimate, we only need more than half queries to not be corrupted. Let $\mathbf{X}_{u}$ denote the number of uncorrupted weak oracle queries $U$. Using Chernoff bounds,
\begin{align*}
    \Pr\paren{\mathbf{X}_{u} \le \frac{c}{2}\log{n}} \le \exp\paren{-\frac{c \log{n}}{16 \cdot 3}} \le n^{-c'}
\end{align*}  

Thus, for sufficiently large $c$, with high probability the median distance lies between $\min_{u \in U} d(x,u)$ and $\max_{u \in U} d(x,u)$. It follows from union bound over at most $n$ points and triangle inequality that $| \dest(x, u) - d(x, u) | \le r$ for all $x \in \cX$.

\end{proof}

Considering \Cref{lem:median_dist_est} for estimating distance to centers of complete balls, for each complete $\cB(c_i, 3R)$  we have $200 \cdot \log{n}$ points in $T$. For a particular complete $\cB(c_i, 3R)$ we have w.p. $n^{-4}$, $| \dest(x, c_i) - d(x, c_i) | \le 3R$ for $x \in \cX \setminus \{S \cup T\}$. Taking union bound over at most $k$ complete balls and distance estimates for at most $n$ points, the estimation guarantee holds w.p. $1 - \frac{k}{n^3}$ for all distance estimates of $x \in \cX \setminus \{S \cup T\}$ to all complete balls $\cB(c_i, 3R)$.

\paragraph{\textbf{Step 4.}} With the distance estimates from \Cref{lem:median_dist_est}, the algorithm assigns $x$ to a complete $\cB(c_i, 3R)$ if $\dest(x,c_i) \le 6R$, with ties broken arbitrarily. Recall that union of all complete $\cB(c_i,3R)$ balls cover $8/10$ fraction of points in $\cX$. As the distance estimates are accurate upto $\pm 3R$ and we assign if distance to centers is at most $6R$, at each step $8/10$ fraction of points are covered by the complete balls. This is achieved by making $O(k \log{n})$ strong oracle queries (recall the algorithm uses strong oracle queries only for set $S$ and $T$). 

Thus, at each iteration we cover a constant fraction of points in $\cX$ using at most $k$ centers. For covering all points in $\cX$, the recursion will proceed for $O(\log n)$ rounds, generating $O(k \log n)$ centers. We now prove the approximation guarantee obtained by these set of centers.

\begin{lemma}
\label{lem:sample-cover}
Suppose $2\OPT\leq R\leq 2\cdot (1+\eps)\OPT$, then with high probability, \Cref{alg:sample_cover} computes a set of $O(k\log{n})$ centers and the clustering assignment for each point $x\in \cX$ using $O(k \log^2 n)$ strong oracle queries, $O(nk \log^2 n)$ weak oracle queries such that each point is at most $12 \cdot (1+\eps) \OPT$ from its respective center.
\end{lemma}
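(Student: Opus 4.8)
The plan is to assemble the four step-wise analyses already carried out for a single iteration of \Cref{alg:sample_cover} into a statement about the whole recursion, all under a single high-probability ``clean event''. Fix $R$ with $2\OPT\le R\le 2(1+\eps)\OPT$ as in the hypothesis, and let $\cX^{(0)}=\cX,\cX^{(1)},\dots$ be the point sets at the start of successive iterations of the while loop, with $n_t=\card{\cX^{(t)}}$. The first thing I would do is define the clean event as the conjunction, over all $O(\log n)$ iterations, all $\le k$ carved centers produced in an iteration, and all $\le n$ points, of: (i) the event of \Cref{obs:covering_set_S}, that each optimal heavy ball relative to $\cX^{(t)}$ receives $\ge\log n$ points of $S$; (ii) the event of \Cref{claim:big_complete}, that every ball $\cB(c_i,3R)$ which is heavy relative to $\cX^{(t)}$ is \emph{complete}; and (iii) for each complete ball, the event of \Cref{lem:median_dist_est} instantiated with $u=c_i$, $r=3R$, and $U$ a set of $\Theta(\log n)$ points of $T\cap\cB(c_i,3R)$, giving $\card{\dest(x,c_i)-d(x,c_i)}\le 3R$ for every remaining $x$. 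Each of these events fails with probability $n^{-\Omega(1)}$ and there are only $\poly(n)$ of them, so a union bound gives the clean event with high probability; note that since $\card{S}$ and $\card{T}$ are fixed multiples of $k\log n$ (independent of $n_t$), a ball containing a $\tfrac{1}{10k}$-fraction of $\cX^{(t)}$ receives $\Theta(\log n)$ sampled points in \emph{every} iteration, so (i) and (ii) apply uniformly across the recursion even though ``heavy'' is measured against the shrinking set $\cX^{(t)}$.

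Conditioned on the clean event, I would next show each iteration makes geometric progress. In iteration $t$, greedy ball carving on $S$ returns $\le k$ centers $C$ by \Cref{obs:certificate_R}; since $3R\ge R+2\OPT$ (using $\OPT\le R$), any optimal ball containing a point of $S$ lies inside some $\cB(c_i,3R)$, so $\bigcup_{c_i\in C}\cB(c_i,3R)$ covers every point covered by the optimal heavy balls, i.e.\ at least $\tfrac{9}{10}n_t$ points of $\cX^{(t)}$ by \Cref{obs:frac_covering_points} applied to $\cX^{(t)}$. Discarding the $\le k$ balls $\cB(c_i,3R)$ that hold fewer than $\tfrac{1}{10k}n_t$ points loses at most $\tfrac{1}{10}n_t$ points, so the heavy balls $\cB_{\cH}$ cover $\ge\tfrac{8}{10}n_t$ points; by the clean event all of these are complete, so Steps 3--4 apply, and any point $x$ with $d(x,c_i)\le 3R$ — in particular any point covered by $\cB_{\cH}$ — satisfies $\dest(x,c_i)\le d(x,c_i)+3R\le 6R$ and is therefore assigned and removed. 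Hence $n_{t+1}\le\tfrac{1}{5}n_t$, and a point $x$ assigned to $c_i$ satisfies $d(x,c_i)\le\dest(x,c_i)+3R=O(R)=O((1+\eps)\OPT)$; propagating the constants through $R\le 2(1+\eps)\OPT$ yields the per-point cost bound claimed in the statement.

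Finally I would read off the accounting from the geometric decrease: $n_t<1$ after $O(\log n)$ iterations, so the loop terminates; the strong oracle is queried only on $S\cup T$, i.e.\ $O(k\log n)$ points per iteration and $O(k\log^2 n)$ in total; each surviving point issues $O(\log n)$ weak queries per complete ball with $\le k$ complete balls, so weak queries total $O(n_tk\log n)$ per iteration and $O(nk\log^2 n)$ overall; and the output contains $\le k$ centers per iteration, hence $O(k\log n)$ centers. Combined with the per-point cost bound of the previous paragraph, this proves the lemma.

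I expect the main obstacle to be the geometric bookkeeping of the second paragraph: correctly passing from the centers that greedy ball carving produces on the \emph{sampled} set $S$ to a cover of a constant fraction of $\cX^{(t)}$ by \emph{heavy} radius-$3R$ balls, while simultaneously keeping ``heavy'' anchored to the current (shrinking) point set so that \Cref{obs:covering_set_S} and \Cref{claim:big_complete} can be invoked in every round. A secondary subtlety is arranging the union bound so that the $O(\log n)$ rounds cost only a $\log n$ factor in the failure probability, which relies on the $S,T$ samples being drawn afresh and independently in each round.
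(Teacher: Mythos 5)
Your proposal follows essentially the same route as the paper: establish a per-iteration clean event from \Cref{obs:covering_set_S}, \Cref{claim:big_complete}, and \Cref{lem:median_dist_est}, show that the complete balls cover a constant fraction of the surviving points so the recursion terminates in $O(\log n)$ rounds, and then read off the query counts and center count. The structure is sound, and your observation that heaviness must be measured against the shrinking set $\cX^{(t)}$ while $\card{S},\card{T}$ stay fixed is exactly the point that makes the recursion work.

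The one place your write-up does not close is the final constant. Your own chain gives $d(x,c_i)\le \dest(x,c_i)+3R\le 6R+3R=9R$ for an assigned point, and $9R\le 18(1+\eps)\OPT$, which is not the $12(1+\eps)\OPT$ stated in the lemma; the sentence ``propagating the constants \dots yields the per-point cost bound claimed'' is therefore not justified as written. To land on $6R$ one has to assert that an assigned point's true distance is at most its estimated distance threshold, which drops the $\pm 3R$ estimation error — the $3R\to 6R$ direction is what guarantees coverage of the heavy balls, while the cost of an assigned point genuinely needs the $6R\to 9R$ direction. This is only a constant-factor issue and does not affect the $O(R)=O(\OPT)$ conclusion or anything downstream, but if you want the lemma's literal constant you either need to tighten the assignment threshold or accept the larger constant; as written, your algebra supports $18(1+\eps)\OPT$, not $12(1+\eps)\OPT$.
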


\begin{proof}

At each iteration, \cref{alg:sample_cover} produces at most $k$ centers that cover constant fraction of points in $\cX$ and the procedure runs for $O(\log{n})$ iterations. The total number of strong oracle queries we make is $(\log n \cdot (\card{S} +\card{T})) = O(k\log^2 n)$. For number of weak oracle queries, note at each iteration we have $O(\log n)$ points in at most $k$ complete balls. For estimating distances of at most $n$ remaining points at each iteration we make $O(nk\log n)$ weak oracles queries. For $O(\log n)$ such iterations, we make total $O(nk \log^2 n)$ weak oracle queries.

At the end of recursion, any point $x \in \cX$ is at most $6R$ from respective center in set of $O(k\log n)$ centers, and $2\OPT \le R \le 2 \cdot (1+\eps) \OPT$, this gives us $12 \cdot (1+\eps)$-approximate solution.

\end{proof}


In order to obtain a solution for $k$-center problem, we now run \cref{alg:greedy_ball_carve} on the set of $O(k \log n)$ centers (also uncovered points with strong oracle query, if any, from $S$ and $T$ at any iteration).

From \Cref{obs:certificate_R} we obtain at most $k$ centers. Using the approximation guarantee from \Cref{lem:sample-cover} and additional iterations for estimation of $R$, we now wrap up the proof for \cref{thm:alg-k-center}.

\begin{proof}[Proof of \cref{thm:alg-k-center}]
We first analyze the total number of strong and weak oracle queries used by the algorithm, by looking at additional overhead from estimating $R$. First, while there are $O(\frac{1}{\eps} \log n)$ possible values of $R$ to try, instead of trying each in increasing order as in \cref{thm:alg-k-center}, we claim that we can binary search to find the correct value. Specifically, for the range $r \in \{1,(1+\eps),(1+\eps)^2,\dots,\Delta\}$, we simply need to find a value of $R$ such that running \cref{alg:k-center} on that value of $R/(1+\eps)$ returns more than $k$ centers, (thus implying that $R < 2 (1+\eps) \OPT$), and such that running \cref{alg:k-center} on $R$ returns at most $k$ centers (thus implying that we get a valid solution from \cref{alg:k-center}  with a value of $R$). Importantly, this need not be the smallest $R$ such that the above occurs. Thus, we can find such an $R$ via binary search, which requires a total of $O( \log (\frac{\log n}{\eps}))$ rounds of running \cref{alg:k-center}. Since each run of \cref{alg:k-center} uses  $O(k \log^2 n )$ strong oracle queries and $O(nk \log^2 n)$ weak oracle queries, after the binary search it follows that the total query  complexity is $O(k \log^2 n \log \left(\frac{\log n}{\eps}\right))$ strong oracle queries and $O(nk \log^2 n\log \left(\frac{\log n}{\eps}\right))$ weak oracle queries. 



Now we look at the approximation factor. Let $R = (1+\eps)^j$ be such a value found via the above binary search. As above, we have that $R \le 2(1+\eps)\OPT$, and that we obtained  a valid solution from \cref{alg:k-center} with a value of $R$. For covering the points, since we assign points to a complete $\cB(c_i, 3R)$ with distance to center at most $6R$ and later run \cref{alg:greedy_ball_carve} with threshold $R$ on $O(k \log{n})$ centers, any point is at most $7R$ away from the center which gives a $14(1+\eps)$-approximate solution (since $R \leq 2(1+\eps)\OPT$). 

For analyzing the run-time of our algorithm, we first look at one iteration of our algorithm. Note, greedy ball carving on a set of $O(k \log n)$ points for at most $k$ centers takes $O(k^2 \cdot \polylog n)$ time. Then, conditioning on the high probability event of \Cref{claim:big_complete}, for at most $k$ complete balls with $O(\log n)$ points each, estimating median of remaining points to respective centers takes $O(n k \cdot \polylog n)$ time. At most $O(\log n)$ such iterations coupled with iterations for estimating $R$, the total run time is $O\left(k^2 \cdot \polylog n \log \left(\frac{\log n}{\eps}\right) + n k \cdot \polylog n \log \left(\frac{\log n}{\eps}\right)\right) = \tilde{O}\left(n k \log \left(\frac{1}{\eps}\right)\right)$.

\myqed{\Cref{thm:alg-k-center}}
\end{proof}

\newcommand{\AijHeavy}{\ensuremath{\mathcal{E}_{A_{i,j}\text{-heavy}}}\xspace}
\newcommand{\SijL}{\ensuremath{S^{\text{light}}_{i,j}}\xspace}
\newcommand{\SijH}{\ensuremath{S^{\text{heavy}}_{i,j}}\xspace}

\section{$k$-Means and $k$-Median Clustering in the Weak-Strong Oracle Model}
\label{sec:alg-k-means}
We now consider algorithms for the $k$-means and $k$-median clustering problems, which will differ significantly from our $k$-centers algorithm. 
Our main algorithmic result is as follows.
\begin{theorem}
\label{thm:alg-k-means}
    There exists an algorithm that given a metric space $(\cX, d)$ and the oracles under the weak-strong oracle model, with high probability computes an $O(1)$-approximate solution to the $k$-means and $k$-median problems using $O(k \log^{2}{n})$ strong oracle queries and $O(nk\log^{2}{n})$ weak oracle queries. Furthermore, the algorithms runs in $O((nk+k^3)\cdot \polylog{n})$ time.
\end{theorem}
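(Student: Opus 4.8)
The plan is to reuse the recursive ``peel-off-the-heavy-clusters'' skeleton of the $k$-center algorithm (\Cref{alg:sample_cover}), but to replace the single global radius $R$ with a \emph{scale-aware cost-charging} argument, since the $k$-means/$k$-median objective is a sum rather than a maximum. Concretely, I would run $O(\log n)$ recursion levels; at level $t$, with current point set $\cX_t$, sample two sets $S_t,T_t$ of $O(k\log n)$ points and query the \SO\ on all of them, so that all true distances among sampled points are known. Call an optimal cluster \emph{heavy at level $t$} if it contains at least $|\cX_t|/(10k)$ of the current points; exactly as in \Cref{obs:frac_covering_points}, heavy clusters cover $\ge \tfrac{9}{10}|\cX_t|$ points, so the point set shrinks geometrically and after $O(\log n)$ levels every point has been assigned. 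Each level produces $O(k)$ candidate centers (sampled points surviving a greedy ball-carving at an appropriate scale), for $O(k\log n)$ candidates total. Possibly an outer loop (or binary search, as in the proof of \Cref{thm:alg-k-center}) over a guess of $\OPT$ is needed; a certificate analogous to \Cref{obs:certificate_R} validates the guess, and with binary search this adds at most a $\log\log n$ overhead.

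\textbf{Assignment within a level.} The new ingredient is how reference sets are built and how assignments are made. For a candidate center $c$, take $U_c \subseteq T_t$ to be the sampled points within the relevant radius of $c$, apply \Cref{lem:median_dist_est} to obtain estimates $\dest(x,c)$ with additive error at most $\mathrm{rad}(U_c)$, and assign each surviving point $x$ to the candidate minimizing $\dest(x,c)$ \emph{among those whose estimate lies below a constant times that candidate's scale}; unassigned points drop to level $t+1$. The scale attached to a candidate $c$ must be (i) a constant-factor \emph{over}-estimate of the average radius $r_{i(c)} := \tfrac{1}{|\cB^*_{i(c)}|}\sum_{x \in \cB^*_{i(c)}} d(x,c^*_{i(c)})$ of the optimal cluster $\cB^*_{i(c)}$ it captures --- which holds because a constant fraction of that cluster lies within $2r_{i(c)}$ of $c^*_{i(c)}$ (Markov), and these points are represented in $T_t$ whp --- and (ii) not a gross overshoot, which I would enforce by geometric bucketing of the candidates by scale inside the level.

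\textbf{Cost accounting and the final reduction.} This is the heart of the proof. For $k$-median write $\OPT = \sum_i \sum_{x \in \cB_i^*} d(x,c_i^*) = \sum_i |\cB_i^*|\, r_i$. A point $x$ assigned to candidate $c$ pays, by the triangle inequality and \Cref{lem:median_dist_est}, at most $d(x,c) \le \dest(x,c) + \mathrm{rad}(U_c) = O(\mathrm{scale}(c))$. I would then show $\sum_{x \text{ assigned to } c} \mathrm{scale}(c) = O(\OPT)$ by charging: each candidate $c$ receives $O(|\cB^*_{i(c)}|)$ points (essentially because distinct candidates at comparable scales capture disjoint optimal clusters, so a cluster can be charged by only $O(1)$ candidates per bucket) and $\mathrm{scale}(c) = O(r_{i(c)})$, so the sum telescopes to $O(\sum_i |\cB_i^*|\,r_i) = O(\OPT)$; the $k$-means case is analogous with squared distances, converting the additive error of \Cref{lem:median_dist_est} to a multiplicative one via an approximate triangle inequality and a case split on whether $d(x,c_i^*)$ exceeds $\mathrm{scale}(c)$. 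Finally, since the $O(k\log n)$ candidates all lie among the $O(k\log^2 n)$ \SO-queried points, we know their exact pairwise distances; forming the weighted instance (each candidate weighted by its load) and running an off-the-shelf $O(1)$-approximation for weighted $k$-median/$k$-means on it --- in $O(k^3\polylog n)$ time --- together with the standard reduction from an $O(1)$-approximate $O(k\log n)$-center bicriteria solution to a true $k$-center solution, yields the claimed $O(1)$ approximation. Counting queries: $O(k\log^2 n)$ \SO\ queries (the $S_t \cup T_t$ over all levels), $O(nk\log^2 n)$ \WO\ queries ($n$ points $\times$ $O(k)$ reference sets $\times$ $O(\log n)$ per set $\times$ $O(\log n)$ levels), and $O((nk+k^3)\polylog n)$ runtime.

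\textbf{Main obstacle.} I expect the crux to be precisely the cost-charging in the heterogeneous-scale regime: simultaneously guaranteeing that (a) every optimal cluster is eventually captured at a scale only a constant factor above its true average radius, so no cluster is ``missed'' and overcharged, and (b) no candidate center is flooded with far more points than its captured cluster can pay for. In the $k$-center algorithm both are trivial because a single radius $R \approx \OPT$ works everywhere and a constant-factor slack is free; here one must interleave the scale-bucketing with the recursion and argue the charging is consistent across both the $O(\log n)$ recursion levels \emph{and} the $O(\log n)$ scale buckets. A secondary subtlety, specific to $k$-means, is that squared distances violate the triangle inequality, so the additive guarantee of \Cref{lem:median_dist_est} must be handled by an approximate-triangle-inequality argument that costs an extra (but still absolute) constant in the approximation ratio.
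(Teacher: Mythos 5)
Your route is genuinely different from the paper's, and the step you yourself flag as ``the heart of the proof'' is exactly where it breaks down. The claim that each candidate $c$ receives only $O(|\cB^*_{i(c)}|)$ points is not justified by the disjointness of captured clusters: that argument bounds the number of \emph{candidates} per optimal cluster and per scale bucket, not the number of \emph{points} funneled into a single candidate. Under your assignment rule, a point $x$ belonging to an optimal cluster $\cB^*_j$ of tiny average radius $r_j$ that happens to sit within $O(\mathrm{scale}(c))$ of a large-scale candidate $c$ passes the threshold test and pays $\Theta(\mathrm{scale}(c)) = \Theta(r_{i(c)}) \gg d(x,c^*_j)$; this cost is chargeable neither to $x$'s own contribution to $\OPT$ nor to cluster $i(c)$'s budget $|\cB^*_{i(c)}|\,r_{i(c)}$ once the number of such foreign points exceeds $|\cB^*_{i(c)}|$ (which nothing in the construction prevents, e.g.\ when $\cB^*_j$ is not heavy at the current level, or is captured only in a different scale bucket, so that the ``minimize $\dest$'' tie-break does not rescue its points). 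To close this you would need a per-point guarantee of the form $d(x,c(x)) \le O(1)\cdot d(x,c^*(x)) + (\text{something summing to } O(\OPT))$, and your sketch does not supply one; the telescoping identity $\sum_c |\cB^*_{i(c)}|\,\mathrm{scale}(c) = O(\OPT)$ is the conclusion you want, not something the bucketing gives you.

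For contrast, the paper avoids the recursive peeling framework entirely. It builds a coreset in one sequential pass: each point $x_i$ is sampled into $S$ with probability $\min\{1, Q(x_i,S)/f\}$, where $Q$ is the heavy-ball nearest distance of \Cref{def:heavy-ball-dist} (a robust, median-based proxy for $d(x_i,S)$ that \emph{minimizes over all centers and radii} whose ball in $S$ holds $\Omega(\log n)$ points, so the scale is chosen adaptively rather than by bucketing), with $f = \Theta(\widetilde{\OPT}/(k\log^2 n))$. The charging is then done per annulus $A_i^j$ around each optimal center: before the annulus accumulates $\Omega(\log n)$ sampled points its total cost is bounded by $O(f\log n)$ via a stopping-time/Chernoff argument (\Cref{clm:k-means-cost-light}), summing to $O(\widetilde{\OPT})$ over all $O(k\log n)$ annuli; afterwards, every point $x\in A_i^j$ satisfies $Q(x,S)\le d(x,c^*_i)+6\cdot 2^j r^*_i \le 13\,d(x,c^*_i)$ because $d(x,c^*_i)\ge 2^{j-1}r^*_i$ (\Cref{clm:k-means-cost-heavy}) --- i.e.\ every point's cost is charged to \emph{its own} optimal contribution, which is precisely the per-point guarantee your scheme lacks. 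The sample-size bound and the final reduction via a weighted $k$-median/$k$-means solver (\Cref{prop:coreset-approx-original}) then match the parts of your outline that are fine. If you want to salvage your recursive approach, the adaptive minimization in $Q$ and the annulus decomposition are the ingredients to import.
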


The algorithm proceeds by first constructing a \textit{coreset} $S \subset \cX$ of at most $O(k\log^2{n})$ points, which contains a set of $k$ centers that are $O(1)$-approximation to $\OPT$. 
To build the coreset $S$, we arbitrarily order the points, and for point $p$ (in order), we sample $p$ into $S$  with probability proportional the the distance $d(p,S)$. Since the weak-oracle distances can be arbitrarily corrupted, we cannot naively use the weak oracle to compute the $d(p,S)$. Instead, we design a \emph{proxy distance} which we will use to estimate $d(p,S)$, based on taking medians of weak-oracle distances from $p$ to pair-wise close subsets of points in $S$. Specifically, we define the ``heavy balls distance'' as follows:

\begin{definition}[Heavy-ball nearest distance]
\label{def:heavy-ball-dist}

Let $\cX$ be a set of points and $d$ be the underlying metric. Furthermore, let $y\in \cX$ be a point and $S\subseteq \cX$ be a set of points such that $y \not\in S$ and $\card{S}\geq 100\log{n}$. For any vertex $x \in S$ and radius $r > 0$, we set  $Q(x, S, y, r) = \med{\dtilde(z, y) \mid z \in S,  d(x, z) \leq r } + 6 \cdot r$, and define the \underline{heavy-ball nearest distance} as
\begin{align*}
Q(y, S) = \min_{\substack{x \in S, r > 0 \\ \card{\cB(x,r) \cap S} \geq 100 \log(n) }} Q(x, S, y, r).
\end{align*}
\end{definition}
We prove that, with high probability, we always have $Q(y,s) \geq d(y,S)$. 
In order to compute the heavy-ball distance, we will need to know the distance between all pairs of points within $S$, thus we will query the strong oracle on all points added to the coreset $S$. 
Our full $k$-means (and $k$-median) algorithm is then presented in in \Cref{alg:k-means}. We demonstrate that, for the correct guess of $\widetilde{\OPT}$, the algorithm samples at most $O(k\log^2{n})$ points (therefore bounding the query complexity), and that this coreset contains a set of $k$-centers which are a $O(1)$-approximation for the  $k$-means (or $k$-medians) objective. We can then run a point-weighted $k$-clustering algoithm on the coreset $S$ to obtain this $O(1)$-approximate solution.



\begin{algorithm}[!t]
\caption{\label{alg:k-means}The $k$-means (and $k$-median) algorithm}
\KwData{Set of points $\cX$; weak oracle $\DQ$, strong oracle $\LQ$; estimation of the optimal cost $\widetilde{\OPT}$}
\KwResult{A clustering $\mathcal{C}$, which includes a set of centers $C=\{c_1, \cdots , c_m\}$ and the assignment of each point $x\in \cX$}
\textbf{Init:} Label the points as $\{x_{1},\cdots, x_{n}\}$ by an arbitrary order; let $S = \{x_{1}\}$; a counter $w(x_{1})=1$. \\
Compute the value $f=\frac{1}{20}\cdot \frac{\widetilde{\OPT}}{k\log^2{n}}$. \\
\For{$x_{i} \in \cX$}{
    \eIf{$\card{S}<100\log{n}$}{
        Add $x_{i}$ to $S$ and query the strong oracle $\LQ$ for $x_{i}$. 
    }{
    Sample $x_{i}$ with probability $\min\{1, Q(x_{i}, S)/f\}$.\\
    \eIf{$x_{i}$ is sampled}{
    Add $x_i$ to $S$, i.e. $S\leftarrow S\cup \{x_{i}\}$, make the strong oracle query $\LQ$ on $x_{i}$, and add to the counters $w(x_{i})=1$.\\
    }{
    Assign $x_{i}$ to the cluster induced by $x'$, which is defined as the center of the ball that attains the heavy-ball nearest distance as in \Cref{def:heavy-ball-dist}. \\
    Increase $w(x')$ by 1.
    }
}
}
Run a weighted $k$-means (resp. $k$-median) clustering algorithm on $S$, e.g., algorithms of \cite{MettuP04,AhmadianNSW17}.
\end{algorithm}

\paragraph{The analysis.} We now proceed to analyze this algorithm and evetually prove \Cref{thm:alg-k-means}.
\Cref{alg:k-means} requires an approximation $\widetilde{\OPT}$,of the optimal cost $\OPT$. We will first assume we have such a value $\widetilde{\OPT}$ satisfying  $2^i \leq \widetilde{OPT}\leq 2^{i+1}$, and later describe how we can find it via binary search. Specifically, we will terminate any run of \Cref{alg:k-means} whenever it samples more than $1800 k \log^2{n}$ points to add to $S$, and conduct a binary search by maintaining upper and lower bounds of the indices $i$. In the end, we output the subroutine induced by an $\istar$ such that $(i).$ the subroutine with $2^{\istar}$ returns a clustering (i.e. did not terminate) and $(ii).$ the subroutine with $2^{\istar-1}$ is terminated without return. We then set $\widetilde{\OPT} = 2^{\istar}$. We will show that this value of $\widetilde{\OPT}$ satisfies the desired properties. 
As such, our analysis focus on the case when $\OPT \leq \widetilde{\OPT}\leq 2\OPT$.

For the simplicity of presentation, we focus on the analysis of $k$-median as it does not involved the square terms on the distances. We show in \Cref{rmk:analysis-on-k-means} how the analysis also works for the $k$-means clustering with a slightly larger constant factor.



To proceed with the analysis, we also introduce some self-contained notations used in this analysis. We let $\cstar_{1},\cdots, \cstar_{k}$ be the centers for the optimal $k$-median solution, and we let $\rstar_1, \rstar_2, \cdots, \rstar_{k}$ be the \emph{average} cost of the points in each cluster, i.e.,
\[\rstar_{i} = \frac{1}{\card{\{x\mid \cC(x)=\cstar_{i}\}}}\cdot \sum_{x \, s.t.\, \cC(x)=\cstar_{i}}d(x, \cstar_{i}).\]
Based on this, we can define $B^{j}_{i}$ as the ball centered at $\cstar_{i}$ and with distance at most $2^{j} \rstar_{i}$. We further define $A_{i}^{j}$ to be the set of points with distance $(2^{j-1}\rstar_{i}, \,\, 2^{j} \rstar_{i}]$ from the center of $C_{i}$, i.e. an ``annulus'' between $B^{j-1}_{i}$ and $B^{j}_{i}$. We also define the following event:
\begin{quote}
$\AijHeavy$: the set $A^{j}_{i}$ has at least $100\log{n}$ points in $S$, i.e. $\card{A^{j}_{i} \cap S}\geq 100\log{n}$.
\end{quote}
\noindent
We let the points being sampled in $S$ before $\AijHeavy$ (also denote as $\neg \AijHeavy$) as $\SijL$, and let its complement be $\SijH$, i.e. the set of points sampled after $\AijHeavy$. In what follows, we will first show the approximation guarantees and the number of centers in $S$.

\begin{lemma}
\label{lem:k-means-S-correct}
Suppose $\OPTmed \leq \widetilde{\OPT}\leq 2\OPTmed$. Then, with probability at least $1-\frac{1}{n^3}$, the clustering outputted by $S$ in \Cref{alg:k-means} gives an $O(1)$-approximation of the $\OPTmed$.
\end{lemma}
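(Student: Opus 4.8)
The plan is to show two things: (i) the coreset $S$ contains a set of $k$ centers whose $k$-median cost (with respect to the weights $w(\cdot)$) is $O(1) \cdot \OPTmed$, and (ii) any point $x_i$ that is \emph{not} sampled into $S$ is assigned to a center $x' \in S$ with $d(x_i, x') = O(1) \cdot (\text{local cost of } x_i)$, so that charging the assigned weight $w(x')$ back to the original points only loses a constant factor. Combining (i) and (ii), the weighted instance on $S$ has optimal cost $O(1)\cdot \OPTmed$, and an $O(1)$-approximate weighted clustering algorithm on $S$ (e.g.\ \cite{MettuP04,AhmadianNSW17}) then yields an $O(1)$-approximation to the original $k$-median objective. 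The key probabilistic event we condition on throughout is that the heavy-ball nearest distance is a \emph{one-sided} estimate: by the argument accompanying \Cref{def:heavy-ball-dist} (same median-of-weak-queries reasoning as \Cref{lem:median_dist_est}), with probability $\geq 1 - n^{-3}$ we have $Q(y,S) \geq d(y,S)$ for every point $y$ and every prefix set $S$ encountered, and moreover $Q(y,S) \leq d(y,S) + O(r)$ whenever there is a heavy ball of radius $r$ around a point close to $y$.

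The core of the argument is a per-annulus charging scheme over the sets $A_i^j$ (the annulus at scale $2^j \rstar_i$ around optimal center $\cstar_i$). First I would handle the ``light'' points $\SijL$ — those arriving while $\neg\AijHeavy$ holds, i.e.\ before $A_i^j$ has accumulated $100\log n$ sampled points. Because each such point is sampled with probability $\min\{1, Q(x_i,S)/f\}$ and $Q(x_i,S) \geq d(x_i,S) \geq 1$ (distances are $\geq 1$) — or more carefully, because the sampling probability is at least proportional to a genuine lower bound on $d(x_i,S)$ — a standard $D^2$-sampling / online-facility-location style potential argument shows that the total number of points sampled from $A_i^j$ while it is light is $O(\log n)$ in expectation, hence $O(\log n)$ per annulus with high probability; summing over the $O(\log n)$ relevant scales $j$ and the $k$ clusters gives the claimed $O(k\log^2 n)$ bound on $|S|$ and hence on the query complexity. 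Simultaneously, the expected assignment cost charged to the light points of $A_i^j$ telescopes against $f$ times the number of samples, and $f = \frac{1}{20}\cdot\frac{\widetilde{\OPT}}{k\log^2 n}$ was chosen precisely so that $f \cdot O(k\log^2 n) = O(\widetilde{\OPT}) = O(\OPTmed)$.

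For the ``heavy'' points $\SijH$ — those arriving after $A_i^j$ has $\geq 100\log n$ sampled points — the ball $\cB(\cstar_i', r)$ for an appropriate sampled point $\cstar_i' \in A_i^j$ and $r = \Theta(2^j \rstar_i)$ qualifies as a heavy ball in \Cref{def:heavy-ball-dist}, so $Q(x_i, S) \leq d(x_i, S) \leq d(x_i, \cstar_i') \leq O(2^j\rstar_i)$, and by the triangle inequality the center attaining the heavy-ball nearest distance lies within $O(2^j\rstar_i)$ of $x_i$; thus $x_i$'s assignment cost is $O(2^j\rstar_i) = O(d(x_i,\cstar_i) + \rstar_i)$, which charges against its optimal cost plus an averaged term, and $\sum_i \sum_{x \in C_i} \rstar_i = \OPTmed$ by definition of $\rstar_i$. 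Finally, for part (i), every optimal center $\cstar_i$ has \emph{some} point of $C_i$ within distance $\rstar_i$ of it that is either in $S$ or assigned to a point of $S$ within $O(\rstar_i)$ (by the same case analysis), so using those representatives as centers in $S$ gives cost $O(1)\cdot\OPTmed$, and the weights are faithful up to the constant losses above. The binary-search wrapper for $\widetilde{\OPT}$ is routine given the termination rule at $1800 k\log^2 n$ samples.

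I expect the main obstacle to be the light-point accounting: one must carefully argue that even though $Q(x_i,S)$ can \emph{overestimate} $d(x_i,S)$ (it is only a one-sided lower-bounded-by-$d$ estimate in the wrong direction for sampling — it could be much larger than $d(x_i,S)$, causing oversampling), the overestimation is itself controlled, because $Q(x_i,S)$ is built from a \emph{heavy} ball and the additive $6r$ term, so it never exceeds $O(1)\cdot d(x_i, S')$ for the relevant scale; pinning down this two-sided control of $Q$ under adversarial corruptions, and making the online potential argument for $|\SijL|$ go through with it, is the delicate part. Everything else is a fairly standard coreset-via-sampling analysis once that estimate is in hand.
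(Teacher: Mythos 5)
Your proposal follows essentially the same route as the paper's proof: the same per-annulus decomposition into a pre-heavy phase (where the cost of unsampled points is charged, via concentration of the sampling indicators, to $f$ times the at most $100\log n$ samples taken before \AijHeavy, summing to $O(\widetilde{\OPT})$ over the $O(k\log n)$ annuli) and a post-heavy phase (where the heavy-ball estimate gives assignment cost $Q(x,S)\leq d(x,\cstar_i)+O(2^{j}\rstar_i)=O(d(x,\cstar_i))$), followed by the standard weighted-coreset reduction of \Cref{prop:coreset-approx-original}. The only discrepancies are cosmetic: no online-facility-location potential argument is needed for the light-sample count (it is at most $100\log n$ per annulus by definition of \AijHeavy), and the difficulty you anticipate about $Q$ overestimating $d(x,S)$ bears on the size bound of \Cref{lem:k-means-S-sample} rather than on the cost bound proved here.
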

\begin{proof}
For each of the set $A^{j}_{i}$, we analyze the cost it pays in the formation of $S$ by looking at $\neg \AijHeavy$ (before $A^{j}_{i}$ becomes heavy) and $\AijHeavy$ (after $A^{j}_{i}$ becomes heavy), respectively as in \Cref{clm:k-means-cost-light} and \Cref{clm:k-means-cost-heavy}.
\begin{claim}
\label{clm:k-means-cost-light}
For each set $A^{j}_{i}$, with probability at least $1-\frac{1}{n^4}$, the cost induced by all points in $A^{j}_{i}\cap \SijL$ is at most $200 f \cdot \log{n}$.
\end{claim}
\begin{proof}[Proof of \Cref{clm:k-means-cost-light}]
For any point $x \in A^{j}_{i}\cap \SijL$, we define random variable $X_{x}$ as the indicator random variable for $x$ to be sampled. Since we sample each point with probability at most $Q(x, S)/f$, there is
\begin{align*}
\expect{X_{x}} \leq Q(x, S_{:x})/f,
\end{align*}
where we use $S_{:x}$ to denote the sampled set before $x$ is visited.
Suppose $\AijHeavy$ happens after sampling $N$ points from $A^{j}_{i}$, which means up till the $N-1$ point, we have
\begin{align*}
\expect{\sum_{i\in [N-1]} X_{x_{i}}} \leq \sum_{i\in [N-1]} Q(x, S_{:x_{i}})/f < 100 \log{n}.
\end{align*}
The last inequality holds since we condition on the event that the number of sampled points is (deterministically) less than $100 \log{n}$. Similarly, we can also get a lower bound on the expectation by using the fact that we are only one point short of reaching $100 \log{n}$ (assuming $n\geq 10$).
\begin{align*}
\expect{\sum_{i\in [N-1]} X_{x_{i}}} \geq 99 \log{n}.
\end{align*}

On the other hand, note that if a point $X_{x}$ is not sampled, the cost induced by $X_{x}$ is exactly $Q(x, S_{:x_{i}})$. As such, for the induced cost of the points in $ A^{j}_{i}\cap \SijL$ to be more than $200 f\cdot \log{n}$, there must be
\begin{align*}
\sum_{i\in [N-1]} Q(x, S_{:x_{i}}) > 200 f\cdot \log{n}.
\end{align*}
Comparing the two inequalities, and using the fact that $\sum_{i\in [N-1]} X_{x_{i}}$ is a summation of 0/1 random variables, we have that for the second inequality to happen, the probability is
\begin{align*}
\Pr\paren{\text{cost induced by $A^{j}_{i}\cap \SijL>200 f\cdot \log{n}$}} &= \Pr\paren{\sum_{i\in [N-1]} Q(x, S_{:x_{i}}) > 200 f\cdot \log{n}}\\
&\leq \Pr\paren{\sum_{i\in [N-1]} X_{x_{i}} \geq 2\cdot \expect{\sum_{i\in [N-1]} X_{x_{i}}}}\\
&\leq \exp\paren{-\frac{1}{3}\cdot \expect{\sum_{i\in [N-1]} X_{x_{i}}}} \tag{by multiplicative Chernoff}\\
&\leq \frac{1}{n^7},
\end{align*}
as desired. \myqed{\Cref{clm:k-means-cost-light}}
\end{proof}

We now turn to the cost of the points in $A^{j}_{i}$ after the set becomes heavy, i.e. conditioning on $\AijHeavy$ happens.
\begin{claim}
\label{clm:k-means-cost-heavy}
For each set $A^{j}_{i}$, with probability at least $1-\frac{1}{n^4}$, the cost induced by all points in $A^{j}_{i}\cap \SijH$ is at most $13$-multiplicative of the cost induced by $A^{j}_{i}\cap \SijH$ on the optimal clustering.
\end{claim}
\begin{proof}
Note that the moment $\AijHeavy$ happens, we have the ball $B^{j}_{i}$ becomes heavy as well. As such, by the same argument we used in \Cref{lem:median_dist_est}, the cost of adding any $x\in A^{j}_{i}\cap \SijH$ to its nearest heavy-ball cluster is at most $Q(x, S)\leq d(x,\cstar_{i})+6\cdot 2^{j}\rstar_{i}$ with probability at least $1-\frac{1}{n^4}$. Furthermore, we have $d(x,\cstar_{i})\geq 2^{j-1}\rstar_{i}$ by the fact that $x\in A^{j}_{i}$. Therefore, we can charge the cost of $x$ induced in $S$ to at most $13\cdot  d(x,\cstar_{i})$. This implies a $13$-multiplicative approximation for all points in $A^{j}_{i}\cap \SijH$, as desired. \myqed{\Cref{clm:k-means-cost-heavy}} 
\end{proof}

We now finalize the proof of \Cref{lem:k-means-S-correct}. We first argue that we can apply \Cref{clm:k-means-cost-light} to \emph{all} $A^{j}_{i}$ sets. Note that there are at most $O(k\log{n}) \leq O(n\log{n})$ such sets, as for any cluster $\cstar_{i}$, there is no point whose distance is more than $n \, \rstar_{i}$. Therefore, we can apply a union bound to conclude that with probability at least $1-\frac{1}{n^5}$, the bound of \Cref{clm:k-means-cost-light} applies to all such $A^{j}_{i}$ sets. Collectively, they induce at most $200\cdot fk\log^2{n}\leq 10 \, \widetilde{OPT}\leq 20\, \OPTmed$. Furthermore, by \Cref{clm:k-means-cost-heavy}, the total cost we induce on $A^{j}_{i}\cap \SijH$ for all $i,j$ is at most $13\OPTmed$. As such, the total cost induced by $S$ is at most $O(1)\cdot \OPTmed$, as desired.
\end{proof}

We now turn to the analysis of the number of centers that are ever sampled in $S$.
\begin{lemma}
\label{lem:k-means-S-sample}
Suppose $\OPTmed \leq \widetilde{\OPT}\leq 2\OPTmed$. Then, with probability at least $1-\frac{1}{n^3}$, the set $S$ output in \Cref{alg:k-means} has at most $O(k\log^2{n})$ points.
\end{lemma}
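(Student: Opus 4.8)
The plan is to split $\card{S}$ into the $100\log{n}$ points inserted during the initialization phase (while $\card{S}<100\log{n}$) and, among the remaining points that are later sampled, those inserted \emph{before} versus \emph{after} their containing annulus turns heavy. Since distinct points are at distance at least $1$, every point other than the $k$ optimal centers $\cstar_{1},\dots,\cstar_{k}$ lies in a unique annulus $A^{j}_{i}$; the centers themselves add at most $k$ to $\card{S}$ and can be ignored. For a point $x_{\ell}\in A^{j}_{i}$ processed after the initialization phase, let $Z_{\ell}=1$ if $x_{\ell}$ is sampled into $S$ while $\neg \AijHeavy$ still holds (i.e.\ $x_{\ell}\in \SijL$), and $Y_{\ell}=1$ if $x_{\ell}$ is sampled while $\AijHeavy$ already holds. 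Then $\card{S}\leq 100\log{n}+k+\sum_{\ell}Z_{\ell}+\sum_{\ell}Y_{\ell}$, so it suffices to bound each of the two sums by $O(k\log^{2}{n})$.

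The bound on $\sum_{\ell}Z_{\ell}$ is a deterministic counting argument. The event $\AijHeavy$ is triggered exactly when the $100\log{n}$-th point of $A^{j}_{i}$ enters $S$, and $S$ only grows, so at most $100\log{n}$ points of $A^{j}_{i}$ can ever be sampled while $A^{j}_{i}$ is light, i.e.\ $\sum_{x_{\ell}\in A^{j}_{i}}Z_{\ell}\leq 100\log{n}$. Since all pairwise distances lie in $[1,\Delta]$ with $\Delta\leq n^{O(1)}$, for each optimal cluster the values of $j$ with $A^{j}_{i}\neq\emptyset$ lie in a window of length $O(\log\Delta)=O(\log n)$, so there are $O(k\log n)$ nonempty annuli and therefore $\sum_{\ell}Z_{\ell}\leq O(k\log^{2}{n})$.

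For $\sum_{\ell}Y_{\ell}$ we use the randomness of the sampling, working on the high-probability event $\mathcal{G}$ underlying \Cref{lem:median_dist_est} (union-bounded over the $O(k\log n)$ balls $B^{j}_{i}$ that can ever act as heavy balls and over all $n$ query points), on which every median used in a heavy-ball distance evaluation is sandwiched between the relevant true distances. On $\mathcal{G}$, exactly as in the proof of \Cref{clm:k-means-cost-heavy}, once $\AijHeavy$ holds the ball $B^{j}_{i}$ is heavy, so for any $x_{\ell}\in A^{j}_{i}$ visited at that time $Q(x_{\ell},S)\leq d(x_{\ell},\cstar_{i})+O(2^{j}\rstar_{i})=O(d(x_{\ell},\cstar_{i}))$, using $d(x_{\ell},\cstar_{i})>2^{j-1}\rstar_{i}$. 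Consequently, letting $\mathcal{F}_{\ell-1}$ denote the history of the sampling coins, $\Pr[Y_{\ell}=1\mid \mathcal{F}_{\ell-1}]\leq Q(x_{\ell},S)/f\leq c_{0}\cdot d(x_{\ell},\cstar_{i})/f$ for an absolute constant $c_{0}$ (this probability is $0$ when $A^{j}_{i}$ is not yet heavy), and the right-hand side is a \emph{deterministic} number. Hence $\sum_{\ell}Y_{\ell}$ is stochastically dominated by a sum of independent Bernoulli variables whose means total $\sum_{\ell}c_{0}d(x_{\ell},\cstar_{i})/f=c_{0}\,\OPTmed/f\leq 20c_{0}\,k\log^{2}{n}$, since $f=\tfrac{1}{20}\cdot\tfrac{\widetilde{\OPT}}{k\log^{2}{n}}\geq \tfrac{\OPTmed}{20k\log^{2}{n}}$ by $\widetilde{\OPT}\geq \OPTmed$. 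A multiplicative Chernoff bound gives $\Pr[\sum_{\ell}Y_{\ell}>40c_{0}k\log^{2}{n}]\leq \exp(-\Omega(k\log^{2}{n}))\leq \tfrac{1}{2n^{3}}$, and combined with $\Pr[\neg\mathcal{G}]\leq\tfrac{1}{2n^{3}}$ and the deterministic bounds above this yields $\card{S}=O(k\log^{2}{n})$ with probability at least $1-\tfrac{1}{n^{3}}$.

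The main obstacle is adaptivity: the probability that $x_{\ell}$ is added to $S$ depends on the random set built so far --- in particular on which annuli are already heavy --- so the $Y_{\ell}$'s are dependent with random conditional means. The device that resolves this is the split at each annulus's ``heaviness threshold'': the pre-threshold part is controlled by pure counting (at most $100\log n$ per annulus, $O(k\log n)$ annuli), while the post-threshold part has conditional sampling probability bounded by the \emph{deterministic} quantity $c_{0}d(x_{\ell},\cstar_{i})/f$, which permits stochastic domination by an independent sum and hence a Chernoff bound. A secondary subtlety is that the bound $Q(x_{\ell},S)=O(d(x_{\ell},\cstar_{i}))$ must hold \emph{simultaneously} across all adaptively formed heavy balls, which is why $\mathcal{G}$ is defined as a union bound over the polynomially many candidate heavy balls and query points rather than invoked per instance.
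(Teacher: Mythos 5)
Your proof is correct and follows essentially the same route as the paper's: both split the sampled points of each annulus $A^{j}_{i}$ into those taken before the annulus becomes heavy (at most $100\log n$ per annulus, over $O(k\log n)$ annuli, hence $O(k\log^2 n)$ deterministically) and those taken afterwards, where the heavy-ball distance satisfies $Q(x,S)=O(d(x,\cstar_{i}))$ so that the total expected number of post-heaviness samples is $O(\OPTmed/f)=O(k\log^2 n)$. The only difference is the final concentration step: the paper asserts that the sampling indicators are negatively correlated and invokes a generalized Chernoff bound, whereas you bound the conditional sampling probability by the deterministic quantity $c_{0}\,d(x_{\ell},\cstar_{i})/f$ and use stochastic domination by an independent Bernoulli sum --- a somewhat cleaner justification of the same inequality.
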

\begin{proof}
We again analyze the number of sampled points before and after the event $\AijHeavy$. Note that before $\AijHeavy$, we deterministically only provide at most $100\log{n}$ centers from $A^{j}_{i}$. After $\AijHeavy$, the ball $B^{j}_{i}$ becomes heavy. As such, define $X_{x}$ be the indicator random variable for sampling $x$ from $A^{j}_{i}$, we can condition on the high probability event of \Cref{clm:k-means-cost-heavy}, and argue that with probability at least $1-\frac{1}{n^3}$, there is
\begin{align*}
\expect{X_{x} \mid \AijHeavy} &\leq 260 \cdot \frac{d(x, \cstar_{i}) \, k \, \log^{2}{n}}{\widetilde{\OPT}}\\
&\leq 260 \cdot \frac{d(x, \cstar_{i}) \, k \, \log^{2}{n}}{\OPTmed}.
\end{align*}
Therefore, we can add up all the centers and their $j$ values, which gives us
\begin{align*}
\expect{\sum_{i,j}\sum_{x \in A^{i}_{j}} X_{x} \mid \AijHeavy} &= \sum_{i,j}\sum_{x \in A^{i}_{j}} \expect{X_{x} \mid \AijHeavy} \tag{linearity of expectation}\\
&\leq 260\cdot  \frac{k\log^2{n}}{\OPTmed} \cdot \sum_{i,j}\sum_{x \in A^{i}_{j}} d(x, \cstar_{i})\\
&= 260\cdot k\log^2{n}. \tag{$\sum_{i,j}\sum_{x \in A^{i}_{j}} d(x, \cstar_{i})=\OPTmed$}
\end{align*}
As such, we can bound the \emph{expectation} of the total number of points that are ever sampled as:
\begin{align*}
\expect{\sum_{i,j}\sum_{x \in A^{i}_{j}} X_{x}} & \leq \expect{\sum_{i,j}\sum_{x \in A^{i}_{j}} X_{x} \mid \AijHeavy} +  \expect{\sum_{i,j}\sum_{x \in A^{i}_{j}} X_{x} \mid \neg \, \AijHeavy}\\
&\leq  260\cdot k\log^2{n} + \sum_{i,j} 100 \log{n}\\
&\leq 360\cdot k \log^2 {n}. \tag{there are only $k\log{n}$ possible $A^{j}_{i}$ sets}
\end{align*}
Finally, we analyze the concentration of the number of sampled points in $S$. Note that using \Cref{fct:neg-cor-rvs}, we can verify that the $X_{x}$ indicator random variables are negatively correlated, i.e. $\Pr(X_{x}=1\mid X_{u}=1)\leq \Pr(X_{x}=1)$ for $u\neq x$. If $\sum_{i,j}\sum_{x \in A^{i}_{j}} X_{x}< 100 k \log^2 {n}$, we deterministically obtain the desired bound; on the other hand, assuming $\sum_{i,j}\sum_{x \in A^{i}_{j}} X_{x}\geq 100 k \log^2 {n}$, we have
\begin{align*}
\Pr\paren{\sum_{i,j}\sum_{x \in A^{i}_{j}} X_{x} \geq 5\cdot \expect{\sum_{i,j}\sum_{x \in A^{i}_{j}} X_{x}}} &\leq \exp\paren{-\frac{16}{6}\cdot \expect{\sum_{i,j}\sum_{x \in A^{i}_{j}} X_{x}}} \tag{Chernoff bound for negatively correlated random variables, \Cref{prop:chernoff-general}}\\
&\leq \frac{1}{n^3},
\end{align*}
as desired.
\end{proof}
By \Cref{lem:k-means-S-correct} and \Cref{lem:k-means-S-sample}, after the procedure that produces $S$ terminates, we have set set of at most $O(k \log^2{n})$ centers and an $O(1)$-approximation of the optimal k-median objective. We now describe how this result leads to the $k$-median solution with exactly $k$ centers. To this end, we introduce the following standard result for approximating $k$-median on coresets (see e.g. \cite{GuhaMMO00}) and a proof for completeness.

\begin{proposition}[\cite{GuhaMMO00}]
\label{prop:coreset-approx-original}
Let $\{c_{1},\cdots, c_{m}\}$ be a set of centers that achieves $\alpha$-approximation of the $k$-median (resp. $k$-means) objective, and let $w_{1}, \cdots, w_{m}$ be the number of points contained in each cluster induced by $\{c_{i}\}_{i=1}^{m}$. Furthermore, let $\{\ctilde_{1}, \ctilde_{2}, \cdots, \ctilde_{k}\}$ be a $\beta$-approximate $k$-median (resp. $k$-means) on the weighted points $\{c_{1},\cdots, c_{m}\}$. Then, the clustering induced by $\{\ctilde_{1}, \ctilde_{2}, \cdots, \ctilde_{k}\}$ gives an $O(\alpha+\beta)$-approximation of $\OPTmed$ (resp. $\OPTmean$).
\end{proposition}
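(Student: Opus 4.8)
The plan is to run the standard coreset-to-solution triangle-inequality argument. Fix an optimal $k$-median solution with centers $o_1,\dots,o_k$ and let $\cC^*$ denote the induced assignment of the original points in $\cX$; write $\OPTmed=\sum_{x}d(x,\cC^*(x))$. The key objects are three costs: (i) the cost of the $\alpha$-approximate coreset $\{c_i\}$ on $\cX$, which is at most $\alpha\,\OPTmed$ by hypothesis; (ii) the cost of the best $k$-median on the \emph{weighted} point set $\{(c_i,w_i)\}$; and (iii) the cost of the $\beta$-approximate weighted solution $\{\ctilde_1,\dots,\ctilde_k\}$ on $\{(c_i,w_i)\}$, which by definition is within $\beta$ of (ii). I would organize the proof around bounding (ii) by $O(\alpha)\,\OPTmed$, and then lifting the weighted solution back to the original points.

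First I would bound (ii). Consider placing weighted centers at the \emph{optimal} centers $o_1,\dots,o_k$ and evaluate the weighted $k$-median cost of $\{(c_i,w_i)\}$ against them: this cost is $\sum_i w_i\, d(c_i,\{o_j\})\le \sum_i \sum_{x:\,\cC(x)=c_i} d(c_i,o_{j(x)})$ where $o_{j(x)}=\cC^*(x)$, and by the triangle inequality $d(c_i,\cC^*(x))\le d(c_i,x)+d(x,\cC^*(x))$. Summing over all $x$ gives that the weighted cost of $\{(c_i,w_i)\}$ against $\{o_j\}$ is at most $(\text{cost of }\{c_i\}\text{ on }\cX) + \OPTmed \le (\alpha+1)\OPTmed$, so the optimal weighted $k$-median cost (ii) is at most $(\alpha+1)\OPTmed$, and hence the $\beta$-approximate weighted solution has weighted cost at most $\beta(\alpha+1)\OPTmed$.

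Next I would lift $\{\ctilde_1,\dots,\ctilde_k\}$ back to $\cX$: assign each original point $x$ to the same $\ctilde$-center that its representative $c_{i(x)}$ is assigned to in the weighted instance. Then $d(x,\ctilde_{\cdot})\le d(x,c_{i(x)})+d(c_{i(x)},\ctilde_{\cdot})$, and summing over $x$, the first term totals the cost of $\{c_i\}$ on $\cX$ (at most $\alpha\,\OPTmed$) and the second term totals exactly the weighted cost of $\{\ctilde_j\}$ on $\{(c_i,w_i)\}$ (at most $\beta(\alpha+1)\OPTmed$). This yields a clustering of $\cX$ with cost $O(\alpha+\alpha\beta)\,\OPTmed = O((\alpha+\beta)\,\alpha)\,\OPTmed$; in the regime $\alpha=O(1)$ used in our application this is $O(\alpha+\beta)\,\OPTmed$ as stated. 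For $k$-means one repeats the argument with squared distances, using $(a+b)^2\le 2a^2+2b^2$ (or the approximate triangle inequality $d^2(x,z)\le (1+\gamma)d^2(x,y)+(1+1/\gamma)d^2(y,z)$) in place of the triangle inequality; this only inflates the constants, which is why the statement allows an $O(\cdot)$ and a ``slightly larger constant''.

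The only mild subtlety — and the step I would be most careful about — is making sure the weights $w_i$ are exactly the cluster sizes induced by $\{c_i\}$ so that the ``lift-back'' accounting is an exact equality rather than an inequality with a spurious factor; once that bookkeeping is pinned down, every other step is a one-line triangle-inequality estimate. I would also note that this is a black-box statement about coresets and is independent of the oracle model, so no weak/strong-oracle machinery enters the proof.
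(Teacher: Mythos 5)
Your proof is correct and follows essentially the same route as the paper's: decompose each point's cost as $d(x,c(x))+d(c(x),\ctilde(\cdot))$ via the triangle inequality, bound the first sum by $\alpha\,\OPTmed$, and bound the optimal weighted cost by evaluating the weighted instance against the true optimal centers (with the squared/approximate triangle inequality handling $k$-means). The one place you are actually more careful than the paper is the bound on the optimal weighted cost: you correctly obtain $\OPT_{ws}\le(\alpha+1)\OPTmed$ and hence a final bound of $O(\alpha+\alpha\beta)$, which equals the stated $O(\alpha+\beta)$ only because $\alpha=O(1)$ in the application, whereas the paper's own derivation asserts $\OPT_{ws}\le 2\OPTmed$ through a chain that elides this $\alpha$ dependence; since $\alpha$ and $\beta$ are both constants wherever the proposition is invoked, the discrepancy is immaterial.
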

\begin{proof}
For any point $x\in \cX$, let $c(x)$ be its clustering center in $\{c_{i}\}_{i=1}^{m}$ and $\ctilde(x)$ be its clustering center in $\{\ctilde_{j}\}_{j=1}^{k}$. If $c(x)=\ctilde(x)$, the cost induced by $x$ remains $d(x, c(x))$. On the other hand, if $c(x)\neq\ctilde(x)$, the cost induced by $x$ is at most $d(c(x), \ctilde(x))+d(x, c(x))$. Furthermore, note that if we let $\OPT_{ws}$ be optimal cost of the weighted cost of clustering on $\{\ctilde_{j}\}_{j=1}^{k}$, and let $\ctilde(\cdot)$ and $\cstar(\cdot)$ be the functions that maps 1). points in $\{c_{i}\}_{i=1}^{m}$ to the center in the weighted clustering and 2). points in $\cX$ to the optimal $k$-median clustering, respectively. As such, we have
\begin{align*}
\OPT_{ws} &= \sum_{i=1}^{m} w_{i} \cdot d(c_{i}, \ctilde(c_{i}))\\
& =\sum_{i=1}^{n} d(c_{i}, \ctilde(c_{i})) \tag{duplicating each $c_{i}$ for $w_{i}$ times and map all of them to the corresponding center}\\
&\leq \sum_{i=1}^{n} d(c_{i}, \cstar(c_{i}))+ d(\ctilde(c_{i}), \cstar(\ctilde(c_{i})))\\
&\leq 2\OPTmed.
\end{align*}
Therefore, we can bound the total cost with
\begin{align*}
\sum_{x\in \cX} d(x, \ctilde(x)) & \leq \sum_{x\in \cX} d(x, c(x)) + \sum_{x\in \cX} d(c(x), \ctilde(x))\\
&\leq \alpha \cdot \OPTmed + \beta\cdot \OPT_{ws}\\
&\leq (\alpha+2\beta)\cdot \OPTmed,
\end{align*}
as desired. 

Finally, for the $k$-means case, we need to replace $d(\cdot, \cdot)$ with $d^{2}(\cdot, \cdot)$; although this is no longer a metric, we can use the approximate triangle inequality that
\begin{align*}
d^{2}(x,z)\leq 2(d^{2}(x,y)+d^{2}(y,z)).
\end{align*}
As such, by replacing the $k$-median clustering centers with the $k$-means ones and use $d^{2}(\cdot, \cdot)$, we can get
\begin{align*}
\sum_{x\in \cX} d^{2}(x, \ctilde(x)) & \leq 2\cdot \sum_{x\in \cX} d^{2}(x, c(x)) + 2\cdot \sum_{x\in \cX} d^{2}(c(x), \ctilde(x))\\
&\leq (2\alpha+8\beta)\cdot \OPTmean,
\end{align*}
as desired.
\end{proof}


\begin{proof}[Proof of \Cref{thm:alg-k-means}]
Conditioning on the right guess of $\OPTmed \leq \widetilde{\OPT}\leq 2\OPTmed$, by \Cref{lem:k-means-S-correct} and \Cref{lem:k-means-S-sample}, in the construction of set $S$ we sample at most $O(k\log^2{n})$ points and produce an $O(1)$-approximation. This also implies we make at most $O(k\log^2{n})$ strong oracle $\LQ$ queries. To bound the number of weak oracle queries, note that in each iteration of $x_i$, we only need to query the distances between the $x_{i}$ and the points in $S$, which is at most $O(nk\, \log^2{n})$, and our post-processing of the points in $S$ does not involve any additional oracle queries.

To see the time efficiency, observe that we can maintain a heap for each point in $S$ to represent the balls of size $O(\log{n})$. As such, when $x_{i}$ is sampled and a strong oracle query $\LQ$ is added, it takes $O(\card{S}\log{n})$ time to insert the value and remove the largest one from the heap. Therefore, conditioning on the high probability event of \Cref{lem:k-means-S-sample}, the updates of the heavy balls takes at most $O(\sum_{\card{S}=1}^{k\log^2{n}} \card{S}\log{n})=O(k^2 \, \polylog{n})$ time. On the other hand, if $x_{i}$ is not sampled, we need to estimate the median from $x_{i}$ to every ball in $S$, and it take $O(\card{S}\log{n})$ time by the heap structure. Therefore, the estimation of heavy ball nearest distance takes $O(n\, \card{S}\log{n})\leq O(nk \, \polylog{n})$ time across the process. Finally, we can run $O(1)$-approximate $k$-median algorithms in $O(\card{S}k)$ time by \cite{MettuP04}, which gives us $O(k^2\polylog{n})$ time by the size bound of $\card{S}$. In total, this gives us the desired $O(nk\, \polylog{n})$ time. 

For the approximation guarantee, by \Cref{lem:k-means-S-correct} and \Cref{prop:coreset-approx-original}, we can run an $O(1)$-approximate $k$-median (resp. $k$-means) algorithm on the $O(1)$-approximate coreset of $S$ \footnote{If we do not care about the time efficiency, we can run the state-of-the-art polynomial-time algorithm by \cite{AhmadianNSW17}; we can even run a brute-force algorithm to search for the minimum-cost clustering}. The correctness is guaranteed since we know the \emph{exact} distance between the points in $S$ (by the $\LQ$ queries). Therefore, we get an $O(1)$ approximation of the optimal clustering cost.

Finally, when running with unknown $\widetilde{OPT}$, we break a run whenever it samples more than $1800 k \log^2{n}$ points in the construction of $S$, and output the run with the $\widetilde{OPT}$ value that $(i)$. is not break and $(ii).$ is next to a run with $\widetilde{OPT}/2$ that breaks. Thus, we can binary search for such a value of $\widetilde{OPT}$ (as discussed in \Cref{seubsec:add-detail}).  Using that $\Delta \leq \poly(n)$, 
this gives us $O(\log\log{n})$ overhead for the queries and running time. As such, it results in at most $O(k\log^2{n}\log\log{n})$ strong oracle $\LQ$ queries, at most $O(nk\log^2{n}\log\log{n})$ weak oracle $\DQ$ queries, and $O(nk\, \polylog{n})$ time. 
\myqed{\Cref{thm:alg-k-means}}
\end{proof}

\begin{remark}
\label{rmk:analysis-on-k-means}
We now describe the changes needed to generalize \cref{alg:k-means} from the $k$-medians objective to $k$-means. Specifically, since the cost measure in $k$-means is $\sum_{x \in \cX} d^2(x,\cC(x))$ instead of $\sum_{x \in \cX} d(x,\cC(x))$, it will suffice to change our ``distance'' measure to $d^2$. This can be dome by making the modification to the sampling probability of $x_{i}$: instead of using $\min\{1, Q(x_{i}, S)/f\}$, we will use the sampling probability of $\min\{1, Q^{2}(x_{i}, S)/f\}$. The only downstream change that occurs is that we no longer can apply triangle inequality, since $d^2(\cdot,\cdot)$ is no longer a metric. However, we can always employ the approximate triangle inequality of $d^{2}(x,z)\leq 2\cdot (d^{2}(x,y)+d^{2}(y,z))$ (see \Cref{prop:coreset-approx-original} for the usage). Note that the triangle inequality is only used in \Cref{clm:k-means-cost-heavy} and \Cref{prop:coreset-approx-original}, and the rest of the algorithm and the analysis proceeds exactly as in the $k$-median case (\Cref{lem:k-means-S-sample} is affected by the high probability event in \Cref{clm:k-means-cost-heavy}, but the analysis itself does not use triangle inequality). Substituting approximate triangle inequality for the triangle inequality induces an additional constant factor into the objective, which does not effect our overall $O(1)$-approximation. 
\end{remark}

\section{Metric Minimum Spanning Tree}
\label{sec:alg-mst}
In \Cref{sec:alg-k-center} and \ref{sec:alg-k-means} we showed that $\tilde{\Theta}(k)$ strong oracle queries are necessary and sufficient for $k$-clustering tasks. In light of this, a natural question is whether the strong oracle is necessary for all geometric optimization problems. In this section, we demonstrate that, surprisingly, this is not the case for the classic metric minimum spanning tree (MST) problem, so long as the weak distances $\td:\cX^2 \to \R$ are a metric. We refer to this as the \textit{metric-weak oracle model}.   Formally, we show:

\begin{theorem}
\label{thm:mst-alg-metric-case}
There is an algorithm that, given only access to the corrupted metric $\td$ produced by a metric weak oracle (i.e., $(\cX,\td)$ is a metric), with corruption probability $\delta$ such that $1-\delta \geq c$ for some constant $c>0$,  in $O(n^2)$ time produces a tree $\hat{T}$ such that $\mathbb{E}[w(\hat{T})] \leq O(\sqrt{ \log n}) \cdot \min_T w(T)$.
\end{theorem}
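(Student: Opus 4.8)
The plan is to reduce metric MST to a sequence of connectivity problems at geometrically spaced scales and to exploit two features of the model: $\td$ is \emph{forced} to be a metric, and corruptions occur independently with constant probability. Recall (Kruskal) that for any metric $\rho$ on $\cX$ one has $\min_{T} w_{\rho}(T) = \Theta\!\left(\sum_{j} 2^{j}(c_{\rho}(2^{j})-1)\right)$, where $c_{\rho}(r)$ is the number of connected components of the threshold graph $H_{\rho}(r)=(\cX,\{(x,y):\rho(x,y)\le r\})$; since $\Delta=\poly(n)$, only $O(\log n)$ scales are relevant. Hence it suffices to output a spanning tree $\hat T$ whose number of edges of $d$-length $\approx 2^{j}$, weighted by $2^{j}$ and summed over $j$, is $O(\sqrt{\log n})\cdot\sum_{j}2^{j}c_{d}(2^{j})$.

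The structural heart of the argument is that although a corrupted pair can be assigned an arbitrarily small $\td$-value, by metricity this ``fake-shortness'' is inherited by the (true) neighbors of its endpoints, which then requires \emph{many} further corruptions --- an event that is rare when a neighborhood is dense. Precisely, call $B(x,r)$ \emph{$r$-heavy} if $|B(x,r)|\ge C\log n$. I would show that, with high probability over the corruption set $\cor$, simultaneously for every $r$-heavy $B(x,r)$ and every $y$ with $d(x,y)>10r$ one has $\td(x,y)>r$: if $\td(x,y)\le r$ then $(x,y)\in\cor$, and for each $z\in B(x,r)$ with $(x,z)$ uncorrupted the $\td$-triangle inequality gives $\td(z,y)\le d(z,x)+\td(x,y)\le 2r$ while $d(z,y)>9r$, forcing $(z,y)\in\cor$; the events ``$(x,z)\in\cor$ or $(z,y)\in\cor$'' are independent over the $\ge C\log n$ points $z\in B(x,r)$ and each has probability $1-(1-\delta)^{2}<1$, so the bad event has probability $n^{-\Omega(C)}$, and we union bound over all $x,y$ and all $O(\log n)$ scales. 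In parallel, a Chernoff plus union bound gives that with high probability every pair $x,y$ has $\Omega(n)$ common ``pivots'' $z$ for which both $(x,z),(z,y)$ are uncorrupted; consequently the median of $\{\td(z,y):z\in B(x,r)\cap S\}$ estimates $d(x,y)$ up to additive $O(r)$ whenever $S$ contains enough points of a heavy ball around $x$ --- this is exactly \Cref{lem:median_dist_est}.

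Given these facts, the algorithm is a Bor{\r u}vka/Kruskal-style bottom-up clustering run on $\td$, with two modifications. First, when a candidate edge would merge two clusters at scale $r$, we accept it only if one endpoint is certified (through $\td$) to lie in a roughly $r$-heavy ball; by the structural fact, the accepted edge then has $d$-length $O(r)$, so corrupted fake-short edges are never used inside dense regions. Second, a point $x$ whose neighborhood only becomes heavy at radius $\ell(x)$ (essentially the distance from $x$ to its $(C\log n)$-th nearest neighbor) is carried as a singleton until scale $\Theta(\ell(x))$, then attached via the median estimator to a heavy-ball cluster at $d$-cost $O(\ell(x))$. The total cost of these delayed attachments is $\sum_{x}\ell(x)$, and using a standard bound of the form $\sum_{x}d(x,k\text{-NN}(x))=O(k)\cdot\min_T w_d(T)$ this is $O(\log n)\cdot\OPT$ when the heaviness threshold is $k=C\log n$. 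To sharpen $O(\log n)$ to $O(\sqrt{\log n})$ I would lower the threshold to $k=\Theta(\sqrt{\log n})$: the delay cost then drops to $O(\sqrt{\log n})\cdot\OPT$, but the structural guarantee is no longer high-probability, so some fake-short edges survive the heaviness test, and the remaining work is to argue --- now in \emph{expectation}, which is why the theorem is stated with $\mathbb{E}[\,\cdot\,]$ and the algorithm is randomized (e.g.\ by a random multiplicative shift of the scale grid) --- that the total $d$-length contributed by surviving fake-short edges across all $O(\log n)$ scales is also only $O(\sqrt{\log n})\cdot\OPT$. The running time is $O(n^{2})$: reading $\td$, one MST computation, and the median estimates.

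The main obstacle is precisely this trade-off. With threshold $k=\Theta(\log n)$ corrupted fake-short edges are eliminated deterministically but one pays a full $\log n$ for delayed attachments; with $k=\Theta(\sqrt{\log n})$ the attachment cost is right, but the adversary can still plant fake-short edges using the corruptions that happen ``for free'' near sparse points, and one must show these cannot be leveraged into more than an $O(\sqrt{\log n})$ overcharge --- a hierarchical (HST-like) instance is the worst case. This is exactly where the matching $\Omega(\sqrt{\log n})$ lower bound (\Cref{thm:mstLBMain}) indicates the analysis is tight and cannot be improved.
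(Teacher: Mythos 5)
Your proposal has a genuine gap, and you have in fact named it yourself: the entire difficulty of the theorem lives in the step you defer as ``the remaining work.'' With a heaviness threshold of $\Theta(\log n)$ your structural claim (fake-short corrupted edges cannot survive near a heavy ball) holds with high probability, but the delayed-attachment cost is $O(\log n)\cdot\OPT$; with threshold $\Theta(\sqrt{\log n})$ the attachment cost is right, but a ball with only $2^{\ell}=\Theta(\sqrt{\log n})$ points fails your certification test with probability $2^{-\Theta(\sqrt{\log n})}$, which is far too large to union-bound over all pairs, and you give no argument bounding the expected total $d$-length of the fake-short edges that survive. Asserting that the matching lower bound ``indicates the analysis is tight'' is not a substitute for that argument. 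The paper closes exactly this gap with a concrete charging scheme: it works at \emph{two} ball scales simultaneously ($2^{\ell}=\Theta(\sqrt{\log n})$ and $2^{\ell^*}=\Theta(\log n)$), declares a point \emph{bad} if more than half of its distances into its level-$\ell$ ball are corrupted, shows each level-$\ell^*$ ball contains at most $\sqrt{\log n}$ bad points w.h.p.\ (\Cref{fact:MST1}), handles good pairs via \Cref{prop:MST1} and bad endpoints via \Cref{fact:MST2}, and then bounds the total charge by $O(\sqrt{\log n})\sum_i r_i^{\ell}$ using the ball-carving lower bound $\sum_i r_i^{\ell}\le 2\,w(T^*)$ (\Cref{clm:mst-cost-lb-ball-carve}). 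Crucially, this accounting requires that each ball radius be charged only $O(\sqrt{\log n})$ times, which the paper guarantees by first converting the tree to one of degree at most $5$ (\Cref{alg:modify_tree}, \Cref{prop:modifyTree}) --- a step with no analogue in your sketch.

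Beyond the gap, note that your algorithm is also quite different from the paper's and considerably more complicated than necessary. The paper does not run a multi-scale Bor\r{u}vka with certification and delayed attachments; it simply computes the exact MST of $(\cX,\td)$, applies the bounded-degree transformation, and outputs the result. All scale decompositions, heavy balls, and ball carvings appear only in the analysis, via the two-sided comparison $\expect{\tw(T)}\le w(T)+O(1)\cdot\OPT$ for any tree (\Cref{lem:mst-corrupt-leq-true}) and $w(T)\le \tw(T)+O(\sqrt{\log n})\cdot\OPT$ w.h.p.\ for any bounded-degree tree (\Cref{prop:MSTFinal}). Your Kruskal-threshold reformulation and median-based attachment machinery are not needed once one realizes the $\td$-MST itself is already near-optimal in $d$; if you want to pursue your route you would still need to supply the expectation bound above, and you would likely end up reproving the paper's charging lemmas in a more cumbersome algorithmic setting.
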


Our algorithm itself is simple: it computes the optimal MST $\Tstar_{\td}$ of the corrupted metric $(\cX,\td)$, transforms $\Tstar_{\td}$ into a bounded degree tree $\hat{T}$ (\Cref{alg:modify_tree}) such that $\tw(\hat{T}) \leq 2 \tw(\Tstar_{\td})$, and then outputs $\hat{T}$. 
The analysis of this algorithm, however, is fairly involved. To prove that $\hat{T}$ is a good approximation, it suffices to show that with good probability we have both 

\begin{enumerate}
    \item $w(\hat{T}) \leq  \tw(\hat{T}) + O( \sqrt{\log n}) w(\Tstar_{d})$, where $\Tstar_{d} $ is the optimal MST for $(\cX,d)$. 
    \item $\tw(\Tstar_{d}) \leq O(w(\Tstar_{d}))$.
\end{enumerate}

Then, using $\tw(\hat{T}) \leq 2 \tw(\Tstar_{\td}) \leq 2 \tw(\Tstar_{d})$, we can obtain

\[w(\hat{T}) \leq  \tw(\hat{T}) + O(\sqrt{\log n}) w(\Tstar_{d}) \leq  2\tw(\Tstar_{d})+ O(\sqrt{\log n}) w(\Tstar_{d})  \leq O(\sqrt{\log n}) w(\Tstar_{d}).\] 

We now give the follow description and analysis for our algorithm. To begin with, we first show the bounded-degree tree transformation as prescribed by \Cref{alg:modify_tree}. 

\begin{algorithm}
	\caption{\label{alg:modify_tree} Bounded-Degree Tree Transformation}
	\KwData{Rooted tree $T$ over a metric space $(X,d)$. }
	\KwResult{Rooted tree $\hat{T}$ with degree at most $5$ with $w_d(\hat{T}) \leq 2 w_d(T)$.}
	\textbf{Init:}  $\hat{T} = (V,\hat{E})$, where $\hat{E} = \{\}$ is an empty  edge set. \\
	
	\For{every $u \in T$}{
		Let $C_u $ be the set of children of $u$ in $T$,  and set $k = |C_u|$. \\
		\If{$k \leq 2$}{ Add directed edge $(u,v)$ to $\hat{E}$ for every $v \in C_u$ \\}
		\Else{
		Order the children $C_u = \{x_1,\dots,x_k\}$ so that $d(u,x_1)\leq d(u,x_2 )\leq \dots \leq d(u,x_k)$, and define $x_0 = u$. \\
		For each $i=1,2,\dots,k$, add the directed edge $(x_{\varphi(i)}, x_i)$ to $\hat{E}$. 
		}
		
		Remove all covered points form $V$. \\    
	}
	
\end{algorithm}

The notation of $\varphi(i)$ in \Cref{alg:modify_tree} is defined as the parent index in the \emph{binary complete tree}. Concretely, for any integer $k$, we define $H_k$ to be the unique complete binary labeled tree on $k+1$ vertices such that the level-order traversal of $H_k$ is $0,1,2,\dots,k$. In other words, $H_k$ is a complete binary tree where the zero-th level contains just the root, labeled $0$, the first level contains vertices labeled $1,2$ (left to right ordered), and second the labels $3,4,5,6$, and so on.  Define the mapping $\varphi: \mathbb{Z}_{\geq 1} \to \mathbb{Z}_{\geq 0}$ by $\varphi(i) = j$ where $j$ is the label of the parent of $i$ in the graph $H_k$ (for any $k \geq i$): namely, $\varphi(i) = \lceil i/2\rceil-1$. 

We show that the transformation in \Cref{alg:modify_tree} is always possible in \Cref{prop:modifyTree}.

\begin{proposition}\label{prop:modifyTree}
	Fix any $n$-point metric space $(\cX,d)$ and any spanning tree $T$ of $\cX$. Then the spanning tree $\hat{T}$ produced by \Cref{alg:modify_tree} has degree at most $5$, and satisfies 
 $w(\hat{T}) \leq 2 w(T)$.
	Moreover, the algorithm can be run in $O(n)$ time. 
\end{proposition}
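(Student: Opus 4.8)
The plan is to verify three properties of the output $\hat{T}$ of \Cref{alg:modify_tree} — it is a spanning tree, it has maximum degree at most $5$, and $w(\hat{T})\le 2w(T)$ — and then account for the running time. The organizing idea is to charge every edge of $\hat{T}$ (and of $T$) to the vertex whose iteration of the \textbf{for}-loop is responsible for it. For $u\in T$ with children $C_u=\{x_1,\dots,x_k\}$ (and $x_0=u$), write $W_u=\sum_{i=1}^{k} d(u,x_i)$; since $T$ is a tree, every edge of $T$ is the parent edge of a unique non-root vertex, so $w(T)=\sum_{u} W_u$. I will show that the edges added during $u$'s iteration have total weight at most $2W_u$, and that $u$'s iteration contributes a bounded amount to the degrees. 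It is also convenient to recognize $H_k$ as the implicit tree of a $0$-indexed heap array: $\varphi(i)=\lceil i/2\rceil-1$ is the parent of $i$, and the children of $i$ in $H_k$ are $2i+1$ and $2i+2$.

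\textbf{Spanning tree and degree.} First, every non-root vertex $v$ receives exactly one incoming edge over the whole run: when its $T$-parent $p$ is processed, either $p$ has at most two children and only the edge $(p,v)$ is added, or $p$ has $k\ge 3$ children and $v=x_i$ for a unique $i\in[k]$, so exactly the edge $(x_{\varphi(i)},x_i)$ is added with $v$ as its head. Hence $|\hat{E}|=n-1$. Connectivity follows by induction on the $T$-depth of $v$: within $p$'s gadget, iterating $\varphi$ strictly decreases the index (as $\varphi(i)<i$ for $i\ge 1$) and reaches $0$, so $x_i$ is connected to $x_0=p$; combined with the inductive hypothesis for $p$, this shows $v$ is connected to the root. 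With $n-1$ edges and connectivity, $\hat{T}$ is a spanning tree. For the degree, a fixed vertex $v$ is touched only in its own iteration and in its $T$-parent's iteration. In its parent's iteration it gains at most $3$ incident edges (its one gadget-parent edge, plus at most two gadget-children $x_{2i+1},x_{2i+2}$), or just one if the parent has $\le 2$ children; in its own iteration it gains at most $2$ incident edges (its $T$-children, if $\le 2$, or $x_1,x_2$ if it is a gadget root). So $\deg_{\hat{T}}(v)\le 5$, and $\le 2$ for the root of $T$.

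\textbf{Weight bound and running time.} Fix $u$ with children $x_1,\dots,x_k$. If $k\le 2$, the edges added are exactly $\{(u,x_i)\}_{i=1}^{k}$, of total weight $W_u\le 2W_u$. If $k\ge 3$, the added edges are $(x_{\varphi(i)},x_i)$ for $i=1,\dots,k$, and by the triangle inequality
\[
\sum_{i=1}^{k} d(x_{\varphi(i)},x_i)\ \le\ \sum_{i=1}^{k}\big(d(u,x_{\varphi(i)})+d(u,x_i)\big).
\]
Here $\varphi(1)=\varphi(2)=0$, so $d(u,x_{\varphi(i)})=0$ for $i\in\{1,2\}$; and for $i\ge 3$ we have $\varphi(i)\in\{1,\dots,i-1\}$, so because the children are ordered by distance to $u$, $d(u,x_{\varphi(i)})\le d(u,x_i)$. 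Thus the right-hand side is at most $d(u,x_1)+d(u,x_2)+2\sum_{i\ge 3}d(u,x_i)\le 2W_u$. Summing over all $u$ gives $w(\hat{T})\le \sum_u 2W_u = 2w(T)$. For the running time, each iteration does $O(|C_u|)$ work plus the ordering of $C_u$; a comparison sort costs $O(|C_u|\log|C_u|)$, but to obtain $O(n)$ it suffices to build a binary min-heap on $C_u$ keyed by $d(u,\cdot)$ in $O(|C_u|)$ time, since the weight analysis only uses $d(u,x_{\varphi(i)})\le d(u,x_i)$ along each parent--child pair of $H_k$, which is exactly the min-heap property. Summing $\sum_u O(|C_u|)=O(n)$.

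\textbf{Main obstacle.} The delicate point is pinning down the constant $2$ rather than $3$ in the weight bound: the naive triangle-inequality count gives each $d(u,x_j)$ a coefficient of up to $3$ (once as an endpoint, and twice as the $H_k$-parent of its two children), and it is precisely the combination of two observations — that the two closest children $x_1,x_2$ attach directly to $u$ with coefficient $1$, and that the sorted (or heap) order makes every other gadget-parent no farther from $u$ than its gadget-child — that collapses this to $2$. The spanning-tree and degree claims, and the $O(n)$ time, are then routine once $H_k$ is viewed as a heap array.
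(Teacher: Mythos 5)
Your proof is correct and follows essentially the same route as the paper's: the degree count is identical, and the weight bound rests on the same triangle inequality through the gadget root $u$ combined with the distance ordering of the children — the paper just states it as the per-edge inequality $d(u,\hat{\pi}(u))\le 2\,d(u,\pi(u))$ rather than your aggregated per-gadget charge, which are equivalent here. Your observation that a linear-time min-heap build suffices (since the analysis only needs $d(u,x_{\varphi(i)})\le d(u,x_i)$ along $H_k$ parent–child pairs) is a nice touch that actually justifies the $O(n)$ runtime more carefully than the paper, which would incur $O(n\log n)$ with a full comparison sort.
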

\begin{proof} The runtime of the algorithm is straightforward, so we analyze the other two claims. 
	For any vertex $u$, let $\pi(u)$ be its parent in $T$.
	We first observe that the tree $\hat{T}$ produced has degree at most $5$.  To see this, fix any node $u$, and note that edges are added adjacent to $u$ only when the for loop is called on $u$ and on the parent $\pi(u)$. In the case it is called on $u$, at most two children (out-edges) are added to $u$., and when called on the parent, again at most two children and one parent (in-edge)  are added to $u$.  
	Interpreting the in-edge as a parent and out-edges as children, we have that $\hat{T}$ is a rooted tree with the same root as $T$, where each node has at most $4$ children. We can then define $\hat{\pi}(u)$ to be the parent of $u$ in $\hat{T}$. 
	
	Next, we analyze the cost of the tree. Observe that $w_d(T) = \sum_{u \in X} d(u, \pi(u))$. Thus it suffices to show that 
	\begin{equation}\label{eqn:MST}
			  d(u,\hat{\pi}(u))\leq 2d(u, \pi(u))
	\end{equation}
for any non-root node $u$.  To see this, note that the parent $ \hat{\pi}(u)$ is set when the for loop is called on the parent $v=\pi(u)$ of $u$. In this case, we order the children  $C_v = \{x_1,\dots,x_k\}$ so that  $d(v,x_1)\leq d(v,x_2 )\leq \dots \leq d(v,x_k)$, where $ u = x_i$ for some $i \in \{1,\dots,k\}$. First note that if $k\leq 2$, then we have $\hat{\pi}(u) = v = \pi(u)$, so \eqref{eqn:MST} holds trivially. Otherwise, we have $\hat{\pi}(u) = x_j$ for some $j < i$ (interpreting $x_0 = v$). By the ordering, and employing the triangle inequality, we have 
\begin{equation}
\begin{split}
	d(u,\hat{\pi}(u) )  &= d(x_i,x_j) \\
	& \leq d(x_i,v) + d(x_j,v) \\
	&\leq 2 d(x_i,v) \\
	&= 2d(u, \pi(u)),
\end{split}
\end{equation}
	which completes the proof.	
\end{proof}



Before proceeding with the analysis of our algorithm, for simplicity, we assume that all distances are unique, and we justify this assumption as follows.

\begin{fact}\label{fact:uniquedist}
    Let $\cA$ be any algorithm that satisfies the correctness guarantees of \Cref{thm:mst-alg-metric-case} under the assumption that the set of distance $\{d(x,y)\}_{(x,y) \in \binom{\cX}{2}}$ are unique. Then there is an algorithm $\cA'$ that satisfies the correctness guarantees of \Cref{thm:mst-alg-metric-case} without this assumption. 
\end{fact}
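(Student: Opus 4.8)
The plan is to reduce to the distinct-distance case through an infinitesimal symmetry-breaking perturbation. Fix once and for all an injection $\rho:\binom{\cX}{2}\to(0,1]$ (e.g., the lexicographic rank of a pair, rescaled), and let $\cA'$ be the algorithm $\cA$ equipped with the following tie-breaking rule: whenever $\cA$ compares two weak-oracle distances that happen to be equal, $\cA'$ breaks the tie using $\rho$; equivalently, $\cA'$ runs $\cA$ on the perturbed weak distances $\td(x,y)+\eps\,\rho(x,y)$ in the limit $\eps\to0^{+}$. Since $\cA$ — like the algorithm of \Cref{thm:mst-alg-metric-case} — accesses the weak distances only through pairwise comparisons, this is well defined, and $\cA'$ issues exactly the same oracle queries as $\cA$, so it runs in $O(n^{2})$ time. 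It remains to show that $\cA'$ meets the guarantee of \Cref{thm:mst-alg-metric-case} on an arbitrary instance $(\cX,d,\td)$, where now $(\cX,d)$ may have repeated distances (if it does not, there is nothing to prove).

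The key ingredient is a companion-instance construction. Let $\mu>0$ be the smallest nonzero difference between two pairwise distances of the (fixed) input metric $d$, and for $t>0$ put $\eta_{t}=\min\{t/2,\mu/2\}$ and $d'_{t}(x,y)=d(x,y)+t+\eta_{t}\,\rho(x,y)$, $\td'_{t}(x,y)=\td(x,y)+t+\eta_{t}\,\rho(x,y)$. I claim $(\cX,d'_{t},\td'_{t},\cor)$ is a legal instance of the metric weak-strong oracle model with the \emph{same} corrupted set $\cor$ (hence the same corruption distribution), and with all pairwise distances of $d'_{t}$ distinct. The crucial point is the uniform additive constant: for any metric $m$ on $\cX$ and any function $0\le\psi\le t$ on $\binom{\cX}{2}$, the map $m+t+\psi$ still obeys the triangle inequality, since $m(x,z)+t+\psi(x,z)\le m(x,y)+m(y,z)+t+\psi(x,z)\le\big(m(x,y)+t+\psi(x,y)\big)+\big(m(y,z)+t+\psi(y,z)\big)$, using the triangle inequality for $m$ and $\psi(x,z)\le t$; applying this with $m=d$ and $m=\td$ and $\psi=\eta_{t}\rho$ shows both $d'_{t}$ and $\td'_{t}$ are metrics even if $d$ or $\td$ has tight triangles. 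The term $\eta_{t}\rho$ forces $d'_{t}$ to have distinct distances: two of them differ either by a nonzero amount of absolute value $<\eta_{t}$ contributed by $\rho$ (when the two $d$-values coincide) or by at least $\mu-\eta_{t}\ge\mu/2$ coming from the $d$-values. Finally $\td=d$ off $\cor$, so $\td'_{t}=d'_{t}$ off $\cor$ as the model requires, and the values of $\td'_{t}$ on $\cor$ are a legitimate adversarial choice because $\td'_{t}$ is a metric; thus $\cA$'s guarantee applies to this companion instance: $\mathbb{E}\!\left[w_{d'_{t}}(\cA(\td'_{t}))\right]\le O(\sqrt{\log n})\cdot\min_{T}w_{d'_{t}}(T)$.

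To transfer this bound, fix a realization of the randomness (the set $\cor$, the adversarial values, and $\cA$'s coins). In any comparison of two $\td'_{t}$-values the additive $t$ cancels, and once $t$ is small enough (in this realization) the comparison agrees with the comparison of the corresponding $\td$-values under the $\rho$ tie-break; hence $\cA(\td'_{t})=\cA'(\td)$ as trees for all sufficiently small $t$. Since $d\le d'_{t}$ pointwise, $w_{d}(\cA'(\td))=w_{d}(\cA(\td'_{t}))\le w_{d'_{t}}(\cA(\td'_{t}))$ for all such $t$, so $w_{d}(\cA'(\td))\le\liminf_{t\to0}w_{d'_{t}}(\cA(\td'_{t}))$ in this realization. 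Taking a sequence $t\to0$, applying Fatou's lemma (all quantities are nonnegative), then $\cA$'s guarantee on the companion instances, and finally $\min_{T}w_{d'_{t}}(T)\le\min_{T}w_{d}(T)+(n-1)(t+\eta_{t})\to\min_{T}w_{d}(T)$, yields $\mathbb{E}\!\left[w_{d}(\cA'(\td))\right]\le O(\sqrt{\log n})\cdot\min_{T}w_{d}(T)$, as required.

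The main obstacle is the companion-instance construction: one must perturb $d$ and $\td$ simultaneously while (i) remaining inside the space of metrics — a naive additive perturbation breaks the triangle inequality precisely when $d$ or $\td$ has a tight triangle, which is exactly what the uniform ``$+t$'' shift repairs — and (ii) preserving the corrupted set and the independence structure of the model, which forces the two perturbations to agree off $\cor$. A secondary nuisance is that the threshold ``$t$ small enough'' depends on the realization through the gaps of $\td$, so the proof passes to the limit rather than fixing a single $t$.
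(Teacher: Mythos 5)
Your proof is correct and rests on the same core idea as the paper's: perturb the distances infinitesimally so that they become distinct, check that the perturbed triple is still a legal instance of the metric weak--strong oracle model with the same corruption set, and argue the MST cost is essentially unchanged. The executions differ, though. The paper perturbs $\td(x,y)$ by i.i.d.\ random shifts $\eps_{x,y}\in[\eps/2,\eps]$ for one fixed, tiny $\eps$; metricity is preserved because any single shift is at most the sum of any two others, distinctness holds almost surely, and the cost of every spanning tree changes by only a $(1\pm 1/\poly(n))$ factor (using the normalization $d\geq 1$). You instead use a deterministic injection $\rho$ plus a uniform additive shift $+t$ to repair tight triangles, and pass to the limit $t\to 0^{+}$ via Fatou. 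Your version is more careful about why the companion instance is legal (same $\cor$, legitimate adversary, agreement off $\cor$) and avoids needing a quantitative lower bound on $\min_{x,y}d(x,y)$, but it buys this at the price of an extra hypothesis the Fact does not state: your $\cA'$ (the $\eps\to 0^{+}$ tie-break) is only well defined when $\cA$ is comparison-based in the weak distances. That suffices here, since the algorithm of \Cref{thm:mst-alg-metric-case} only compares $\td$-values (computing an MST of $(\cX,\td)$ and ordering children by $\td$-distance), but the paper's random-perturbation reduction applies to an arbitrary $\cA$; if you want full generality you should, like the paper, run $\cA$ on the explicitly perturbed input for a single sufficiently small $t$ rather than defining $\cA'$ as a limit.
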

\begin{proof}
    The proof is by modifying the input to $\cA$ to satisfy this. Specifically, we for every $(x,y)$, we replace the input $\td(x,y)$ with $\td(x,y) + \eps_{x,y}$, where $\eps_{x,y} \sim [\eps/2, \eps]$ is i.i.d. and uniform for an arbitrarily small value of $\epsilon$. It is easy to verify that the result is still a metric, as we always have $r_{x,y} \leq \eps \leq r_{x,z} + r_{z,y}$ for any $(x,y,z)$. We run $\cA$ on the modified distances $\tilde{d}(x,y) + r_{x,y}$, which we can interpret as coming from the modified original metric with distances $d(x,y) + r_{x,y}$. Since $\{d(x,y)\}_{(x,y) \in \binom{\cX}{2}}$ are now unique, the correctness of $\cA$ follows. Moreover, since we have changed each distance in $d(x,y)$ by at most $\eps << 1/\poly(n) \cdot \min_{x,y} d(x,y)$, it follows that the cost of any spanning tree is changed by at most a $(1-1/\poly(n))$ factor, which completes the proof. 
\end{proof}

Given \Cref{fact:uniquedist}, in what follows we will always assume that all distances are unique. We can now introduce the notion of a $\ell$-heavy ball, whose definition relies on this fact.

\begin{definition}[$\ell$-heavy ball]
\label{def:level-ell-ball}
Fix any $\ell \geq 0$. 
For any point $v\in \cX$, we define the level-$\ell$ heavy radius at the point $v$ to be the smallest radius $r=r^{\ell}_{v}$ such that the metric ball $\ldball{d}{\ell}{v}{r}$ under the distance measure $d$ contains exactly $2^{\ell}$ points in $\cX$.
We define the level-$\ell$ heavy ball at $v$ to be the metric-ball $\ldball{d}{\ell}{v}{r^\ell_v}$ under the metric $d$. 
\end{definition}
Note that the existence of a radius $r^\ell_v$ such that $\ldball{d}{\ell}{v}{r}$  contains exactly $2^{\ell}$ points is guaranteed by the uniqueness of distances.
We use the notion of $\ell$-th level heavy balls in \Cref{def:level-ell-ball} to show the following probabilistic guarantee for distance corruptions under the metric constraints.

\begin{lemma}[Probabilistic metric violation guarantee]
	\label{lem:ball-distance-preserve}
	Let $\dtilde$ be the corrupted distance of $d$ that satisfies the metric property. Fix any three points  $x,y,u\in V$ and level $\ell \geq 2$, 
	such that $x\in \ldball{d}{\ell}{u}{r^{\ell}_{u}}$.
	Then with probability at least $1-2^{- c_\delta\cdot2^\ell} $, where $c_\delta$ is a constant depending only on the corruption probability $\delta$, the following inequalities hold:
	\begin{align*}
		& \dtilde(x,y) \leq d(x,y) + 4\cdot r^{\ell}_{u}; \\
		& d(x,y) \leq \dtilde(x,y) + 4\cdot r^{\ell}_{u}.
	\end{align*}
\end{lemma}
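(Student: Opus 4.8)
The plan is to use the $2^{\ell}$ points inside the level-$\ell$ heavy ball \ldball{d}{\ell}{u}{r^{\ell}_{u}} as candidate ``anchors'' through which both $d$ and $\dtilde$ route from $x$ to $y$. Since $x$ itself lies in this ball, every other ball point $z$ satisfies $d(x,z)\leq d(x,u)+d(u,z)\leq 2r^{\ell}_{u}$. The key observation is that if for even one anchor $z$ \emph{neither} of the pairs $(z,x)$ nor $(z,y)$ was corrupted by the weak oracle, then $\dtilde(x,z)=d(x,z)\leq 2r^{\ell}_{u}$ and $\dtilde(z,y)=d(z,y)$; combining this with the triangle inequality for $\dtilde$ and for $d$ — e.g.\ $\dtilde(x,y)\leq \dtilde(x,z)+\dtilde(z,y)=d(x,z)+d(z,y)\leq 2r^{\ell}_{u}+\big(d(z,x)+d(x,y)\big)\leq d(x,y)+4r^{\ell}_{u}$, and symmetrically with the roles of $d$ and $\dtilde$ swapped — yields \emph{both} claimed inequalities at once. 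So it suffices to show that such a ``good'' anchor exists with the stated probability.

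For the probabilistic part I would first dispose of the trivial case $x=y$ (where $\dtilde(x,y)=0=d(x,y)$ since $\dtilde$ is a metric), and then delete $x$ and $y$ from the ball; as the ball has exactly $2^{\ell}$ points and $\ell\geq 2$, at least $2^{\ell}-2\geq 2^{\ell-1}$ anchors $z\notin\{x,y\}$ remain. Call $z$ \emph{good} if $\dtilde(z,x)=d(z,x)$ and $\dtilde(z,y)=d(z,y)$; each anchor is good with probability at least $(1-\delta)^{2}$. The point that needs care is independence: because $z\notin\{x,y\}$ and $x\neq y$, the pairs $(z,x),(z,y)$ over distinct anchors $z$ are pairwise distinct, so the corruption coin-flips involved are disjoint and the ``good'' events are mutually independent. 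Hence the probability that no anchor is good is at most $\big(1-(1-\delta)^{2}\big)^{2^{\ell-1}}=2^{-c_{\delta}\cdot 2^{\ell}}$, where $c_{\delta}:=\tfrac12\log_2\!\big(1/(1-(1-\delta)^{2})\big)>0$ is a constant depending only on $\delta$, positive precisely because $\delta$ is bounded away from $1$ (so $1-(1-\delta)^{2}<1$). This is exactly the claimed failure bound.

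I expect the only genuine subtlety to be this independence bookkeeping — verifying that after removing $x$ and $y$ there remain at least $2^{\ell-1}$ anchors (which is where the hypothesis $\ell\geq 2$ enters) and that the associated pairs of corruption flips do not overlap, so that the product bound on the failure probability is legitimate. Everything else is a short sequence of triangle-inequality steps, which I would write out explicitly on both sides but keep brief.
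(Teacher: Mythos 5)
Your proposal is correct and follows essentially the same argument as the paper: find at least $2^{\ell-1}$ anchors $z$ in the heavy ball (excluding $x,y$), argue that with probability at least $1-(1-(1-\delta)^2)^{2^{\ell-1}}$ some anchor has both pairs $(x,z)$ and $(z,y)$ uncorrupted, and then run the two triangle-inequality chains. Your extra care about the $x=y$ case and the independence of the per-anchor events is a fine (and slightly more explicit) version of what the paper does implicitly.
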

\begin{proof}
 For any point $z \in \lball{\ell}{u}{r^{\ell}_{u}} \setminus \{x,y\}$, let $\cE_{z}$ be the event that both $\dtilde(y, z) = d(y,z)$ and $\dtilde(x, z) = d(x,z)$; i.e., neither pair is corrupted. Then the probability that no such $\cE_z$ holds for any $z$ is at most
	\begin{align*}
		\Pr\left(\bigcap_{z \in \lball{\ell}{u}{r^{\ell}_{u}} \setminus \{x,y\}}  \neg \cE_z  \right) \leq (1-(1-\delta)^2)^{ 2^{\ell-1}}\leq 2^{- c_\delta \cdot 2^\ell},
	\end{align*}
for some $c_\delta$ which is a constant depending only on $\delta$. Thus, with probability at least $1-2^{- c_\delta \cdot 2^\ell} $, there exists at least one such $z \in \lball{\ell}{u}{r^{\ell}_{u}} \setminus \{x,y\}$ with $\dtilde(y, z) = d(y,z)$ and $\dtilde(x, z) = d(x,z)$. Condition on this event, and fix it $z$ now.  
	We have
	\begin{align*}
		\dtilde(x,y) &\leq \dtilde(x,z) + \dtilde(z,y) \tag{triangle inequaliy for $\dtilde$}\\
		&=d(x,z)+d(z,y) \tag{the event $\cE_z$}\\
		&\leq  2 d(x,z) + d(x,y) \tag{triangle inequality for $d$}\\
		&\leq d(x,y) + 4 \cdot r^{\ell}_{u}. \tag{$x,z \in \lball{\ell}{u}{r^{\ell}_{u}}$}
	\end{align*}
 
	Similarly, for the second inequality, we have
	\begin{align*}
		d(x,y) &\leq d(x,z) + d(z,y) \tag{triangle inequaliy for $d$}\\
		&= d(x,z)+\dtilde(z,y) \tag{the event $\cE_z$}\\
		&\leq 2 r^{\ell}_{u} + \dtilde(x,z) + \dtilde(x,y) \tag{$d(x,z)\leq r_{v}$ and triangle inequality for $\dtilde$}\\
		&= 2 r^{\ell}_{u} + d(x,z) + \dtilde(x,y) \tag{the event $\cE_z$}\\
		&\leq \dtilde(x,y) + 4 \cdot r^{\ell}_{u}, \tag{$d(x,z)\leq r_{v}$ }
	\end{align*}
	as desired.
\end{proof}

We now use \Cref{lem:ball-distance-preserve} to prove the approximation guarantee for the MST. We first define a partition of the metric space $\cX$ via a ball-carving. Note that the following procedure is not algorithmic, and is only used in the analysis. Also, in what follows, recall that we define $\Tstar_{d}$ and $\Tstar_{\dtilde}$ be the MST under the metric $d$ and $\dtilde$ respectively. 

\begin{algorithm}[!htbp]
	\caption{\label{alg:thought-process-mst} Level-$\ell$ Heavy Ball Carving }
\KwData{Set of points $V$, integer $\ell \geq 1$}
\textbf{Output:} Set of metric balls $B_1,\dots,B_k$ covering $V$, and partition $S_1,\dots,S_k$ of $V$ such that $S_i \subseteq B_i$ for all $i \in [k]$. 
\\ \vspace{1 em}
Initialize $i=1$ and $X=V$. \\
\While{$X\neq \emptyset$}{
	$x_i = \argmax_{y \in X} r_y^\ell$. \\
	Set  $r^{\ell}_{i}=r^{\ell}_{v}$,  $B_i = \lball{\ell}{x_i}{r_{i}^\ell}$, and $S^{\ell}_{i}=B_i \cap X$. \\
	$X \leftarrow X \setminus S^{\ell}_{i}$\\
	$i \leftarrow i+1$
	
}

\end{algorithm}

\begin{claim}
\label{clm:mst-cost-lb-ball-carve} Fix any $1 \leq \ell \leq \log n$, and 
let $\Tstar_{d}$ be the minimum spanning tree of $G=(V,E)$ under distance $d$. Then we have 
\begin{align*}
	w_{d}(\Tstar_{d}) \geq \frac{1}{2}\cdot \sum_{i} r^{\ell}_{i},
\end{align*}
where $\{r^{\ell}_{i}\}$ are the radii of the level-$\ell$ ball-carving from \Cref{alg:thought-process-mst}.
\end{claim}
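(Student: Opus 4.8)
The plan is to lower-bound $w_d(\Tstar_d)$ by exhibiting, for each ball $B_i$ produced by \Cref{alg:thought-process-mst}, a substantial amount of ``tree length'' that can be charged to the radius $r_i^\ell$, in a way that no edge of $\Tstar_d$ is overcharged by more than a factor of $2$. First I would recall the greedy order in which the balls are carved: $r_1^\ell \geq r_2^\ell \geq \dots \geq r_k^\ell$, since at each step we pick the point of $X$ with the largest level-$\ell$ heavy radius. The key structural observation is that the center $x_i$ of $B_i$ has at least $2^\ell$ points of $\cX$ within distance $r_i^\ell$ of it (by \Cref{def:level-ell-ball}), and since $\ell \geq 1$ this means $B_i$ contains at least two points; in particular $x_i$ is not isolated in $\Tstar_d$, and the subtree structure of $\Tstar_d$ restricted to the vicinity of $x_i$ must ``pay'' to connect these points.

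The cleanest way to carry this out is as follows. For each $i$, consider the point $x_i$ and let $e_i$ be an edge of $\Tstar_d$ incident to $x_i$ (or, more robustly, the shortest edge on the tree-path leaving the ball $B_i$ — I would pick whichever is easier to make disjoint across different $i$). I would argue that this edge, or a portion of the tree near $x_i$, has length $\Omega(r_i^\ell)$: heuristically, because $x_i$ was chosen \emph{before} any point it dominates was removed, the ball $B_i$ at the time of carving is ``fresh,'' and $\Tstar_d$ must spend at least order $r_i^\ell$ of length in the region associated to $B_i$ — either within $B_i$ to connect its $\geq 2$ points (when the far point inside $B_i$ is at distance close to $r_i^\ell$), or crossing its boundary. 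Then the factor $\tfrac12$ in the claim is exactly the bound one pays for the fact that a single tree edge might be charged by two different balls (once from each endpoint's region), so summing the per-ball contributions gives $w_d(\Tstar_d) \geq \tfrac12 \sum_i r_i^\ell$. Concretely I would likely charge to each $S_i^\ell$ the edge of $\Tstar_d$ that first leaves $S_i^\ell$ along the tree, use that the other endpoint of this edge lies outside $B_i$ hence at distance $> r_i^\ell$ from $x_i$ — wait, that is not immediate since $S_i^\ell$ can be a strict subset of $B_i$; so more carefully I would use the MST/Kruskal exchange property: in $\Tstar_d$, the $\geq 2$ points of $B_i$ are connected, so there is a path of total $d$-length at least the diameter-type quantity inside, and distinct balls are vertex-disjoint so these path-lengths can at worst double-count individual edges.

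The main obstacle I anticipate is making the charging argument rigorous without double-counting by more than a factor $2$ — in particular, dealing with the fact that $S_i^\ell = B_i \cap X$ is only a \emph{piece} of the ball $B_i$, so points of $B_i$ may have already been assigned to an earlier (larger-radius) ball. I would handle this by doing the charging at the level of the sets $S_i^\ell$, which genuinely partition $V$, and using the greedy/monotonicity of radii to ensure that when we charge the ``exit edge'' of $S_i^\ell$ to $r_i^\ell$, the relevant endpoint distances are controlled. A clean alternative, if the direct exit-edge argument is fiddly, is to invoke a standard fact that for any metric and any $\ell$, the quantity $\sum_i r_i^\ell$ from such a greedy level-$\ell$ carving is within a constant factor of the cost of the MST (this is essentially the relationship between MST cost and the ``moat''/ball-packing lower bound, cf. Steiner-tree LP and net-based MST estimators); the constant $2$ here is exactly the easy direction of that equivalence. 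Either way, the proof reduces to a bookkeeping lemma about charging tree edges to greedily-carved balls, and the probabilistic content of \Cref{lem:ball-distance-preserve} is \emph{not} needed for this particular claim — it will only be used in the subsequent step that compares $w_d$ and $w_{\dtilde}$.
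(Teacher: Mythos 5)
Your primary charging scheme has a genuine gap. You propose to find, for each carved ball $B_i$, a single edge of $\Tstar_d$ (incident to $x_i$, or the ``exit edge'' of $S_i^\ell$) of length $\Omega(r_i^\ell)$, and to absorb double-counting into the factor $\tfrac12$. But no such edge need exist: $x_i$ can have a neighbor at distance $\eps \ll r_i^\ell$, and the MST can reach the far points of $B_i$ (and leave the ball entirely) through a chain of many short edges, so neither the edge incident to $x_i$ nor the first edge crossing the boundary of $S_i^\ell$ or $B_i$ is long. Relatedly, your assertion that ``distinct balls are vertex-disjoint'' is false for the full-radius balls $B_i$ — only the sets $S_i^\ell = B_i \cap X$ partition $V$; the balls $B_i$ themselves can overlap heavily. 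And the factor $\tfrac12$ in the claim does not come from an edge being charged by two balls.

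The argument that actually works — and the one the paper uses — is the moat/ball-packing bound you mention only as a fallback: shrink each ball to radius $r_i^\ell/2$. The greedy max-radius-first selection guarantees these \emph{half-radius} balls are pairwise disjoint (if $\ball{x_i}{r_i^\ell/2}$ and $\ball{x_j}{r_j^\ell/2}$ intersected with $i<j$, then $d(x_i,x_j)\le r_i^\ell$, so $x_j$ would have been removed when $S_i^\ell$ was carved). Each half-ball contains $x_i$ and strictly fewer than $2^\ell\le n$ points, so the connected tree $\Tstar_d$ must accumulate length at least $r_i^\ell/2$ inside it to travel from the center out; disjointness makes these contributions additive, giving $w_d(\Tstar_d)\ge \tfrac12\sum_i r_i^\ell$. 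So the factor $\tfrac12$ is the price of shrinking for disjointness, not of edge double-counting. Your instinct that the probabilistic content of \Cref{lem:ball-distance-preserve} is irrelevant here is correct, but to make the proof complete you should discard the per-edge charging and carry out the disjoint-half-balls argument directly.
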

\begin{proof}
To avoid redundancy of notation, we drop the superscript of $\ell$ since the proofs on every $\ell$ are the same. We claim that the balls of radius $\ball{x_i}{\frac{r_i}{2}}$ are disjoint (where $x_{i}$ is the selected center of the ball of the $i$-th iteration). To see this, note that for a fixed $i$, if $\ball{x_i}{\frac{r_i}{2}}$ contains a point of $z\in \ball{x_{j}}{r_j}$ for $j>i$, then the ball $\ball{x_i}{r_i}$ should have contained $x_{j}$ (since $r_{j}<r_{i}$), and $x_{j}$ should not have been selected during the ball carving process, which forms a contradiction. Similarly, if $\ball{x_i}{\frac{r_i}{2}}$ contains a point of $z\in \ball{x_{k}}{r_k}$ for $k<i$, then the ball $\ball{x_k}{r_k}$ should have contained $x_{i}$, and $x_{i}$ should not have been selected during the ball carving process. Finally, since the points in each $\ball{x_i}{\frac{r_i}{2}}$ are disjoint, the minimum spanning tree must pay a cost to travel from the boundary to the center of each disjoint ball, which pays at least $\frac{1}{2}\cdot \sum_{i} r_{i}$ in cost.
\end{proof}
Besides lower-bounding the MST cost as in \Cref{clm:mst-cost-lb-ball-carve}, we present two technical steps toward the proof of \Cref{thm:mst-alg-metric-case}.
\begin{lemma}
\label{lem:mst-corrupt-leq-true}
Let $T$ be any spanning tree of the set of points $\cX$. Then we have
\begin{align*}
	\expect{w_{\dtilde}(T)} \leq w_{d}(T) + O(1)\cdot w_{d}(\Tstar_{d}),
\end{align*}
where the expectation is over the randomness over which distances are corrupted. 
\end{lemma}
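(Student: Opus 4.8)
The plan is to bound $\expect{\td(x,y)}$ separately for each edge of the \emph{fixed} tree $T$ and then add up. Since $\td$ is a metric we have $\td(x,y)\geq 0$, so it suffices to control the overshoot $(\td(x,y)-d(x,y))_{+}:=\max\{\td(x,y)-d(x,y),0\}$: indeed
\[
\expect{w_{\td}(T)} = \sum_{(x,y)\in T}\Big(d(x,y)+\expect{\td(x,y)-d(x,y)}\Big) \leq w_{d}(T)+\sum_{(x,y)\in T}\expect{(\td(x,y)-d(x,y))_{+}}.
\]
To avoid double counting when summing, root $T$ arbitrarily and charge the edge $(u,\pi(u))$ to its child endpoint $u$, so that each vertex is charged at most once.

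Next I would prove a per-edge bound. Fix a non-root vertex $u$ with parent $v=\pi(u)$. For every level $\ell\geq 2$, apply \Cref{lem:ball-distance-preserve} with the triple $(u,v,u)$ — the containment $u\in \ldball{d}{\ell}{u}{r^{\ell}_{u}}$ is trivial — to obtain that, with probability at least $1-2^{-c_\delta 2^{\ell}}$, the inequality $\td(u,v)\leq d(u,v)+4\,r^{\ell}_{u}$ holds; call this good event $\cE_\ell$. Since $r^{\ell}_{u}$ is non-decreasing in $\ell$, these events are nested, $\cE_{2}\subseteq\cE_{3}\subseteq\cdots$. Let $L$ be the smallest $\ell\in\{2,\dots,\lfloor\log_{2}n\rfloor\}$ for which $\cE_\ell$ holds, and $L=\infty$ if none does. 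On $\{L=\ell\}$ we have $(\td(u,v)-d(u,v))_{+}\leq 4\,r^{\ell}_{u}$, and $\{L=\ell\}\subseteq\neg\cE_{\ell-1}$ for $\ell\geq 3$, so $\Pr(L=\ell)\leq 2^{-c_\delta 2^{\ell-1}}$. Hence
\[
\expect{(\td(u,v)-d(u,v))_{+}} \leq 4\,r^{2}_{u} + \sum_{\ell\geq 3} 4\cdot 2^{-c_\delta 2^{\ell-1}}\,r^{\ell}_{u} + \expect{\td(u,v)\cdot\mathbbm{1}[L=\infty]}.
\]

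Now I would sum over the tree. For each level $\ell$, run the level-$\ell$ ball carving of \Cref{alg:thought-process-mst}, obtaining balls $B^{\ell}_{i}$ of radii $r^{\ell}_{i}$ and a partition $S^{\ell}_{1},S^{\ell}_{2},\dots$ of $\cX$ with $S^{\ell}_{i}\subseteq B^{\ell}_{i}$ and $\card{S^{\ell}_{i}}\leq 2^{\ell}$. Because the carving always removes the point of largest remaining radius, every $u\in S^{\ell}_{i}$ satisfies $r^{\ell}_{u}\leq r^{\ell}_{i}$, so $\sum_{u\in\cX}r^{\ell}_{u}=\sum_{i}\sum_{u\in S^{\ell}_{i}}r^{\ell}_{u}\leq 2^{\ell}\sum_{i}r^{\ell}_{i}\leq 2^{\ell+1}\,w_{d}(\Tstar_{d})$, the last step being \Cref{clm:mst-cost-lb-ball-carve}. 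Plugging this in and summing the per-edge bound over all non-root $u$,
\[
\sum_{u}\expect{(\td(u,\pi(u))-d(u,\pi(u)))_{+}} \leq \Big(8+\sum_{\ell\geq 3}2^{\ell+1}\cdot 2^{-c_\delta 2^{\ell-1}}\Big)\,w_{d}(\Tstar_{d}) + (\text{tail}),
\]
where the series converges to an absolute constant since its terms decay doubly exponentially in $\ell$ while the prefactor $2^{\ell+1}$ grows only geometrically. For the tail, the event $L=\infty$ has probability $2^{-\Omega(n)}$ per edge; using that the weak distances are polynomially bounded and that $w_{d}(\Tstar_{d})\geq n-1$ (every distance is at least $1$), the total tail over the $n-1$ edges is $\poly(n)\cdot 2^{-\Omega(n)}=o(w_{d}(\Tstar_{d}))$. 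Combined with the first display this gives $\expect{w_{\td}(T)}\leq w_{d}(T)+O(1)\cdot w_{d}(\Tstar_{d})$.

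The step I expect to be the main obstacle is extracting the $O(1)$ factor rather than $O(\log n)$. A cruder argument that fixes a single level $\ell^{\star}\approx\log\log n$ working simultaneously for all edges with high probability would incur an additive $O(2^{\ell^{\star}})\cdot w_{d}(\Tstar_{d})=O(\log n)\cdot w_{d}(\Tstar_{d})$ error. The per-edge, per-level accounting above is what saves us: the error $4\,r^{\ell}_{u}$ at level $\ell$ is paid only with the doubly-exponentially small probability $2^{-\Omega(2^{\ell})}$, and this factor beats the $2^{\ell}$ blow-up that arises from aggregating radii over the level-$\ell$ carving, leaving a convergent series. A secondary nuisance is the tail event, which is why a (mild, standard) polynomial bound on the corrupted distances is invoked to absorb it.
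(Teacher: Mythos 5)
Your proof is correct and follows essentially the same route as the paper's: a per-edge, per-level charging argument in which the failure probability $2^{-c_\delta 2^{\ell-1}}$ from \Cref{lem:ball-distance-preserve} beats the $2^{\ell}$ aggregation blow-up, with \Cref{clm:mst-cost-lb-ball-carve} converting $\sum_i r_i^\ell$ into $O(w_d(\Tstar_d))$ per level; the only cosmetic difference is that you center the level-$\ell$ ball at the child vertex and relate $r_u^\ell \le r_i^\ell$ via the carving, whereas the paper charges directly to the carving ball containing the child. Your explicit treatment of the $L=\infty$ tail is actually more careful than the paper's (which silently conditions on a high-probability event), though note that the polynomial bound on $\td$ you invoke there is not guaranteed by the model — corrupted weak-oracle distances may be arbitrarily large — so that step, like the paper's omission, would need an extra argument to be fully airtight.
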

\begin{proof}
We root the tree $T$ arbitrarily and define the charging scheme as follows. First note, by  Lemma  \Cref{lem:ball-distance-preserve}, for any pair $(x,y)$, the event in Lemma  \Cref{lem:ball-distance-preserve} holds with probability $1-1/\poly(n)$ for at least one value of $\ell$ with $\ell = O(\log n)$. It follows by a union bound over $O(n^2)$ pairs that, with high probability, all pairs $(x,y)$ satisfy the random event in Lemma  \Cref{lem:ball-distance-preserve} for some  $\ell = O(\log n)$. 

\begin{tbox}
\underline{A Charging Scheme.} \\
 For each tree edge $(u,v)\in T$:
	\begin{enumerate}
		\item Suppose w.l.o.g. that $v$ is the child node. Let $\ell$ be the smallest integer such that the ball $\lball{\ell}{x_{i}}{r^{\ell}_{i}}$ satisfies the following properties:
		\begin{itemize}
			\item $v$ is included in $\lball{\ell}{x_{i}}{r^{\ell}_{i}}$; and
			\item $\dtilde(u,v) \leq d(u,v) + 4\cdot r^{\ell}_{i}$. 
		\end{itemize}
	\item Distribute a charge of $4 \cdot r^{\ell}_{i}$ to $\lball{\ell}{x_{i}}{r^{\ell}_{i}}$
	\end{enumerate}
\end{tbox}

It is straightforward that $\sum_{(u,v) \in T} \td(u,v) \leq \sum_{(u,v) \in T} \td(u,v) + C$, where $C$ is the sum of all charges distributed to balls in the above process. Thus it suffices to upper bound $C$. To this end, we bound the expected number of times for a given ball at level $\ell$ to be charged. Note that such a ball $\lball{\ell}{x_{i}}{r^{\ell}_{i}}$ can be charged at most once for each of the at most $2^\ell$ points $x$ contained in that ball. Moreover, to be charged by the point $v$, it must be that we did not have $\dtilde(u,v) \leq d(u,v) + 4\cdot r^{\ell-1}_{i}$, which occurs with probaiblity at most $2^{- c_\delta 2^{\ell-1}}$ by  \Cref{lem:ball-distance-preserve}.  Define $X_i^\ell$ to be the number of times that $\lball{\ell}{x_{i}}{r^{\ell}_{i}}$ is charged. Then, letting $X^\ell(p)$ be the event that $\ell$ is the level at which the point $p$ is charged in the above scheme, we have
\begin{align*}
	\expect{X_i^\ell} &\leq \sum_{v\in \lball{\ell}{x_{i}}{r^{\ell}_{i}}} \expect{X^\ell(v)}\\
 &\leq\sum_{v\in \lball{\ell}{x_{i}}{r^{\ell}_{i}}} 2^{- c_\delta 2^{\ell-1}}\\
	&\leq2^\ell 2^{- c_\delta 2^{\ell-1}}\leq \frac{c_\delta'}{2^{\ell}}
\end{align*}
for some $c_\delta' \geq 0 $ which is another constant.  
Thus, we can bound the total cost of the charging scheme by 
\begin{align*}
\sum_{\ell=1}^{\log{n}} \sum_{i}	\expect{ 4 r_i^\ell X_i^\ell}\leq \sum_{\ell=1}^{\log{n}} \sum_{i} 4  \frac{c_\delta'}{2^{\ell}}  \cdot r^{\ell}_{i} \leq \sum_{\ell=1}^{\log{n}} \frac{8}{2^{\ell}} c_\delta' w_d(T^*_d) = 16 c_\delta' w_d(T^*_d),
\end{align*}
where the last inequality follows from \Cref{clm:mst-cost-lb-ball-carve} and the fact that the sum is geometric. 
Putting these bounds together, we have
\begin{align*}
	\expect{w_{\dtilde}(T)} &\leq \sum_{(u,v)\in T}d(u,v) +\sum_{\ell=1}^{\log{n}} \sum_{i}	\expect{ 4 r_i^\ell X_i^\ell}\\
		&\leq w_{d}(T) + O(1) \cdot w_{d}(\Tstar_{d}).
\end{align*}
as desired.
\end{proof}


\paragraph{Lower Bounding $\tw(T)$ for any tree $T$}
We now prove an inequality in the reverse direction, demonstrating that, with high probability over the choice of corrupted distances, for any spanning tree $T$ the cost of $\tw(T)$ is not too small.  We begin with the following definition. In what follows, we fix $\alpha = \Theta(\sqrt{\log n})$ with a sufficiently large constant.  For any $\ell$, we will write $B_1^\ell,B_2^\ell,\dots,B_k^\ell$ to denote the set of balls produced by the Level-$\ell$ Heavy Ball Carving ( \Cref{alg:thought-process-mst}), and let $r_i^\ell$ denote the radius of $B_i^\ell$. Note that by construction, each ball $B_i^\ell$ contains exactly $2^\ell$ points.
In what follows, set $\beta = 1-\delta$, and note that $\beta = \Omega(1)$ is at least a constant. 

\begin{definition}
	Fix $\alpha$ as above, and fix any $x \in \cX$. Let $B_i^\ell$ be the ball in the level-$\ell$ heavy ball carving containing $x$, where $\ell = \log (\alpha)$. Then we say that $x$ is \textit{good} if at least $\beta 2^{\ell-1}$ distances in the set $\{(x,y) \; | \; y \in B_i^\ell\}$ are not corrupted. 	Call a point \textit{bad} if it is not good. 
\end{definition}

\begin{proposition}\label{prop:MST1}
	With probability $1-1/\poly(n)$, the following holds: for every pair of two good points $(x,y)$,  we have 
	\[ d(x,y) \leq \td(x,y)   + 4(r_i^\ell + r_j^\ell)\]
where $(i,j)$ is such that $x \in B_i^\ell,y \in B_j^\ell$, and $\ell = \log \alpha$.
\end{proposition}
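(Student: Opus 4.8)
For a good pair $(x,y)$, with $x\in B_i^\ell$ and $y\in B_j^\ell$ (the heavy balls from \Cref{alg:thought-process-mst} at level $\ell=\log\alpha$ to which $x,y$ are assigned), the plan is to exhibit a ``bridge'': points $z\in B_i^\ell$ and $w\in B_j^\ell$ such that all three pairs $(x,z)$, $(w,y)$, $(z,w)$ are \emph{uncorrupted}. Given such a bridge, the claimed inequality is pure triangle-inequality bookkeeping: using that $B_i^\ell$ (resp.\ $B_j^\ell$) has radius $r_i^\ell$ (resp.\ $r_j^\ell$), so $d(x,z)\le 2r_i^\ell$ and $d(y,w)\le 2r_j^\ell$, and using that $\td$ is a metric,
\begin{align*}
 d(x,y) &\le d(x,z)+d(z,w)+d(w,y) = \td(x,z)+\td(z,w)+\td(w,y)\\
 &\le 2d(x,z)+2d(y,w)+\td(x,y) \le \td(x,y)+4(r_i^\ell+r_j^\ell),
\end{align*}
where the middle step bounds $\td(z,w)\le \td(z,x)+\td(x,y)+\td(y,w)$ by the triangle inequality for $\td$ and then replaces $\td(z,x),\td(y,w)$ by $d(z,x),d(y,w)$ (those pairs being uncorrupted), and the last step uses $d(x,z)\le 2r_i^\ell$, $d(y,w)\le 2r_j^\ell$. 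So it suffices to show that, with probability $1-1/\poly(n)$, such a bridge exists for \emph{every} good pair simultaneously.

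\textbf{Existence of bridges.} Fix a pair $(x,y)$, and condition on the corruption statuses of all edges incident to $x$ or to $y$. This conditioning determines whether $x$ and $y$ are good, and, when both are, it determines the ``uncorrupted-neighbor'' sets $U_x=\{z\in B_i^\ell\setminus\{x\}: (x,z)\text{ uncorrupted}\}$ and $U_y=\{w\in B_j^\ell\setminus\{y\}: (y,w)\text{ uncorrupted}\}$, which by the definition of good satisfy $|U_x|,|U_y|\ge \beta 2^{\ell-1}=\Omega(\alpha)$ (recall $B_i^\ell$ contains exactly $2^\ell$ points). Now consider the candidate bridge edges $\{z,w\}$ with $z\in U_x\setminus\{y\}$, $w\in U_y\setminus\{x\}$, $z\neq w$: since $z,w\notin\{x,y\}$, none of these edges is incident to $x$ or $y$, so their corruption statuses are independent of everything we conditioned on, and independent of one another. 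A short counting argument shows this family contains $\Omega(\beta^2\alpha^2)=\Omega(\log n)$ \emph{distinct} edges (even if $B_i^\ell$ and $B_j^\ell$ coincide or overlap, using $|U_x\cap U_y|\le\min(|U_x|,|U_y|)$), hence the probability that all of them are corrupted is at most $\delta^{\Omega(\log n)}$. As $\delta<1$ is a constant and $\alpha=\Theta(\sqrt{\log n})$ with a sufficiently large constant, this is at most $n^{-4}$, say. A union bound over the at most $n^2$ pairs $(x,y)$ then gives the claimed $1-1/\poly(n)$ guarantee; if either $x$ or $y$ is bad the proposition is vacuous for that pair.

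\textbf{Main obstacle.} The only delicate point is choosing the conditioning correctly: we need a $\sigma$-algebra that (i) already determines whether $x,y$ are good and the neighbor sets $U_x,U_y$, while (ii) still leaving the candidate bridge edges ``fresh''. Conditioning exactly on the edges incident to $x$ or $y$ achieves both, precisely because a bridge edge $\{z,w\}$ with $z,w\notin\{x,y\}$ touches neither endpoint. A secondary, purely bookkeeping issue is counting the distinct bridge edges when the two heavy balls overlap, so that the number of independent trials remains $\Omega(\log n)$ after discarding the two points $x,y$ and de-duplicating ordered pairs; this is where the factor $2^{\ell-1}$ (rather than $2^\ell$) in the definition of good gets absorbed into the constant hidden in $\alpha=\Theta(\sqrt{\log n})$.
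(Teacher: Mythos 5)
Your proposal is correct and follows essentially the same route as the paper: both arguments exhibit an uncorrupted ``bridge'' $z\in B_i^\ell$, $w\in B_j^\ell$ with $(x,z),(z,w),(w,y)\notin\cor$ and then run the identical triangle-inequality chain in $d$ and $\td$ to get the $4(r_i^\ell+r_j^\ell)$ slack. The only difference is in the bookkeeping for the high-probability existence of the bridge: the paper union-bounds over all $O(n^2)\cdot 2^{O(\alpha)}$ choices of large subsets $S_i\subset B_i^\ell$, $S_j\subset B_j^\ell$, whereas you condition on the corruption statuses of edges incident to $x$ or $y$ (which fixes $U_x,U_y$ while leaving the $\Omega(\log n)$ candidate bridge edges independent) and union-bound over the $n^2$ point pairs; both are valid.
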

\begin{proof}
	We prove that for any two balls $B_i^\ell, B_j^\ell$ (possibly with $i=j$), and any subsets $S_i \subset B_i^\ell, S_j \subset B_j^\ell$ with size $|S_i|, |S_j| \geq \beta 2^{\ell -1}$, there exists at least one $w \in S_i, z \in S_j$ such that $(w,z) \notin \cor$. Let $\cE$ denote this event, we prove that $\Prob[\cE] \geq 1- 1/\poly(n)$. Given any two sets $S_i,S_j$, there are at least $\binom{\beta 2^{\ell-1}}{2} =s = \Theta(\alpha^2)> 100  \log n/\beta$ distinct pairs $(w,z) \in S_i \times S_j$ (where we used that $\alpha$ is set with a sufficiently large constant depending on $1/\beta$). Since each pair is corrupted independently with probability $\delta$, the probability that all pairs $(w,z) \in S_i \times S_j$ are contained in $\cor$ is at most
 \[ (\delta)^{\frac{100 \log n}{(1-\delta)}} =        (1-(1-\delta))^{\frac{100 \log n}{(1-\delta)}} \leq \left(\frac{1}{2}\right)^{100 \log n} = n^{-100},\]
 where the inequality follows using the fact that $(1-x)^r \leq \frac{1}{1+rx}$ for all $x \in [-\frac{1}{r},1]$ and $r \geq 0$. We can then union bound over all $\binom{n}{2} 2^\alpha = O(n^2) \cdot  2^{O(\sqrt{log n} )} < O(n^3)$ such choices of $S_i,S_j$ to obtain the desired result with probability at least $1-n^{-95}$.

	Now conditioned on $\mathcal{E}$, we can fix any two points $x,y$ as in the statement of the proposition, where  $x \in B_i^\ell,y \in B_j^\ell$. Since $x,y$ are both good, there exist the desired sets $S_i \subset B_i^\ell, S_j \subset B_j^\ell$ with size $|S_i|, |S_j| \geq \beta 2^{\ell -1}$, such that all pairs $(x,u) \in S_i$ and $(y,v) \in S_j$ are not corrupted. By $\mathcal{E}$, we can then fix $w \in S_i$ and $z \in S_j$ such that $(w,z) \notin \cor$ is also not corrupted. We then have	
	\begin{equation}
		\begin{split}
			d(x,y)& \leq d(x,w) + d(w,z) + d(z,y) \\
			& \leq  2r_i^\ell + \td(w,z) + 2r_j^\ell \\ 
			& \leq 2(r_i^\ell + r_j^\ell) + \td(w,x) + \td(x,y) + \td(y,z) \\ 
			&= 2(r_i^\ell + r_j^\ell)+ d(w,x) + \td(x,y) + d(y,z) \\ 
				&= 4(r_i^\ell + r_j^\ell)+ \td(x,y),  \\ 
		\end{split}
	\end{equation}
which completes the proof. 
\end{proof}

We now consider how to bound the cost of edges $(x,y)$ where at least one of $x,y$ is bad. To do so, set $\ell^*$ such that $2^{\ell^*} =  c^*\log n$ with a large enough constant $c^*$, and consider the level-$\ell^*$ heavy ball carving $B_1^{\ell^*},B_2^{\ell^*},\dots,B_t^{\ell^*}$. Recall that each such ball has exactly $c^* \log n$ points. We have  the following.
\begin{fact}\label{fact:MST1}
	With probability at least $1-1/\poly(n)$, for every $i \in [t]$, the ball $B_i^{\ell^*}$ contains at most $\sqrt{\log n}$ bad points. 
\end{fact}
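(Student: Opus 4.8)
The plan is to bound, for a single ball $B_j^{\ell^*}$ of the level-$\ell^*$ heavy ball carving (recall $2^{\ell^*} = c^*\log n$, so $B_j^{\ell^*}$ has exactly $c^*\log n$ points), the probability that it contains $\sqrt{\log n}$ bad points, and then union bound over the $t \leq n$ balls. Write $\beta = 1-\delta = \Omega(1)$, $s = \sqrt{\log n}$, and recall $\alpha = \Theta(\sqrt{\log n})$ with a large constant $C$ that we are free to choose. By \Cref{def:level-ell-ball} and the definition of good/bad, a point $x$ is bad exactly when, letting $B_{i(x)}^{\log\alpha}$ be the ball of the level-$\log\alpha$ carving containing $x$, fewer than $\beta 2^{\log\alpha - 1} = \beta\alpha/2$ of the $\alpha$ pairs $P_x := \{\{x,y\} : y \in B_{i(x)}^{\log\alpha}\}$ are uncorrupted. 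The crucial structural observation is that the event ``$x$ is bad'' depends only on the corruption statuses $\{\mathbf{1}[\{x,y\} \in \cor] : y \in B_{i(x)}^{\log\alpha}\}$, and for distinct $x \neq x'$ the pair sets $P_x$ and $P_{x'}$ can share at most the single pair $\{x,x'\}$ (any common pair must contain both endpoints). Thus the badness events of different points are ``almost independent''.

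To turn this into true independence, fix a candidate set $T = \{x_1,\dots,x_s\} \subseteq B_j^{\ell^*}$ and define the shrunk pair sets $P_a' := P_{x_a} \setminus \{\{x_a,x_b\} : b \in [s], b \neq a\}$. By the observation above these are pairwise disjoint, so the events $E_a' := \{\text{fewer than } \beta\alpha/2 \text{ pairs in } P_a' \text{ are uncorrupted}\}$ are mutually independent. Since $x_a$ bad implies $E_a'$ (restricting to a subset of pairs only lowers the count of uncorrupted pairs), it suffices to bound $\prod_{a=1}^s \Pr[E_a']$. Now $|P_a'| \geq \alpha - s = \alpha(1-1/C)$, and the number of uncorrupted pairs in $P_a'$ is $\mathrm{Bin}(|P_a'|, \beta)$ with mean at least $\beta\alpha(1-1/C) > \beta\alpha/2$ once $C \geq 4$; a multiplicative Chernoff lower-tail bound then gives $\Pr[E_a'] \leq \exp(-c_\beta \alpha)$ for a constant $c_\beta > 0$ depending only on $\delta$. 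Hence $\Pr[\text{all of } x_1,\dots,x_s \text{ bad}] \leq \exp(-c_\beta \alpha s) = n^{-c_\beta C}$, using $\alpha s = \Theta(C\log n)$.

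It remains to union bound. Within $B_j^{\ell^*}$ there are $\binom{c^*\log n}{s} \leq (c^*\log n)^{\sqrt{\log n}} = n^{o(1)}$ size-$s$ subsets, and there are $t \leq n$ balls in the level-$\ell^*$ carving, so the probability that some $B_j^{\ell^*}$ contains $\sqrt{\log n}$ bad points is at most $n \cdot n^{o(1)} \cdot n^{-c_\beta C}$, which is $1/\poly(n)$ provided the constant $C$ in $\alpha = \Theta(\sqrt{\log n})$ is taken large enough (depending only on $\delta$), as permitted. The only genuine obstacle here is the dependence between badness events of nearby points; the remedy — shrinking each candidate's pair set by the at most $s$ pairs it could share with other candidates, at the cost of only a constant-factor loss in the Chernoff exponent because $\alpha = \Theta(s)$ with a tunable large constant — handles it cleanly, and the rest is a routine Chernoff-and-union-bound calculation.
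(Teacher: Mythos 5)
Your proof is correct and follows essentially the same route as the paper's: bound the probability that a single point is bad by a Chernoff bound (giving $2^{-\Theta(\alpha)} = 2^{-\Theta(\sqrt{\log n})}$), multiply over a candidate set of $\sqrt{\log n}$ points to get $n^{-\Theta(1)}$, and union bound over the $\binom{c^*\log n}{\sqrt{\log n}} = n^{o(1)}$ subsets and the $t\leq n$ balls. The one place you go beyond the paper is the disjointification of the pair sets $P_a'$: the paper simply multiplies the per-point badness probabilities as if they were independent, whereas badness events of two candidates in the same level-$\log\alpha$ ball do share the pair $\{x_a,x_b\}$; your observation that distinct $P_{x_a}, P_{x_b}$ overlap in at most one pair, so removing at most $s-1$ pairs per candidate yields genuinely independent events at only a constant-factor loss in the exponent, cleanly repairs this implicit step.
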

\begin{proof}
Note that for any point $x \in B_i^{\ell^*}$ , we have 
$$\expect{\left|\{(x,y) \; |\; y \in B_i^{\ell^*}\} \cap \cor\right|} = \delta 2^{\ell^*}$$ 
And recall that $x$ is bad if $\left|\{(x,y) \; |\; y \in B_i^{\ell^*}\} \cap \cor\right| < \delta 2^{\ell^* - 1}$. 
Thus, by Chernoff bounds, a point is bad with probability at most $2^{-\Theta(\alpha)} < 2^{-100 \sqrt{\log n}}$. Thus, the probability that any fixed set $S$ of $\sqrt{\log n}$ points is simultaniously bad is at most $ (2^{-100 \sqrt{\log n} })^{\sqrt{\log n}} = 1/n^{100}$. It follows that the probability that any set of more than $\sqrt{\log n}$ points in $B_i^{\ell^*}$ is bad is at most 
	\begin{equation}
		\begin{split}
			\Pr\left[ B_i^{\ell^*} \text{ contains at least }  \sqrt{\log n} \text{ bad points }\right]   			&\leq  \binom{c^* \log n}{\sqrt{\log n}} \cdot n^{-100}\\
			&\leq  (c^* \log n)^{\sqrt{\log n}} n^{-100}\\
			&\leq n^{-99} \\
		\end{split}
	\end{equation}
Union bounding over $t \leq n$ possible balls yields the desired result.
\end{proof}

\begin{fact}\label{fact:MST2}
	With probability at least $1-1/\poly(n)$, the following holds: for every pair $(x,y) \in \cX$, where $x \in B_i^{\ell^*}$, we have 
	\[ d(x,y) \leq \td(x,y) + 4 r_i^{\ell^*}.   \]
\end{fact}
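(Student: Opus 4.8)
The plan is to derive Fact~\ref{fact:MST2} as an almost immediate corollary of the probabilistic metric-violation guarantee of Lemma~\ref{lem:ball-distance-preserve}, instantiated at level $\ell^*$, together with a union bound. The one conceptual point to keep in mind is that the level-$\ell^*$ heavy ball carving $B_1^{\ell^*},\dots,B_t^{\ell^*}$ from Algorithm~\ref{alg:thought-process-mst}, with centers $x_1,\dots,x_t$ and radii $r_1^{\ell^*},\dots,r_t^{\ell^*}$, is a deterministic function of the true metric $(\cX,d)$ and does not depend on which distances the weak oracle corrupts. Hence I would first fix this carving, and then expose the corruption randomness; in particular each $B_i^{\ell^*}$ equals $\ldball{d}{\ell^*}{x_i}{r_i^{\ell^*}}$ and contains exactly $2^{\ell^*} = c^*\log n$ points.

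Next, I would fix an arbitrary pair $(x,y)\in\cX$ with $x\in B_i^{\ell^*}$. Since $x\in B_i^{\ell^*}=\ldball{d}{\ell^*}{x_i}{r_i^{\ell^*}}$, the hypothesis of Lemma~\ref{lem:ball-distance-preserve} holds for the triple $(x,y,x_i)$ at level $\ell^*$ (which is at least $2$ for $n$ large enough since $2^{\ell^*}=c^*\log n\geq 4$). Applying the second inequality of that lemma with $u=x_i$ and $r_u^{\ell^*}=r_i^{\ell^*}$ yields $d(x,y)\leq \td(x,y)+4\,r_i^{\ell^*}$ with probability at least $1-2^{-c_\delta 2^{\ell^*}} = 1-2^{-c_\delta c^*\log n}=1-n^{-c_\delta c^*}$, where $c_\delta=\Omega(1)$ is the constant from Lemma~\ref{lem:ball-distance-preserve}. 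Choosing the constant $c^*$ in $2^{\ell^*}=c^*\log n$ large enough that $c_\delta c^*\geq 5$ makes this failure probability at most $n^{-5}$ for the fixed pair $(x,y)$ and ball $B_i^{\ell^*}$.

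Finally, I would take a union bound over all relevant choices: at most $n$ points $x$, at most $n$ balls $B_i^{\ell^*}$ (only one in the partition actually contains $x$, but overcounting is harmless), and at most $n$ points $y$, for at most $n^3$ events in total. The overall failure probability is then at most $n^3\cdot n^{-5}=n^{-2}=1/\poly(n)$, which establishes the claim. There is essentially no real obstacle in this proof; the only care needed is the two bookkeeping points noted above, namely (i) that the ball carving is independent of the corruption randomness so that Lemma~\ref{lem:ball-distance-preserve} can be invoked with the fixed center $x_i$ and radius $r_i^{\ell^*}$, and (ii) that $c^*$ is chosen large enough relative to $c_\delta$ (equivalently, relative to $\delta$) so that the per-pair probability $n^{-c_\delta c^*}$ survives the polynomial-size union bound.
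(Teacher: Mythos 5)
Your proposal is correct and is essentially the same argument as the paper's: the paper inlines the witness argument (find $z \in B_i^{\ell^*}$ with both $(x,z)$ and $(z,y)$ uncorrupted, then chain triangle inequalities for $d$ and $\td$), which is precisely the content of Lemma~\ref{lem:ball-distance-preserve} that you invoke as a black box at level $\ell^*$ with $u = x_i$. Your bookkeeping points — that the carving is determined by $d$ alone so the lemma applies to the fixed triple $(x,y,x_i)$, and that $c^*$ must be large relative to $c_\delta$ to survive the union bound — are exactly the right things to check and match the paper's intent.
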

\begin{proof}
	Note that for any $x$, with probability at least $(1-(1-\delta))^{c^* \log n} > 1-n^{-100}$, there exists at least one $z \in B_i^{\ell^*}$ with $(x,z) \notin \cor$ and $(z ,y )\notin \cor$. Conditioned on this, we have:
	
	\begin{equation}
		\begin{split}
			d(x,y) & \leq d(x,z) + d(z,y) \\
			& \leq 2 r_i^{\ell^*} + \td(z,y) \\
				& \leq 2 r_i^{\ell^*} + \td(z,x) + \td(x,y) \\
					& \leq 4r_i^{\ell^*} +  \td(x,y). \\
		\end{split}
	\end{equation}
The fact follows after union bounding over $O(n^2)$ pairs $(x,y)$. 
\end{proof}

\begin{proposition}\label{prop:MSTFinal}
	With probability at least $1-1/\poly(n)$, the following holds:  for every spanning tree $T$ of $\cX$ with degree at most $\Delta$, we have
	\[ w(T) \leq  \tw(T) + O(\Delta \sqrt{\log n}) \min_{T'} w(T').  \]
\end{proposition}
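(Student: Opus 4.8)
The plan is to condition on the conjunction of the high-probability events in \Cref{prop:MST1}, \Cref{fact:MST1}, and \Cref{fact:MST2} (a union bound keeps the failure probability at $1/\poly(n)$), fix a spanning tree $T$ of $\cX$ of maximum degree at most $\Delta$, root it arbitrarily, and then split the edge set $E(T)$ into the set $E_{\mathrm{gg}}$ of edges whose two endpoints are both \emph{good} (with respect to level $\ell = \log\alpha$, $\alpha = \Theta(\sqrt{\log n})$, as in the Definition preceding \Cref{prop:MST1}) and the set $E_{\mathrm{b}}$ of edges with at least one \emph{bad} endpoint. For a point $x$, let $B_{\ell}(x)$ denote the part of the level-$\ell$ heavy-ball carving of \Cref{alg:thought-process-mst} containing $x$ and $\rho_{\ell}(x)$ its radius, and similarly $B_{\ell^*}(x),\rho_{\ell^*}(x)$ for the level $\ell^*$ with $2^{\ell^*}=\Theta(\log n)$ used in \Cref{fact:MST1}--\ref{fact:MST2}. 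Since $w(T)=\sum_{(u,v)\in T} d(u,v)$ and $\tw(T) = \sum_{(u,v)\in T}\td(u,v)$, it suffices to bound the $d$-length of $E_{\mathrm{gg}}$ and of $E_{\mathrm{b}}$ each by the corresponding $\td$-length plus an $O(\Delta\sqrt{\log n})\cdot w_d(\Tstar_{d})$ additive term.

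For $E_{\mathrm{gg}}$ I would apply \Cref{prop:MST1} directly: every $(x,y)\in E_{\mathrm{gg}}$ satisfies $d(x,y)\le \td(x,y)+4(\rho_\ell(x)+\rho_\ell(y))$, so $\sum_{E_{\mathrm{gg}}} d(x,y) \le \sum_{E_{\mathrm{gg}}}\td(x,y) + 4\sum_{(x,y)\in E_{\mathrm{gg}}}(\rho_\ell(x)+\rho_\ell(y))$. Because $T$ has degree at most $\Delta$, each point contributes its radius to at most $\Delta$ of these terms, so the error sum is at most $\Delta \sum_{x\in\cX}\rho_\ell(x) = \Delta\sum_i |S^{\ell}_i|\, r^\ell_i \le \Delta\cdot 2^{\ell}\sum_i r^\ell_i = \Delta\,\alpha \sum_i r^\ell_i \le 2\Delta\,\alpha\, w_d(\Tstar_{d})$, where the last step is \Cref{clm:mst-cost-lb-ball-carve} and $|S^\ell_i|\le 2^\ell=\alpha$ since each level-$\ell$ ball has exactly $2^\ell$ points. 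As $\alpha=\Theta(\sqrt{\log n})$ this is $O(\Delta\sqrt{\log n})\,w_d(\Tstar_{d})$.

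The delicate part is $E_{\mathrm{b}}$, since there may be as many as $\Omega(n/\sqrt{\log n})$ bad points and one cannot simply call their contribution negligible; this is exactly why two different carving levels appear. For each $(x,y)\in E_{\mathrm{b}}$ I would pick a bad endpoint $x$ and invoke \Cref{fact:MST2} at level $\ell^*$ to get $d(x,y)\le \td(x,y)+4\rho_{\ell^*}(x)$. Each bad point is picked for at most $\deg_T(x)\le \Delta$ edges, so the total error is at most $4\Delta\sum_{x\ \text{bad}}\rho_{\ell^*}(x) = 4\Delta\sum_i \big(\#\{\text{bad points in } S^{\ell^*}_i\}\big)\, r^{\ell^*}_i$. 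Now \Cref{fact:MST1} gives that each ball $B^{\ell^*}_i\supseteq S^{\ell^*}_i$ contains at most $\sqrt{\log n}$ bad points, so this is at most $4\Delta\sqrt{\log n}\sum_i r^{\ell^*}_i \le 8\Delta\sqrt{\log n}\,w_d(\Tstar_{d})$ by \Cref{clm:mst-cost-lb-ball-carve} once more. Adding the two bounds and using that $E_{\mathrm{gg}}$ and $E_{\mathrm{b}}$ partition $E(T)$ yields $w(T)\le \tw(T) + O(\Delta\sqrt{\log n})\,w_d(\Tstar_{d}) = \tw(T) + O(\Delta\sqrt{\log n})\min_{T'}w(T')$.

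The main obstacle, as flagged, is the bad edges: the number of bad points is far too large for a trivial argument, and the key realization is to use the fine carving level $\ell=\log\alpha$ (balls of size $\Theta(\sqrt{\log n})$) to get the tight slack $4(\rho_\ell(x)+\rho_\ell(y))$ on good--good edges, but the coarser level $\ell^*$ (balls of size $\Theta(\log n)$) for the bad endpoints, where \Cref{fact:MST1} certifies that bad points are spread out ($\le\sqrt{\log n}$ per ball) so that the radius-weighted bad-point count still telescopes into $O(\Delta\sqrt{\log n})\,w_d(\Tstar_d)$ via the MST lower bound. The remaining work is routine bookkeeping with the degree bound and geometric sums.
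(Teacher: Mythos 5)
Your proposal is correct and follows essentially the same route as the paper's proof: condition on \Cref{prop:MST1}, \Cref{fact:MST1}, and \Cref{fact:MST2}, split the tree edges by whether both endpoints are good, charge good--good edges at level $\ell=\log\alpha$ and bad-endpoint edges at level $\ell^*$, and control the radius sums via the degree bound, the $\sqrt{\log n}$ bound on bad points per ball, and \Cref{clm:mst-cost-lb-ball-carve}. The only (cosmetic) difference is that you sum radii over points while the paper counts how many times each radius can appear; the bounds coincide.
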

\begin{proof}
	We first condition on the events in \Cref{prop:MST1}, and \Cref{fact:MST1} and \Cref{fact:MST2}, which all occur with probability $1-1/\poly(n)$. For any edge $(x,y) \in T$, if both $x,y$ are good, we have
	\begin{equation}\label{eqn:MST3}
		d(x,y) \leq \td(x,y) + 4(r_i^{\ell} + r_j^\ell),
	\end{equation}	
	where $\ell = \log(\alpha)$, and $x \in B_i^\ell, y\ \in B_j^\ell$. Otherwise, if at least one of $x,y$ is bad, then fix one of the points which is bad, w.l.o.g.\ we fix $x$ which is bad. Then by \Cref{fact:MST2}, we have 
		\begin{equation}\label{eqn:MST4}
			d(x,y) \leq \td(x,y) + 4 r_\tau^{\ell^*}),
				\end{equation}			
	where $x \in B^{\ell^*}_\tau$. Now to bound the cost $\sum_{(x,y) \in T} d(x,y)$, we will bound each summation by either \Cref{eqn:MST3} or \Cref{eqn:MST4}, depending on whether both $x,y$ are good or if at least one is bad. We now bound the total cost of doing so. 	
	
	Using \Cref{fact:MST1}, we know that each ball $B^{\ell^*}_\tau$ has at most $\sqrt{\log n}$ bad points points. Moreover, this ball can only contribute a cost of $4 r_\tau^{\ell^*}$ at most $O(\Delta)$ times for each bad point in $B^{\ell^*}_\tau$. Thus, over all edges $(x,y) \in T$, the term $4 r_\tau^{\ell^*}$ appears on the RHS of the above equation at most $O(\Delta \sqrt{\log n})$ times. Similarly, for ball $B^{\ell}_j$ at level $\ell = \log \alpha$, the term $r_i^{\ell}$ can only appear in the RHS of \Cref{eqn:MST3} when considering an edge with at least one endpoint in $B^{\ell}_j$. Since $|B^\ell_j| < O(\sqrt{\log n})$, again this radius is counted at most $O(\Delta \sqrt{\log n} )$ times. It follows that	
	\begin{equation}
		\begin{split}
			\sum_{(x,y) \in T} d(x,y) &\leq \sum_{(x,y) \in T} \td(x,y) + O(\Delta \sqrt{\log n})\left(\sum_i r^\ell_i + \sum_j r^{\ell^*}_j\right) \\
			&\leq \sum_{(x,y) \in T} \td(x,y) + O(\Delta \sqrt{\log n}) \min_{T'} w(T') ,
		\end{split}
	\end{equation} 
as needed, where we used \Cref{clm:mst-cost-lb-ball-carve} in the last inequality. 
\end{proof}

We are now ready to prove \Cref{thm:mst-alg-metric-case}.
\begin{proof}[Proof of \Cref{thm:mst-alg-metric-case}]
Let $T^* = \arg \min_T w(T)$. Letting $\tT = \arg \min_T \tw(T)$, we then set the output of our algorithm to be the result  $\hat{T}$ of running \Cref{alg:modify_tree} on $\tT$ in the corrupted metric $\td$. By \Cref{prop:modifyTree}, the tree $\hat{T}$ has degree at most $5$ and $\tw(\hat{T}) \leq 2 \tw(\tT)$. Then by \Cref{prop:MSTFinal}, we have
\begin{equation}
	\begin{split}
		\ex{w(\hat{T})} \leq & \ex{\tw(\hat{T})} + O(\sqrt{\log n}) w(T^*) \\
        \leq & 2 \ex{\tw(\tT)}+ O(\sqrt{\log n}) w(T^*) \\ 
		\leq & 2 \ex{\tw(T^*)}+ O(\sqrt{\log n}) w(T^*) \\ 
		\leq & 2 w(T^*) + O(1) \cdot w(T^*) + O(\sqrt{\log n}) w(T^*) \\ 
		= & O(\sqrt{\log n}) w(T^*), \\ 
	\end{split}
\end{equation}
where in the first line we applied \Cref{prop:MSTFinal}, the second line used that $\hat{T}$ was a $2$-approximation of the optimal MST in the corrupted space $\td$, the third line used that $\tT$ is optimal for $\td$, and in the fourth line we applied \Cref{lem:mst-corrupt-leq-true}.
\end{proof}

\section{Lower Bounds}\label{sec:LB}
We give lower bounds for $k$-clustering and MST in this section. In particular, we show that
\begin{itemize}
\item For $k$-clustering, we show that any $k$-center, $k$-means, or $k$-median algorithm that provides \emph{bounded} approximation under the weak-strong oracle model requires $\Omega(k)$ strong (point) oracle queries.
\item For metric MST, we show that if we want to go below the approximation factor of $\sqrt{\log n}$, we have to make $\tilde{\Omega}(n)$ strong (point) oracle queries.
\item Finally, for non-metric MST, we show that with $o(n)$ strong (point) oracle queries, we cannot break an approximation lower bound of $\Omega(\log{n})$.
\end{itemize}
Our lower bound demonstrates that the algorithms we designed in \Cref{sec:alg-k-center,sec:alg-k-means,sec:alg-mst} are nearly tight up to $\polylog{n}$ factors, and one could not hope for algorithms that are significantly more efficient. Furthermore, by our lower bound on non-metric MST, we separate the complexity between the metric and non-metric cases.

\subsection{Lower Bounds for $k$-Clustering}
In the prior sections, we provided $O(1)$-approximation algorithms for $k$-center, $k$-means and $k$-median clustering, each using $\tilde{O}(k)$ queries to the strong oracle. A natural question is whether the strong location oracle is even necessary for these tasks: namely, is it possible to obtain a good approximation with the weak oracle alone? We demonstrate that this is not possible in a strong sense. Namely, we prove that $\Omega(k)$-strong oracle queries are necessary for any algorithm that achieves \textit{any} bounded approximation for $k$-clustering tasks. Our main result is as follows:

\begin{theorem}
\label{thm:k-clustering-lb}
Fix any positive real number $c\in \mathbb{R}^{+}$, and positive integer $k$ larger than some constant, and fix the corruption probability to be $\delta = 1/3$. Then any algorithm $\ALG$ which produces a solution for either $k$-centers, $k$-means, or $k$-medians that, with probability at least $1/2$, has cost at most $c \cdot \OPT$, (where $\OPT(d)$ is the optimal solution to the clustering task in question) must make at least $\Omega(k)$ queries to a strong (point) oracle, or at least $\Omega(k^2)$ queries to a strong distance oracle.%
\end{theorem}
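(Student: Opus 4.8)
The plan is to use a reduction from a hard distributional problem that hides a single ``planted'' piece of information among $\Theta(k)$ candidates, so that finding it requires $\Omega(k)$ strong oracle point queries. Concretely, I would construct a family of instances on $n$ points partitioned into $k$ groups $G_1,\dots,G_k$, each of size $n/k$. In the ``base'' metric, all points are at mutual distance $1$ (so the optimal $k$-clustering cost is small, e.g.\ $O(n/k)$ for $k$-median/$k$-means and $O(1)$ for $k$-center, achieved by picking one center per group). To create a hard instance, I pick a uniformly random index $i^\star \in [k]$ and a uniformly random group $G_{j^\star}$, and I ``blow up'' the distances so that the points of $G_{i^\star}$ must form their own cluster — e.g.\ place the points of $G_{i^\star}$ at pairwise distance $1$ but at distance $L$ (for a large parameter $L$ depending on the target approximation $c$) from every other point, while merging the structure of the remaining $k-1$ groups so that they can be covered by $k-1$ centers only if one recognizes that $G_{i^\star}$ is ``special.'' The key point of the construction is that the weak oracle is completely \emph{uninformative} about which group is $G_{i^\star}$: because the corruption probability is $\delta = 1/3$ and corrupted entries are set adversarially, the adversary can set the corrupted weak distances so that the weak oracle's output distribution is \emph{identical} regardless of the choice of $i^\star$. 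Hence the only way to gain information about $i^\star$ is through strong oracle queries.

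The next step is the information-theoretic core: I would argue that to distinguish which group is $G_{i^\star}$ (which is necessary to output a clustering of cost $\le c\cdot\OPT$, since misidentifying $G_{i^\star}$ forces a cost blowup of a factor $\Omega(L/1) \gg c$), an algorithm must query the strong point oracle on points from $\Omega(k)$ distinct groups. The reason: a strong point query $\SO(x)$ only reveals $d(x,y)$ relative to \emph{other queried points}; if the algorithm has queried points only in groups $G_{a_1},\dots,G_{a_m}$ with $m < k/2$, then conditioned on the query answers so far, the index $i^\star$ is still (close to) uniform over the $\ge k/2$ un-probed groups — the strong answers within probed groups are consistent with any choice of $i^\star$ among the unprobed ones, because in the base metric all those groups look identical. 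A clean way to formalize this is Yao's minimax principle: fix the hard distribution above, consider any deterministic algorithm making $< k/2$ point queries, and show by a symmetry/coupling argument that the posterior on $i^\star$ has min-entropy $\ge \log(k/2) - O(1)$, so the algorithm guesses $i^\star$ correctly (and hence outputs a good clustering) with probability $< 1/2$, contradicting the assumed success probability.

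The edge-query bound then follows as a corollary with essentially no extra work: each strong \emph{edge} query $\SO(x,y)$ reveals one pair, and to learn nontrivial information about group $G_a$ one needs an edge query with \emph{both} endpoints carrying information — but more simply, one can argue that $q$ edge queries touch at most $O(q)$ groups unless they form a dense pattern, and to ``cover'' $\Omega(k)$ groups with the bipartite-style information needed one needs $\Omega(k^2)$ edge queries; the standard way this paper uses (an edge query is weaker than a point query, and $q$ point queries simulate with $q^2$ edge queries) gives the $\Omega(k^2)$ bound directly from the $\Omega(k)$ point-query bound via a counting/padding argument. I would handle the three objectives ($k$-center, $k$-means, $k$-median) uniformly by choosing $L$ large enough that misidentification multiplies the cost by more than $c$ in every objective — for $k$-center taking $L > 2c$ suffices, and for $k$-median/$k$-means one sets $L$ large enough relative to $c$ and $n/k$, which is fine since $c$ is an arbitrary fixed constant (or even a function of $n$, as the theorem allows arbitrary $c$).

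I expect the main obstacle to be the \textbf{design of the adversarial weak-oracle coupling}: one must exhibit, for every value of the planted index $i^\star$, an assignment of the (adversarially chosen) $\delta$-fraction of corrupted weak distances such that the resulting weak-oracle output has a distribution \emph{independent of $i^\star$}. This requires the ``special'' structure around $G_{i^\star}$ to differ from the base metric in only a $\le \delta$ fraction of the pairs incident to any point — which is why the group size $n/k$ and the corruption budget $\delta = 1/3$ must be balanced carefully. A secondary subtlety is making the gap robust across all three objectives simultaneously and ensuring $\OPT$ itself is not inflated by the planted structure (so that ``$c \cdot \OPT$'' remains a meaningful, small target), which constrains how the $k-1$ non-special groups are glued together. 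Once the coupling and the gap are in place, the query lower bound is a routine adversary/Yao argument.
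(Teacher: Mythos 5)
There is a genuine gap, and it sits exactly where you flagged "the main obstacle": the adversarial coupling you need does not exist for your construction. In this model each pair is corrupted \emph{independently} with probability $\delta=1/3$, and the adversary may only alter the values of pairs that happen to be corrupted. Your planted group $G_{i^\star}$ differs from the base metric on essentially \emph{all} of its incident cross-pairs (they change from $1$ to $L$), so with overwhelming probability a constant fraction of those pairs are \emph{not} corrupted, and the weak oracle must report their true value $L$. Any observer then reads off $i^\star$ from $\td$ with zero strong-oracle queries, and the whole information-theoretic argument collapses. The paper's construction is designed precisely around this constraint: the hidden object is a \emph{single matched pair} $(x^*,y^*)$ among a set $T$ of $\Theta(k)$ indistinguishable points, so that hiding it requires corrupting only one distance — an event of probability $\delta$ per pair, and at least one matched pair is corrupted with probability $1-(2/3)^{(k-1)/2}$. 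Conditioned on that, the adversary sets $\td(x^*,y^*)=c$ and the corrupted pair is uniform over $\binom{T}{2}$, a universe of size $\Theta(k^2)$, which is what yields the $\Omega(k^2)$ edge-query bound. Your universe of $k$ candidate groups is also the wrong size for that bound.

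A second, related error is the direction of the reduction between the two query types. You propose to derive the $\Omega(k^2)$ edge-query bound \emph{from} the $\Omega(k)$ point-query bound; this implication does not hold. Since $q$ point queries can be simulated by $q^2$ edge queries, point-query algorithms are the weaker class, so the valid implication is the reverse: one proves the $\Omega(k^2)$ lower bound against edge queries (as the paper does, via the uniform pair hidden in $\binom{T}{2}$) and then concludes that any $o(k)$-point-query algorithm would yield an $o(k^2)$-edge-query algorithm, contradiction. To repair your proof you would need to (i) replace the planted group by a planted object whose footprint in $\td$ is a single pair (or at most $O(1)$ pairs), so that the corruption event hides it with constant probability, and (ii) make that object uniform over $\Theta(k^2)$ candidates and run the Yao/posterior argument against edge queries first. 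The remaining scaffolding in your write-up (Yao's principle, choosing $L\gg c$ to make misidentification catastrophic for all three objectives, handling even $k$ by padding) is consistent with what the paper does.
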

\begin{proof}
 We focus on the proof of the $k$-center case, and the lower bounds for $k$-means and $k$-median clustering follow the same construction. The construction of the hard distribution over inputs is as follows. The distribution will be over distances $d$ for a fixed set of $n$ points $\cX $. We first assume that $k$ is odd, and later generalize to the case of even $k$. Moreover, since any $s$-query point strong oracle  algorithm implies a $s^2$-query edge strong oracle algorithm, it suffices to prove a $\Omega(k^2)$-query lower bound against edge strong oracles, since this will imply a $\Omega(k)$ lower bound for point strong oracle queries.
 
\begin{mdframed}
\textbf{Construction of the ground-truth metric $(\cX,d)$.}
\smallskip
\begin{enumerate}
\item Partition $\cX$ into sets $S,O$, so that $S$ has the first $|S| = \frac{3}{2}(k-1)$ points (under some fixed ordering), and $O$ has all remaining points. 
\item Select a uniformly random subset $N \subset S$ of exactly $k-1$ points, and define $U = S \setminus N$. 
\item Fix a uniformly random perfect matching $M$ over $N$, so that $|M| = \frac{k-1}{2}$. 
\item Define the metric $d$ as follows:

\[     d(x,y) = \begin{cases}
    1 & \text{ if } (x,y) \in M \\ 
    1 & \text { if } (x,y) \in O \times O \\
    c & \text {otherwise }\\
\end{cases}\]

\end{enumerate}
\end{mdframed}
It is straightforward to verify that the construction of $d$ is a metric. We now describe how to generate the corrupted distances $\tilde{d}$ for a given draw of $d$. Specifically, the weak oracle will corrupt \textit{at most one} distance $d(x,y)$. 

Observe that the optimal $k$-centers clustering of the original metric $(\cX,d)$ has a cost of $1$, and must have exactly one center in $O$, one center chosen from each of the matched pairs $(x,y) \in M$, and one center placed at every unmatched point $y \in U$. Note that if a single one of these clusters does not have a center placed in it, the cost of the solution is at least $c$.

\begin{mdframed}
\textbf{Construction of the weak-oracle metric $\tilde{d}$}
\begin{enumerate}
\item\label{line:adversary-k-cluster} Fix an arbitrary pair $(x^*,y^*) \in M$ such that $(x^*,y^*) \in \cor$ is corrupted. If no such pair exists, set $\td = d$, otherwise:
\item Set $\td(x^*,y^*) = c$, and for all other pairs $(x,y) \in \binom{n}{2} \setminus \{(x^*,y^*)\}$, set $\td(x,y) = d(x,y)$. 
\end{enumerate}

\end{mdframed}

Let $\mathcal{E}_1$ be the event that at least one pair $(x,y) \in M$ exists such that $(x,y) \in \cor$. Note that $\prob{\cE_1} > 1-(2/3)^{(k-1)/2} = 1- 2^{-\frac{k-1}{4}}$. We now condition on this holding. Now consider the metric $\td$ produced by the weak oracle conditioned on $\cE_1$. Now consider the distances $\td$ produced by the weak oracle. They consists of the cluster $O$ of points pairwise distance $1$ apart within $O$, and distance $c$ away from all points not in $O$. It also consists of the matching $M' = M \setminus (x^*,y^*)$, where $\td(x,y) = 1$ for all $(x,y) \in M'$, and then it consists of the $k/2$  points $S \cup \{x^*,y^*\}$ which are each distance $c$ from all other points in $\cX$. Notice, however, that the pair $x^*,y^*$ that was corrupted was not known to the algorithm. Moreover, since $N$ was chosen uniformly at random, and the matching $M$ was uniformly random, if we let $T$ be the set of $|T| = k/2$ points in $\td$ that are distance $c$ from all other points, it follows that the identity of the corrupted distance $(x^*,y^*)$ is a uniformly random pair chosen from $T$. 

Now consider any sequence $s_1,s_2,\dots, \in \binom{\cX}{2}$ of adaptive, possible randomized edge strong oracle queries made by an algorithm. Since $d(x,y) = \td(x,y)$ for all pairs $x,y$ such that at least one of $x,y \notin T$, we can assume WLOG that each $s_i \in \binom{T}{2}$ (otherwise it reveals no information to the algorithm). Now for any prefix $s_1,\dots,s_i$, condition on the event $\cQ_i$ that $s_1 \neq (x^*,y^*), s_2 \neq (x^*,y^*), \dots s_i \neq (x^*,y^*)$. Conditioned on $\cQ_i$, it still holds that the single corrupted pair $(x^*,y^*)$ is still uniformly distributed over the set $\binom{T}{2} \setminus \{s_1,\dots,s_i\}$. Thus, for any $s_{i+1}$ with $i+1 < k^2/100$, we have 
\begin{equation}
    \begin{split}
\prob{s_{i+1} \neq (x^*,y^*) \; |\;  \cQ_i } &= \prob{ \cQ_{i+1} \;| \;\cQ_{i} } \\
&= 1- \frac{1}{\binom{|T|}{2} - i} \\
& > 1-\frac{16}{k^2} \\
    \end{split}
\end{equation}

Thus, if the algorithm makes a total of $\ell < k^2 / 1600$ strong edge oracle queries, we have 
\[\prob{\cQ_\ell }  >  \left(1-\frac{16}{k^2}\right)^\ell > 24/25\]
It follows that, conditioned on $\cE_1$, with probability at least $24/25$, the algorithm does not find the corrupted pair. Condition on this event $\cQ_\ell$ now, and condition on any output clustering $\cC$ of the algorithm given the observations $s_1,\dots, s_\ell$. First, suppose that $\cC$ does not contain exactly $k/2-1$ clusters in the set $T$. If it contains more, then either it does not contain a center in $O$, or it does not contain a center in a matching $(x,y) \in M'$, in either case it pays a cost of $c$. Thus, we can assume it has exactly $k/2-1$ clusters in $T$. Let $z \in T$ be the one point in $T$ not opened as a center. If $z \notin \{x^*,y^*\}$, then clear $\ALG$ pays a $k$-centers cost of $c$. We show this happens with good probability. 

To see this, note that since the oracle queried at most $\ell < k^2/1600$ points, it follows that the corrupted distance $(x^*,y^*)$ is still uniformly distributed over the $\binom{k/2}{2}-\ell > k^2/32$ distances not queried within $T \times T$. Since at most $k/2$ of those distances can include $z$, the probability that $z \in  \{x^*,y^*\}$ is at most $\frac{16}{k}$, in which case the algorithm pays a $c$ approximation. Thus, conditioned on $\cE_1,\cQ_\ell$, the algorithm $\ALG$ still pays a $c$ approximation with probaiblity at least $\frac{16}{k}$. Thus, by a union bound, $\ALG$ pays a $c$-approximation with probability at least $1-(\frac{1}{25} + 2^{-\frac{k-1}{4}} + \frac{16}{k}) > 1/2$, which completes the proof for odd $k$.

Lastly, to handle the case when $k$ is even, we can use the same instance, except take a final point $w^*$ from $O$ and make it distance $c^2$ from all other points in both $d$ and $\td$ -- a center must be placed at $w^*$, and the remaining problem is reduced to the above instance with $k-1$ centers (which is now odd). Finally, while the above lower bound was for $k$-centers, note that the same instance implies a $\Omega(c/n)$ approximation lower bound against algorithms with the same query complexity for either $k$-means or $k$-median for algorithms. Since $c$ can be made arbitrarily large, the result for $k$-means and $k$-medians follows. 
\myqed{\Cref{thm:k-clustering-lb}}
\end{proof}


\subsection{Lower Bounds for Minimum Spanning Trees}
In this section, we prove lower bounds for both the metric and non-metric MST problems.

\subsubsection{Lower bound for Metric Minimum Spanning Tree.} We now prove a matching lower bound for the metric MST problem. Our construction is based on a instance with $O(n/\sqrt{\log n})$ well-seperated clusters $\{C_i\}_i$. We show that, with good probability, we can match nearly all clusters into pairs $(C_i,C_j)$ such that \textit{all} distances between $C_i,C_j$ are corrupted. By corrupting these distances it will be impossible to recover the original clusters, which we show implies a $\Omega(\sqrt{\log n})$ approximation.

\begin{theorem}\label{thm:mstLBMain}
  There exists a constant $c$ such that any algorithm which outputs a spanning tree $T$ of $(\cX,d)$ such that $\expect{w(T)}\leq c  \sqrt{\log n}\cdot \min_{T'}w(T') $ in the weak-strong oracle model, must make at least $\Omega(n/\sqrt{\log n})$ queries to the strong oracle. Moreover, this holds even when the weak-oracle distances $\td:\cX^2 \to \R$ is restricted to being a metric, and when the corruption probability is $\delta = 1/3$. 
\end{theorem}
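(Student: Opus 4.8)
The plan is to invoke Yao's minimax principle: I will exhibit a distribution $\mathcal{D}$ over instances $(\cX,d,\td)$, with $\td$ always a metric and $\delta=1/3$, such that no \emph{deterministic} algorithm making fewer than $c'\cdot n/\sqrt{\log n}$ strong oracle queries can output a spanning tree of expected cost (over $\mathcal{D}$) at most $c\sqrt{\log n}\cdot\min_{T'}w(T')$. The instance will consist of $m=\Theta(n/\sqrt{\log n})$ mutually well-separated \emph{clusters} $C_1,\dots,C_m$, each containing $s=\Theta(\sqrt{\log n})$ points and of tiny internal diameter, together with a uniformly random perfect matching $M$ on the clusters ``planted'' in the ground-truth metric $d$: a matched pair $(C_i,C_j)$ is placed so that its short intra-pair connections make an optimal spanning tree cheap, while all non-matched cluster pairs sit at a much larger distance $D$. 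The weak oracle then corrupts, for as many matched pairs as it can, \emph{all} $s^{2}$ cross distances between the two clusters of the pair (resetting them to $D$), so that in $\td$ one sees $m$ indistinguishable tight clusters at mutual distance $D$ with no trace of $M$. The first (routine) step is to check that $d$ and $\td$ are genuine metrics — the triangle inequality reduces to the cluster-quotient being a near-uniform metric — and that the prescribed corruption pattern is realizable with corruption probability $1/3$.

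The first substantive step is a probabilistic lemma: with constant probability over the corruption set $\cor$ there is a matching $M'$ of size $\Omega(m)$ such that for every $(i,j)\in M'$ all $s^{2}$ cross distances of $C_i,C_j$ lie in $\cor$, and conditioned on this the weak oracle can be coupled so that $M$ restricted to $M'$ carries no information. The natural object is the random graph $H$ on the clusters in which $C_i\sim C_j$ iff all $s^{2}$ cross distances are corrupted; its edges appear independently with probability $(1/3)^{s^{2}}$, so choosing $s$ so that $(1/3)^{s^{2}}=\Theta(1/m)$ makes $H$ a $G(m,\Theta(1/m))$ graph, which has $\Omega(m)$ isolated edges — hence a matching of size $\Omega(m)$ — with high probability. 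It is exactly this calibration that forces $s=\Theta(\sqrt{\log n})$ and $m=\Theta(n/\sqrt{\log n})$, and it is the source of the $\sqrt{\log n}$ in the theorem. The second step is an indistinguishability / query-counting argument: a strong point query inside a cluster $C_i$ reveals $C_i$'s partner only if a point of that partner is also strong-queried, so $q$ queries can resolve the partners of only $O(q)$ clusters; with $q=o(n/\sqrt{\log n})$, a $1-o(1)$ fraction of the clusters in $M'$ still have their partner distributed uniformly over the unresolved clusters, independently of the algorithm's transcript.

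The last and hardest step is to convert ``the matching is hidden'' into an $\Omega(\sqrt{\log n})$ multiplicative loss rather than a mere constant: a naive pairing argument yields only a factor $2$ — a spanning tree oblivious to the matching costs at most twice one that uses it — so the construction must in addition separate the distance scales so that the optimum genuinely lives on short edges while any matching-oblivious tree is pushed onto proportionally longer ones. Concretely, I would arrange the metric so that a cluster's only cheap (length-$\ell$) connection is to its matched partner, so that the cheapest connection still available to a cluster whose partner has been fully corrupted is $\Theta(\sqrt{\log n})$ times longer, and so that the internal and super-cluster scales are small enough that $\min_{T'}w(T')$ is dominated by the short matched-pair connections; one then shows, by a charging/cut argument, that for each of the $\Omega(m)$ unresolved clusters \emph{every} spanning tree the algorithm can output must pay $\Omega(\sqrt{\log n})$ times what the optimum pays to attach that cluster — in particular every path in $\td$ between the two halves of a hidden super-cluster that avoids the corrupted edges has length $\Omega(\sqrt{\log n})\cdot\ell$ — and then sums over the unresolved clusters before taking expectations. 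I expect this gap analysis to be the main obstacle: one must simultaneously produce a single metric that supports the probabilistic matching lemma \emph{and} the scale separation, and rule out clever detours in the output tree that route around the corrupted partner edges through other clusters. The remaining bookkeeping — odd $m$, the $o(m)$ already-resolved clusters, and any leftover unmatched clusters — is absorbed into the asymptotic constants and into the constant $c$ in the statement.
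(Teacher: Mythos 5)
There is a genuine gap, and it lies in the direction of your construction. You plant a uniformly random perfect matching $M$ \emph{in the ground-truth metric $d$} (matched clusters close, all other cluster pairs at distance $D$) and then ask the weak oracle to hide $M$ by corrupting all $s^2$ cross distances of each matched pair. But the corruption set $\cor$ is generated \emph{after} $d$ is fixed, independently per pair with probability $1/3$, so a given planted matched pair has all $s^2$ of its cross distances corrupted only with probability $(1/3)^{s^2}=\Theta(1/m)$ — by your own calibration. Hence only $\Theta(1)$ of the $\Theta(m)$ planted pairs can be fully hidden; for every other matched pair at least one uncorrupted cross distance in $\td$ still equals the true short distance and reveals the match. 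Your probabilistic lemma is true as a statement about the random graph $H$ on \emph{all} $\binom{m}{2}$ cluster pairs (that graph is $G(m,\Theta(1/m))$ and has an $\Omega(m)$ matching), but that matching $M'$ is essentially disjoint from the planted $M$, so the conclusion that ``$M$ restricted to $M'$ carries no information'' does not follow; you cannot choose $M$ after seeing $\cor$ without violating the model. With almost all of $M$ visible, the algorithm connects matched pairs cheaply and the lower bound collapses.

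The paper's construction inverts this. The ground truth has \emph{no} matching structure: $n/k$ blocks of size $k=\Theta(\sqrt{\log n})$, distance $1$ inside a block and $k$ between blocks. The corruption then \emph{creates} structure in $\td$: for the $\Omega(n/k)$ block pairs of the random graph $H$ that happen to have all $k^2$ cross distances corrupted (your large-matching lemma, applied to all pairs, is exactly what is needed here), the adversary sets those cross distances to $1$, merging each such pair into one apparent block of $2k$ exchangeable points. The hidden information is therefore not \emph{which} blocks are matched but the \emph{partition of each merged super-block into its two true halves}, which is information-theoretically unrecoverable from $\td$. This also supplies the $\Omega(\sqrt{\log n})$ gap you correctly flag as the hardest step, without any delicate scale separation: rooting the output tree and charging each vertex $u$ the edge $d(u,\pi(u))$, every point of a merged super-block has its parent in the opposite half (or outside the super-block) with probability at least $1/2$, so it pays $k$ instead of $1$ in expectation; summing over the $\Omega(n)$ such points gives expected cost $\Omega(nk)=\Omega(n\sqrt{\log n})$ against $\min_{T'}w(T')=O(n)$. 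Your query-counting step survives essentially unchanged (with $o(n/\sqrt{\log n})$ point queries, $\Omega(n/k)$ merged pairs are untouched and their internal partitions remain uniform), but the construction and the source of the multiplicative gap must be replaced as above.
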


To prove \Cref{thm:mstLBMain}, we use the following standard result on large size matching in random graphs. We also provide a proof for completeness.
\begin{fact}\label{fact:largeMatching}
    Let $G = (V,E)$ be a random graph where each edge $(i,j)$ exists independently with probability at least $\rho > c \log n / n$, for a sufficiently large constant $c$. Then with probability $1-1/\poly(n)$, there exists a matching $M \subset \binom{n}{2}$ in $G$ with size at least $|M|> n/4$. 
\end{fact}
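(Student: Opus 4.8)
The plan is to prove the existence of a large matching by a \emph{maximal-matching} argument coupled with a union bound over potential independent sets. The starting observation is that any maximal matching $M$ in $G$ (one that cannot be extended by adding a further edge) automatically has a size lower bound determined by how many edges $G$ has among its unmatched vertices. Concretely, let $U \subseteq V$ be the set of vertices left uncovered by $M$, so $|U| = n - 2|M|$; by maximality the induced subgraph $G[U]$ contains no edge at all, i.e.\ $U$ is an independent set of $G$. Hence, to show $|M| > n/4$ for every maximal matching (and a maximal matching always exists, e.g.\ built greedily), it suffices to rule out independent sets of size $\geq n/2$: if $G$ has none, then $n - 2|M| < n/2$, i.e.\ $|M| > n/4$.

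Next I would bound the probability that $G$ contains an independent set of size at least $n/2$. By monotonicity (a subset of an independent set is independent) it is enough to consider sets $U$ of size exactly $m := \lceil n/2 \rceil$, and there are at most $\binom{n}{m} \leq 2^n$ of these. Fix such a $U$. The event that $G[U]$ is edge-free requires all $\binom{m}{2} \geq n^2/10$ (for $n$ large enough) potential edges inside $U$ to be absent, and since distinct edges appear independently, each with probability at least $\rho$, this event has probability at most $(1-\rho)^{\binom{m}{2}} \leq \exp\!\big(-\rho \cdot n^2/10\big) \leq \exp\!\big(-(c\log n/n)\cdot n^2/10\big) = n^{-cn/10}$. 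A union bound over all $\leq 2^n$ candidate sets then shows that $G$ contains an independent set of size $\geq n/2$ with probability at most $2^n \cdot n^{-cn/10}$, which for any positive constant $c$ and $n$ large is $\leq 1/\poly(n)$ (in fact superpolynomially small), and is certainly $\leq 1/\poly(n)$ under the stated hypothesis that $c$ is a sufficiently large constant.

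Combining the two steps: with probability $1 - 1/\poly(n)$, $G$ has no independent set of size $\geq n/2$, and on that event every maximal matching of $G$ has size strictly greater than $n/4$, which yields the desired matching and completes the proof. The argument is routine; the only point requiring care is matching up the union-bound parameters — one needs $\binom{m}{2}\rho = \Omega(n\log n)$ to beat the $2^n = e^{O(n)}$ count of subsets, and this is precisely why the hypothesis requires $\rho \gtrsim \log n / n$ rather than merely $\rho \gtrsim 1/n$. I do not anticipate any genuine obstacle.
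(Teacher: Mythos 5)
Your proof is correct, but it takes a genuinely different route from the paper's. The paper argues constructively via the principle of deferred decisions: it orders the vertices, greedily matches each of the first $3n/4$ of them, and uses a Chernoff bound per vertex (plus a union bound over vertices) to show each one finds an unmatched partner with high probability. You instead observe that the vertices left uncovered by \emph{any} maximal matching form an independent set, and then kill all independent sets of size $\geq n/2$ by a first-moment union bound over the $\leq 2^n$ candidate subsets, using the $\binom{m}{2} = \Theta(n^2)$ independent edge slots inside each. Both arguments are sound; yours is shorter, applies to every maximal matching simultaneously (so any greedy procedure suffices), and gives an $e^{-\Omega(n\log n)}$ failure probability rather than merely $1/\poly(n)$. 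One small correction to your closing remark: your union bound only needs $\binom{m}{2}\rho = \Omega(n)$ with a large constant to beat $2^n$, so your argument actually goes through under the weaker hypothesis $\rho \geq C/n$; it is the paper's vertex-by-vertex argument (which needs each vertex to find a partner with probability $1-1/\poly(n)$, hence $\rho n = \Omega(\log n)$) that genuinely uses the $\log n/n$ threshold. The paper's sequential formulation does transfer slightly more directly to the constrained matching reused in the non-metric MST lower bound, but your independence-number argument adapts there as well.
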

\begin{proof}
    The proof is a simple application of the principle of deferred decisions. Order the vertices arbitrarily $x_1,x_2,,\dots,x_n$. Let $Z_{i,j}$ be an indicator random variable for the event that $(x_i,x_j) \in E$.  We build a set of matched points $M \subset [n]$. Initially, $M$ is empty. We will walk through the points $x_i$ for $i=1,2,\dots,(3/4)n$, and show that each can be matched to a vertex if it was not previously matched already.
    
     We first condition on the event $\cal{E}_1$ that $Z_{1,j}$ exists for at least one $x_j \notin M$, which occurs with high probability by a Chernoff bound. Fix that $x_j$ to match to $x_1$, and add both $x_j,x_1$ to $M$. Now for $i=2,\dots,3n/4$, either $x_i$ is matched by step $i$, or we have that $\sum_{j > i, x_j \notin M} Z_{i,j}$ is a sum of i.i.d. indiacator variables with expectation at least $\rho (3n/4 - M)$. So if $|M| > n/2$, then we are done, otherwise $\expect{\sum_{j > i, x_j \notin M} Z_{i,j}}>(c/4) \log n/n$. Thus, again by Chernoff bounds, with high probability there exists at least one $j > i$ with $x_j \not in M$ such that $(x_i,x_j)$ is an edge, and we can match $(x_i,x_j)$ and continue. Since each vertex $x_i$ is matched with high probability for $i=1,2,\dots,(3/4)n$, the fact follows from a union bound. \myqed{\Cref{fact:largeMatching}}
\end{proof}

We now present the main lower bound of \Cref{thm:mstLBMain}, for which we will employ the following input distribution over $(\cX,d,\td)$. 

\begin{mdframed}
\textbf{The Hard Instance for MST}
\begin{enumerate}
\item Set $k = \Theta(\sqrt{ \log n})$, and draw a uniformly random mapping $f:[n] \to [n/k]$ conditioned on $|f^{-1}(j)| = k$ for all $i \in [n/k]$. Define the $i$-th block $B_i = f^{-1}(i)$, and $B = \{B_1,\dots,B_{n/k}\}$.
\item  Define the true distances as follows: we set $d(x,y) = 1$ for any pair $x,y \in B_i$ that are in the same block $B_i$ for some $i$, and $d(x,y) = k$ otherwise.
\item Find a maximal matching $M \subset B \times B$ such that for all $(i,j) \in M$, and for all $x \in B_i, y \in B_j$ we have $(x,y) \in \cor$.
\item For all $(i,j) \in M$, and for all $x \in B_i, y \in B_j$ set the weak oracle distance to be $\tilde{d}(x,y) = 1$. For all other pairs $(x',y')$, set $\tilde{d}(x',y') = d(x',y')$
\end{enumerate}

\end{mdframed}

\begin{proof}[Proof of \Cref{thm:mstLBMain}]

First, note that it is easy to verify that the resulting corrupted distances $\tilde{d}$ are metric. This can be seen for the following reason: any set of distances $d': \cX^2 \to \R$ defined by a partition $P_1,\dots,P_t$ of $[n]$ such that $d'(x,y) = 1$ for $x,y$ in the same piece $P_i$ of the partition, and $d'(x,y) = \ell$ otherwise, for some $\ell > 1$, is a metric. Finally, we note that that both $d,\tilde{d}$ are of this form.

We first prove the lower bound against algorithms that make \emph{no} strong oracle queries. First, note that for any pair of blocks $B_i,B_j$, there are at most $k^2 < \log(n)/200$ pairs of distances $(x,y) \in B_i \times B_j$. The probability that all such pairs are corrupted is at most $(1/3)^{\log(n)/200} > n^{-1/100}$. Thus, by \Cref{fact:largeMatching}, with probability $1-1/\poly(n)$ the matching $M$ satisfies $|M| > \frac{n}{4k}$. Let $\cE_1$ be the event that the matching is at least this large -- we will now condition on $\cE_1$ holding. 

Now fix any draw of the corrupted distances $\td$ observed by the algorithm. Also condition on the set of corrupted distances $\cor$, and the matching $M$ --- we will prove the lower bound even against an algorithm that is told the matching $M$ over $B \times B$. Note that conditioning on $\td,\cor,M$ does not determine the original metric $d$ --- specifically, the function $f$ is not fully determined by $\td,\cor,M$. Since an algorithm that makes no strong oracle queries sees only $\td,M$, by Yao's min-max principle we can assume the algorithm is deterministic, and thus produces a tree $T$ deterministically based on $\td,M$ which, for the sake of contradiction, we suppose satisfies 
     $\expect{w(T)}\leq c'  \sqrt{\log n}\cdot \min_{T'}w(T') $, where the expectation is taken over the remaining randomness in $d$ after conditioning on $\td,\cor,M$. 

     Now fix any arbitrary rooting of $T$, and let $\pi(u)$ be the parent of any vertex $u \in \cX$ under this rooting. We will charge to each vertex $u \in \cX$ the cost $d(u,\pi(u))$. We now condition on any set of identities of $B_j = f^{-1}(j) \subset [n]$ for every block $B_j$ that is not matched under $M$. Additionally, for every matched pair of blocks $B_i,B_j$, we condition on the set of identities in the union $B_i \cup B_j$, but we \textit{do not} condition on the individual sets $B_i$ and $B_j$. Specifically, note that after conditioning on $B_i \cup B_j$ for any $x \in B_i \cup B_j$, we claim that $\prob{f(x) = i} = \prob{f(x) = j} = 1/2$. This holds because even conditioned on $\td$, we have $\tilde{d}(x,y) = 1$ for ally $x,y \in B_i \cup B_j$, so shuffling the values of the identities in $B_i \cup B_j$ does not effect the observations of the algorithm.

     Now consider any $u \in B_i$ such that $(B_i,B_j) \in M$ is matched. First, suppose that $\pi(u) \notin B_i \cup B_j$ -- then $d(u,\pi(u)) = k$ for all possible realizations of the remaining randomness. If $\pi(u) \in B_i \cup B_j$, we claim that $d(u,\pi(u)) = k$ with probability $1/2$ over the remaining randomness in $f$. To see this, note that because $f_i$ maps each point in $B_i \cup B_j$ to $B_i$ or $B_j$ uniformly at random (subject to the constraint that $|f^{-1}(j)| = |f^{-1}(j)| = k$). The constraint only makes it less likely that any pair $(u,\pi(u))$ are mapped to the same side, so:     
     \[\prob{f(u,\pi(u)) =(i,j)} + \prob{f(u,\pi(u)) =(j,i)} \geq \prob{f(u,\pi(u)) =(i,i)} + \prob{f(u,\pi(u)) =(j,j)}  \]
Moreover, whenever $f(u,\pi(u)) =(i,j)$, we have that $d(u,\pi(u)) = k$, which completes the claim. Given this, it follows that the expected value of $d(u,\pi(u))$ is at least $k/2$ for any $u$ in a matched block $B_i$. Since $|M| > n/(4k)$, it follows that at least $n/2$ points are matched, and each has an edge to its parent in $T$ with expected cost $k/2$, from which it follows that the expected cost of $T$ is $\Omega(nk) = \Omega(n \sqrt{\log n})$. Since the true MST cost of $(\cX,d)$ is always at most $O(n)$, resulting by creating a star on the set of points within each $B_i$, and then adding the edges for an arbitrary spanning tree with $n/k-1$ vertices over the vertices $\{p_1,\dots,p_{n/k}\}$, where $p_i \in B_i$ is an arbitrary representative vertex in $B_i$. This completes the proof of the lower bound against algorithms which make no strong oracle queries.

We now show how to generalize the above argument to algorithms that make at most $\frac{n}{100k}$ strong oracle queries. Similar to the above, we condition on the weak oracle mapping $\td$ as well as the matching $M$. We now consider any set of $\frac{n}{100k}$ strong oracle queries made by the algorithm -- let $S \subset [n]$ be the set of vertices queried. Since $|S| < \frac{n}{100k}$ and $|M| > \frac{n}{4k}$, it follows that there is a matching $M'$ with $M' > \frac{n}{8k}$ such that for every $(B_i,B_j) \in M'$, we have $S \cap (B_i \cup B_j) = \emptyset$. It follows that, even after revealing the values of $d(x,y)$ for all $x,y \in S$, for every $x \in B_i \cup B_j$ where $(B_i,B_j) \in M'$, the function $f(x)$ is still uniformly distributed in $\{i,j\}$. The remainder of the arguement follows as above, with a loss of $2$ in the expected cost of the algorithm attributed to the fact that we only have a matching of size $\frac{n}{8k}$
     rather than $\frac{n}{4k}$. \myqed{\Cref{thm:mstLBMain}}
\end{proof}

\begin{remark}
One may wonder whether we can extend the metric MST lower bound to strong \emph{distance} oracle in the same manner of \Cref{thm:k-clustering-lb}. Alas, with our analysis, we cannot get a lower bound as strong as $\tilde{\Omega}(n^2)$. By a simple black-box reduction, \Cref{thm:mstLBMain} implies a $\Omega(n/\sqrt{\log{n}})$ lower bound for strong distance oracle queries for any algorithm with $o(\sqrt{\log{n}})$ approximation. For estimating the \emph{value} of the MST, this turns out to be (nearly) tight as there exists a $O(1)$ approximation with $\tilde{O}(n)$ queries by \cite{CzumajS04}. Exploring whether this is the case for constructing the actual MST with strong distance queries is an interesting direction to pursue.
\end{remark}

\subsubsection{Lower Bounds for Non-Metric Minimum Spanning Tree.}
We now consider the problem of computing an approximate MST in the general Weak-Strong Oracle model, where the corrupted weak-oracle distances $\td$ is not necessarily a metric (i.e., $\td$ can violate the triangle inequality). Whereas \Cref{thm:mst-alg-metric-case} demonstrates that a $O(\sqrt{\log n})$ approximation is possible in the metric-weak oracle case with \textit{no} strong oracle queries. We now prove a $\Omega(\log n)$ approximation lower bound for any algorithm in the non-metric case, even if it makes $o(n/\log n)$ strong oracle queries, thereby strongly separating the two models. 

\begin{theorem}\label{thm:mstLBGeneralMain}
  There exists a constant $c$ such that any algorithm which outputs a spanning tree $T$ of $(\cX,d)$ such that $\expect{w(T)}\leq c  \log n\cdot \min_{T'}w(T') $ in the weak-strong oracle model (with corruption probability $\delta = 1/3)$, must make at least $\Omega(n)$ queries to the strong oracle. 
\end{theorem}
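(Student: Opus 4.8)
Set $k=c_{0}\log n$ for a small absolute constant $c_{0}>0$, and draw a uniformly random partition $f^{*}\colon [n]\to [n/k]$ into $n/k$ blocks of size exactly $k$. Define the ground-truth metric by $d(x,y)=1$ if $f^{*}(x)=f^{*}(y)$ and $d(x,y)=k$ otherwise (a uniform partition metric, hence a genuine metric). Each pair is placed in $\cor$ independently with probability $1/3$, and the adversary sets $\td$ by \emph{flipping the two scales on corrupted pairs}: for $(x,y)\in\cor$ set $\td(x,y)=k$ if $f^{*}(x)=f^{*}(y)$ and $\td(x,y)=1$ if $f^{*}(x)\neq f^{*}(y)$; for $(x,y)\notin\cor$ set $\td(x,y)=d(x,y)$. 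The resulting $\td$ is strongly non-metric — a corrupted within-block pair can have $\td=k$ while both of its edges to a third block-mate have $\td=1$ — and this is exactly the extra power the non-metric model grants beyond \Cref{thm:mstLBMain}.

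\textbf{Reduction to block identification.} The true MST has cost $\Theta(n)$ (span each block by a star, then connect one representative per block), so $\OPT=\Theta(n)$ deterministically. For any spanning tree $T$, root it and charge each non-root $u$ the edge to its parent: $w(T)=\sum_{u}d(u,\pi_{T}(u))$, with $d(u,\pi_{T}(u))\ge k$ unless $\pi_{T}(u)$ is a true block-mate of $u$. By Yao's principle we may assume the algorithm is deterministic, so $T$ is a function of the data $(\td,\,d|_{S})$, where $S$ (of size at most $q$) is the set of points it queried through the strong point oracle, which reveals exactly the submatrix $d|_{S\times S}$ and hence $f^{*}|_{S}$. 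It therefore suffices to show: with constant probability over the instance, at least $c'(n-q)$ of the points $u\notin S$ are connected by $T$ to a non-block-mate; this forces $\mathbb{E}[w(T)]\ge\Omega\big((n-q)\,k\big)=\Omega\big((n-q)\log n\big)$, which exceeds $c\log n\cdot\OPT$ whenever $c$ is a small enough constant and $q\le(1-c'')n$ — i.e.\ $q=\Omega(n)$.

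\textbf{Key lemma: blocks are unrecoverable below the threshold.} The heart of the proof is: conditioned on the data $(\td,\,d|_{S})$ with $|S|\le(1-c'')n$, for a constant fraction of the points $u\notin S$ and \emph{every} other point $v$, $\Pr[f^{*}(u)=f^{*}(v)\mid \td, d|_{S}]\le 1/2$. I would prove this as follows. First, knowing a candidate partition $f$ together with $\td$ determines, pair by pair, which pairs ``look corrupted'' under $f$; hence the posterior over $f^{*}$ is proportional to $2^{U(f)}$, where $U(f)$ counts the pairs that look uncorrupted under $f$ — crucially there is \emph{no hard constraint} eliminating off-truth partitions (one-sided flooding would make $\td=k$ a certificate of ``different block'' and would only yield an $\Omega(n/\log n)$ bound). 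Conditioning additionally on $d|_{S}$ reveals $f^{*}|_{S}$, but for $u\notin S$ each queried $w\in S$ gives only a bounded ($\pm\log 2$) Bayesian update towards ``$u$ lies in block $f^{*}(w)$'', since a corrupted within-block pair and an uncorrupted across-block pair both produce the observation $\td=k$. Thus the log-likelihood that $u$'s true block $j^{*}$ accumulates from $S$ is a random walk of length $t^{*}=|(f^{*})^{-1}(j^{*})\cap S|\le k$ with drift $\Theta(t^{*})$ and fluctuation $\Theta(\sqrt{t^{*}})$, while each of the $\Theta(n/k)$ wrong blocks accumulates an independent walk of comparable length whose \emph{maximum over all wrong blocks} reaches $\Theta\big(\sqrt{t^{*}\log(n/k)}\big)=\Theta(t^{*}/\sqrt{c_{0}})$; for $c_{0}$ small this maximum dwarfs the drift of $j^{*}$, so $j^{*}$ does not dominate the posterior. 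Promoting this to ``no block carries more than half the posterior mass for $f^{*}(u)$'' for a constant fraction of $u$ would be carried out by a switching / second-moment estimate on the measure $2^{U(f)}$, using also that points $u'\notin S$ contribute only noise about $f^{*}(u)$ because their own block labels are equally uncertain.

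\textbf{Wrap-up and the main obstacle.} Given the key lemma, when $q\le(1-c'')n$ at least $\Omega(n-q)$ unqueried points $u$ are ``confused,'' and for each such $u$ the parent $\pi_{T}(u)$ selected by the data-determined tree is a block-mate with probability at most $1/2$; a first-moment bound over the instance then yields $\Omega(n-q)$ points joined to non-block-mates with constant probability, hence $\mathbb{E}[w(T)]=\Omega((n-q)k)=\Omega(\log n)\cdot\OPT$, contradicting a $c\log n$-approximation for small $c$. Adaptivity of the queries is handled exactly as in \Cref{thm:mstLBMain}, since the argument is uniform over the realized query set $S$. The main obstacle is the key lemma itself: it is an information-theoretic impossibility result for a planted clustering instance sitting at its recovery threshold ($\Theta(\log n)$-sized blocks), and making the ``signal drowned below the noise floor'' intuition rigorous — controlling the posterior $\propto 2^{U(f)}$ over the exponentially large, instance-dependent family of plausible partitions, uniformly in the side information $d|_{S}$ — is the delicate part. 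This is precisely why the metric bound of \Cref{thm:mstLBMain} stops at $\sqrt{\log n}$ and $\Omega(n/\sqrt{\log n})$: there one can use $O(\sqrt{\log n})$-sized super-blocks that are \emph{completely} flooded, sidestepping the threshold analysis.
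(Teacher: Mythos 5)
Your overall architecture (partition metric with blocks of size $\Theta(\log n)$, charging each vertex the edge to its parent, reducing to the impossibility of block identification for unqueried points) matches the paper's, but your corruption scheme is genuinely different, and the difference is where the proof breaks. You corrupt each pair i.i.d.\ and flip the two scales, producing a planted-partition instance at its recovery threshold, and then you need the ``key lemma'' that the posterior over $f^{*}(u)$ is spread out. You explicitly leave this lemma unproven, and the sketch you give for it has a real circularity: you argue that unqueried points $u'$ ``contribute only noise about $f^{*}(u)$ because their own block labels are equally uncertain,'' which assumes the very conclusion you are trying to establish, jointly over all points. Controlling the posterior $\propto 2^{U(f)}$ over the exponentially many candidate partitions, uniformly in the revealed submatrix $d|_{S}$, is the entire technical content of the theorem under your construction, and nothing in the proposal delivers it. As written, this is a genuine gap, not a routine detail.

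The paper's construction is designed precisely to avoid this threshold analysis, and it does so with a flooding trick you attribute only to the metric lower bound. For a pair $x\in B_i$, $y\in B_j$, the paper conditions on the event that \emph{all} $2k$ distances from $x$ to $B_j$ and from $y$ to $B_i$ are corrupted; this has probability $(1/3)^{2k}=n^{-\Theta(1)}$, which with $k=\Theta(\log n)$ is still large enough that \Cref{fact:largeMatching} yields a matching of $\Omega(n)$ such point pairs. The adversary then sets the flooded distances so that $x$ and $y$ become \emph{exactly exchangeable}: swapping $f(x)$ and $f(y)$ is a bijection on the consistent realizations, so conditioned on $\td$, the matching, and any $o(n)$ strong-oracle answers avoiding $\{x,y\}$, the parent of $x$ lies in the wrong block with probability exactly at least $1/2$ --- no posterior estimate, second-moment method, or union bound over partitions is needed. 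Note the flooding is point-to-block (cost $(1/3)^{2k}$) rather than block-to-block (cost $(1/3)^{k^2}$), which is exactly why the non-metric bound reaches $k=\Theta(\log n)$ while the metric bound in \Cref{thm:mstLBMain} is limited to $k=\Theta(\sqrt{\log n})$. If you want to salvage your i.i.d.-noise route, you would have to supply a rigorous non-recoverability argument for the planted partition conditioned on side information $d|_{S}$; the exchangeability construction is the far shorter path.
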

\begin{proof}
 The construction of the hard distribution over inputs is as follows. The distribution will be over distances $d$ for a fixed set of $n$ points $\cX $. 
 
\begin{mdframed}
\textbf{Construction of the ground-truth metric $(\cX,d)$.}
\smallskip
\begin{enumerate}
\item Set $k = \frac{\log n}{100}$, and draw a uniformly random mapping $f:[n] \to [n/k]$ conditioned on $|f^{-1}(j)| = k$ for all $i \in [n/k]$. Define the $i$-th block $B_i = f^{-1}(i)$, and $B = \{B_1,\dots,B_{n/k}\}$.
\item Define the metric $d$ as follows:
\[     d(x,y) = \begin{cases}
    1 & \text{ if } (x,y) \in B_i \text{ for some } i \in [n/k] \\ 
    k & \text {otherwise }\\
\end{cases}\]

\end{enumerate}

\end{mdframed}
It is straightforward to verify that the construction of $d$ is a metric. We now describe how to generate the corrupted distances $\tilde{d}$ for a given draw of $d$. 

Note that the optimal MST first conencts together all points within the same block (each of the $\Theta(n)$ edges paying a cost of $1$ for each such edge), and then connects together the remaining $n/k$ blocks, each with a cost of $k$. Thus $\min_T w(T) = \Theta(n)$.

\begin{mdframed}
\textbf{Construction of the weak-oracle metric $\tilde{d}$}
\begin{enumerate}
\item Define the random graph $H = (\cX,\hat{E})$ as follow. We have $(x,y) \in E$, if and only if $x \in B_i, y \in B_j$ with $i\neq j$, and such that $(x,u) \in \cor$ and $(v,y) \in \cor$ for all $u \in B_j$ and $v \in B_i$. 
\item Let $M \subset \cX \times \cX$ be a maximum matching in the graph $H$.
\item Define the weak-oracle output $\td$ as follows: for every $(x,y) \in M$, where $x \in B_i,y \in B_j$, we set $\td(x,u) = 1$ and $\td(v,y) \in \cor$ for all $u \in B_j$ and $v \in B_i$. For all other pairs $(x,y)$, we set $\td(x,y)  = d(x,y)$.
\end{enumerate}
\end{mdframed}
First note, by the same argument in the proof of \Cref{fact:largeMatching}, we will have that $|M| > n/4$ with high probability. Note that even though the setting is slightly different, because $(x,y)$ can never be an edge if $(x,y)$ are in the same block $B_i$, there are still at least $n - n/k$ possible edges which can be adjacent to any individual point $x$, and each exists with probability at least $(1/3)^k > 1/\sqrt{n}$ as needed for the proof of \Cref{fact:largeMatching}. Call the event that $|M| > n/4$ $\cE_1$, and condition on it now.

We first prove the lower bound against algorithms that make no strong oracle queries. Now fix any draw of the observed distances $\td$, and condition on the identities of the matching $M \in \cX \times \cX$. By a simple averaging argument (Yao's Min-max principle), if there was a randomized algorithm correct with probability at least $1/\poly(n)$ against any given input, then there would be a deterministic algorithm correct against this input distribution with probability at least $1/\poly(n)$. So given such an algorithm, after fixing $\td$ we can fix the tree $T$ output by the algorithm. Like in the proof of \Cref{thm:mstLBMain}, we orient $T$ arbitrarily, let $\pi(x)$ be the parent of $x$ in $T$, and charge each vertex $x$ with the cost $d(x,\pi(x))$. 

Now note that for every pair $(x,y) \in M$, conditioned on the observations $\td$ and matching $M$, for this match pair we have $\td(x,u) = \td(y,u)$ for all $u \in \cX$. Thus, by the symmetry of the identities, for any fixed $B_i,B_j$ such that $x \in B_i, y \in B_j$ occurs with non-zero probability over the remaining randomness, we have that  $x \in B_i, y \in B_j$  occurs with the same probability as $x \in B_j, y \in B_i$. To see this formally, note that conditioned on any matching $M$, we can consturct a bijection between the remaining realizations of the randomness where $x \in B_i, y \in B_j$ and where $x \in B_j, y \in B_i$, simply by swapping the values of $f(x),f(y)$ -- this is possible because, after conditioning on any set of values $\{f(z)\}_{z \in \cX \setminus \{x,y\}}$, the marginals of $f(x)$ and $f(y)$ are identically distributed. Thus, for any fixed parent $\pi(x)$ of $x$, the probability that $\pi(x)$ is in the same block as $x$ is at most $1/2$. Thus the expected cost $d(x,\pi(x))$ for any matched point $x$ is at least $k/2$, since for any fixed block $B_i$ containing $\pi(x)$, we have $x \notin \pi(x)$ with probaiblity at least $1/2$. Since there are $\Omega(n)$ matched points, it follows that the expected cost of the algorithm is at least $\Omega(nk) = \Omega(n \log n)$, which completes the proof for the case of algorithms which do not query the strong oracle.

Finally, for any algorithm that makes at most $n/100$ strong (point) oracle queries, notice that there are still $n/20$ pairs of matched $(x,y)$ points such that neither were queried. For such pairs, the above claim still holds, namely that for any fixed $B_i,B_j$ such that $x \in B_i, y \in B_j$ occurs with non-zero probability,  both $(x \in B_i, y \in B_j)$ and $(x \in B_j, y \in B_i)$ occur with equal probability. Thus, for any parent $\pi(x)$, the point $x$ will be in a different block from $\pi(x)$ with probability at least $1/2$, even conditioned on the strong oracle observations, the matching $M$, and $\td$, and the rest of the proof proceeds as above. \myqed{\Cref{thm:mstLBGeneralMain}}
\end{proof}

\FloatBarrier
\section{Experiments}
\label{sec:experiment}
In this section, we experimentally validate the performances of our clustering algorithms. We compare our algorithms with benchmarks on two extremes: the ``weak baseline'', where the benchmark algorithm has access to only the $\WO$ queries, and the ``strong baseline'', where the benchmark algorithm has access to $\SO$ queries on the entire dataset. We demonstrate that:


\begin{enumerate}[leftmargin=10mm, label=(\roman*)]
\item The weak baseline algorithms with only $\WO$ access produce very poor-quality solutions; 
\item Our algorithm achieve costs that are competitive with the strong baseline that queries $\SO$ on the entire dataset, while only using $\SO$ queries on a very small fraction ($<1\%$) of the points.
\end{enumerate}



\paragraph{Datasets.} As discussed in \Cref{sec:intro}, our experiments use both synthetic data generated from the extensively-studied Stochastic Block Model (SBM) \cite{holland1983stochastic,dyer1989solution,decelle2011asymptotic,abbe2015exact,abbe2015community,hajek2016achieving,mossel2015consistency} and the embeddings generated from the MNIST dataset with t-SNE and SVD \cite{deng2012mnist,van2008visualizing}. We construct the SBM model with $k=7$ clusters: in the $i$-th cluster, we sample points from a Gaussian distribution $\mathcal{N}(\mu, I)$ with $\mu_{i}=10^5$ and $\mu_{j}=0$ for all $j\neq i$, and we use the $\ell_2$ metric. As points sampled from the Gaussian distribution are concentrated, ground truth clusters are well-separated and the cost of misclustering even a single point is large. For the MNIST dataset, we run t-SNE and SVD embeddings with $60k$ training data, and embed into $d=2$ dimensions for t-SNE and $d=50$ for SVD. 

In both scenarios, there are  clear ``ground truth'' clusters for each point. As such, there is a natural weak-oracle corruption policy: for a pair of points $(x_i, x_j)$, if $x_i$ and $x_j$ are in the same ground truth cluster, flip the distance to an arbitrary inter-cluster distance; otherwise, flip the distance to an arbitrary intra-cluster distance. For the synthetic dataset, this results in an SBM model. 

\paragraph{Algorithm Implementations.} We implement the weak and strong benchmarks with the farthest traversal algorithm \cite{gonzalez1985clustering} for the $k$-center task and the celebrated $k$-means++ algorithm for the k-means task \cite{ArthurV07}, and both are de facto choices in practice. To perform the Lloyds iteration for $k$-means, we reveal the embedding vectors on the points with the $\LQ$ query to the algorithm. We also use $k$-means++ as the post-processing algorithm of the sampled set $S$ in \Cref{alg:k-means}\footnote{For the weak benchmark employing $k$-means++, we reveal all embeddings for the Lloyd iterations, which only helps that baseline. However, for our algorithms, we only reveal embeddings of points queried in $S$.}. 
The basic version of the experiments are carried out Macbook Pro with M1 chip and 16GB RAM. An optimized version for larger-scale datasets was run on an virtual compute cluster with 360GB RAM.



\paragraph{Figures and tables.}  We vary the parameters for sampling in our algorithms and obtain the curves for the clustering cost vs.\ number of strong oracle queries for different values of $\delta$ (the weak oracle corruption probability).
For tables, in each setting of $\delta$ for different sampling parameters, we pick the run with the best query-cost trade-off by selecting the run that minimizes the value $\card{\SO} \cdot \cost^{10}$, where $\card{\SO}$ is the number of queries to the strong oracle. We do this in order to prevent selecting runs that make very few queries but have poor cost. 


\begin{table}[!h]
\caption{\label{tab:best-trade-off}The best query-cost trade-off point for the $k$-center and $k$-means algorithms on the SBM. `Competitive ratio' means the ratio between the costs of our algorithms and the storng benchmark. The left column indicates the percentages of $\SO$ queries used.}
\begin{center}
\begin{tabular}{ |*{8}{c|} }
    \hline
\multirow{ 2}{*}{} & $n$    & \multicolumn{3}{|c|}{$\%$ of data queried for $\SO$}
            & \multicolumn{3}{|c|}{Competitive ratio}
                                         \\
    \cline{3-8}
     & & $\delta = 0.1$ &  $\delta = 0.2$ &   $\delta = 0.3$  &   $\delta = 0.1$    &  $\delta = 0.2$  &   $\delta = 0.3$   \\
    \hline
\multirow{4}{*}{k-center} & 10k   &   6.51  &   7.42   &   7.42  &   0.828  &   0.707  &   0.880   \\
\cline{2-8}
& 20k   &   1.26  &   1.96  &   5.46  &   0.802  &   0.842  &   0.795   \\
\cline{2-8}
& 50k & 0.798 & 1.484 & 2.184 & 0.809 & 0.779 & 0.832 \\
\cline{2-8}
& 100k & 0.252 & 0.917 & 0.917 & 0.804 & 0.718 & 0.762 \\
\hline
\multirow{4}{*}{k-means} & 10k   &   5.55  &   3.51   &   13.19  &   1.089  &   1.053  &   1.175   \\
\cline{2-8}
& 20k   &   1.975  &   1.78  &   8.57  &   1.216  &   1.086  &   1.191   \\
\cline{2-8}
& 50k & 1.038 & 0.82 & 2.342 & 1.142 & 1.062 & 1.125 \\
\cline{2-8}
& 100k & 0.555 & 0.44 & 1.31 & 1.141 & 1.218 & 1.25\\
\hline
\end{tabular}
\end{center}
\end{table}

\paragraph{SBM Experiments}
\label{subsec:experiment-SBM}
We test with corruption rate of $\delta=0.1, 0.2$ and $0.3$
with the scales of $n=10k$, $20k$, $50k$, and $100k$. 
The cost (log scale) vs.\ strong oracle query curves and trade-off points for the $k$-center and $k$-means algorithms can be observed in \Cref{fig:sbm-k-center,fig:sbm-k-means} and \Cref{tab:best-trade-off}. In the plots of \Cref{fig:sbm-k-center}, weak baseline and strong baseline are farthest traversal with access to only $\WO$ and $\SO$ on entire dataset respectively. In comparison, for the plots of \Cref{fig:sbm-k-means}, the weak and strong baselines use $k$-means++ 
with zero $\SO$ queries and the entire set of $\SO$ queries, respectively.

As one would expect, in \Cref{fig:sbm-k-center}, the $k$-center cost decreases drastically at some thhreshold (from $>125k$ to $\sim 7$) since thereafter no point gets misclustered. Moreover, this threshold is quite small --- the algorithm converges as early as the point where it queries $\SO$ for only $\sim 0.5\%$ of the total points. In contrast, the drop of cost for $k$-means as in \Cref{fig:sbm-k-means} demonstrate a more ``smooth'' manner. Nevertheless, both \Cref{fig:sbm-k-center} and \Cref{fig:sbm-k-means} show that the costs of $k$-clustering algorithms drop sharply and approach the optimal cost with a very low percentage of $\SO$ queries. 

We then show in \Cref{tab:best-trade-off} the best query-cost trade-off points for the $k$-center and $k$-means algorithms. 
It can be observed that our algorithm consistently outperforms even the farthest traversal with $\SO$ queries on the entire dataset, while using queries only an extremely small fraction of the points. 
For the $k$-means experiments, our algorithm can provide a solution that is within a factor of $<1.25\times$ of strong benchmark with $\SO$ queries on $0.5\% \sim 1.31\%$ of the points in the dataset. 
We can also observe that trend from \Cref{tab:best-trade-off} that the percentage of $\SO$ queries decreases as $n$ becomes large, while the competitive ratio remains in the same range.
When $n$ is large (e.g., in the $100k$ case),  outperforming the benchmark takes $\SO$ queries on only $<1\%$ of the points.

\begin{figure}[!hbtp]
\centering
\begin{subfigure}{.45\textwidth}
  \centering
  \includegraphics[scale=0.33]{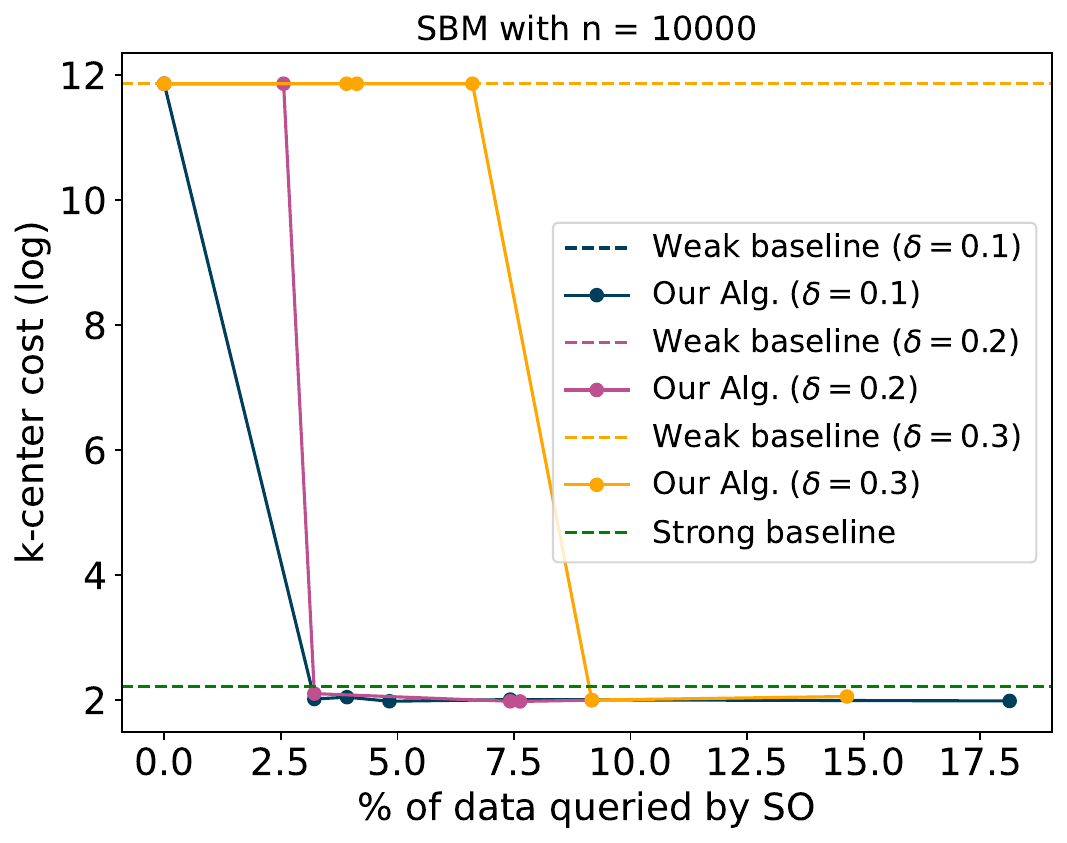}
  \label{fig:sbm10000-kcenter}
\end{subfigure}%
\begin{subfigure}{.45\textwidth}
  \centering
  \includegraphics[scale=0.33]{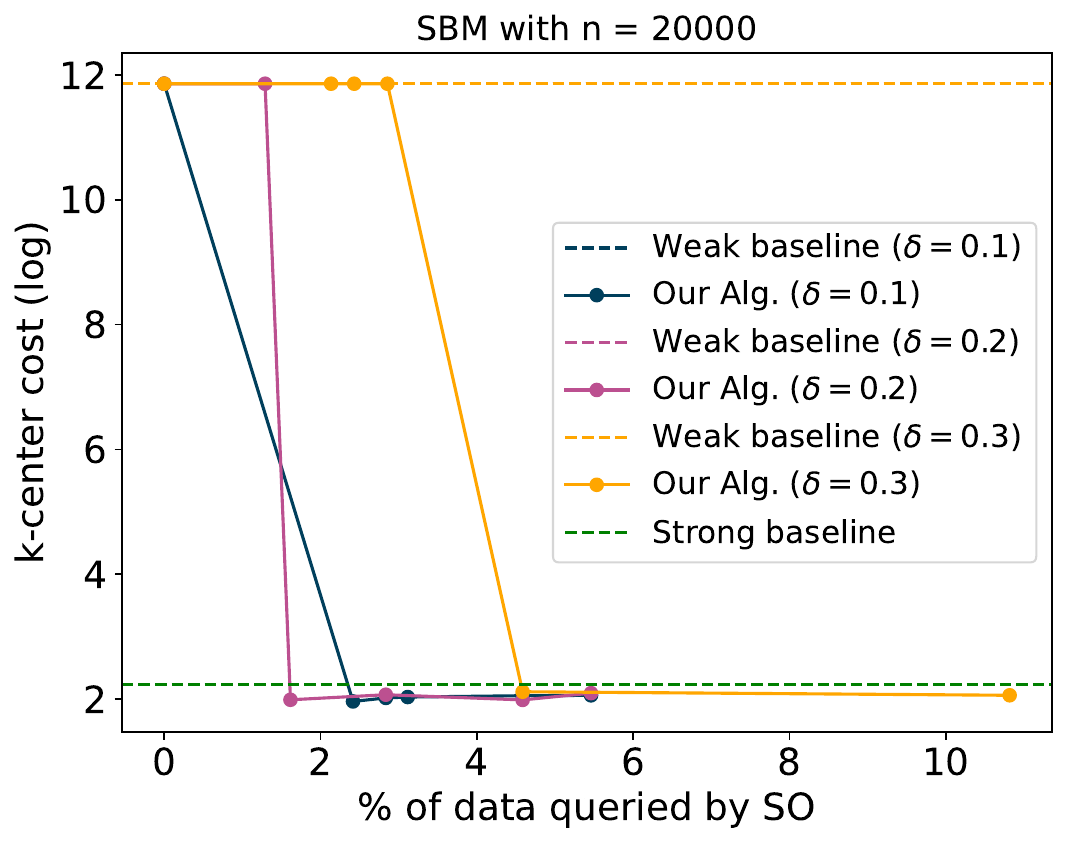}
  \label{fig:sbm20000-kcenter}
\end{subfigure}
\centering
\begin{subfigure}{.45\textwidth}
  \centering
  \includegraphics[scale=0.33]{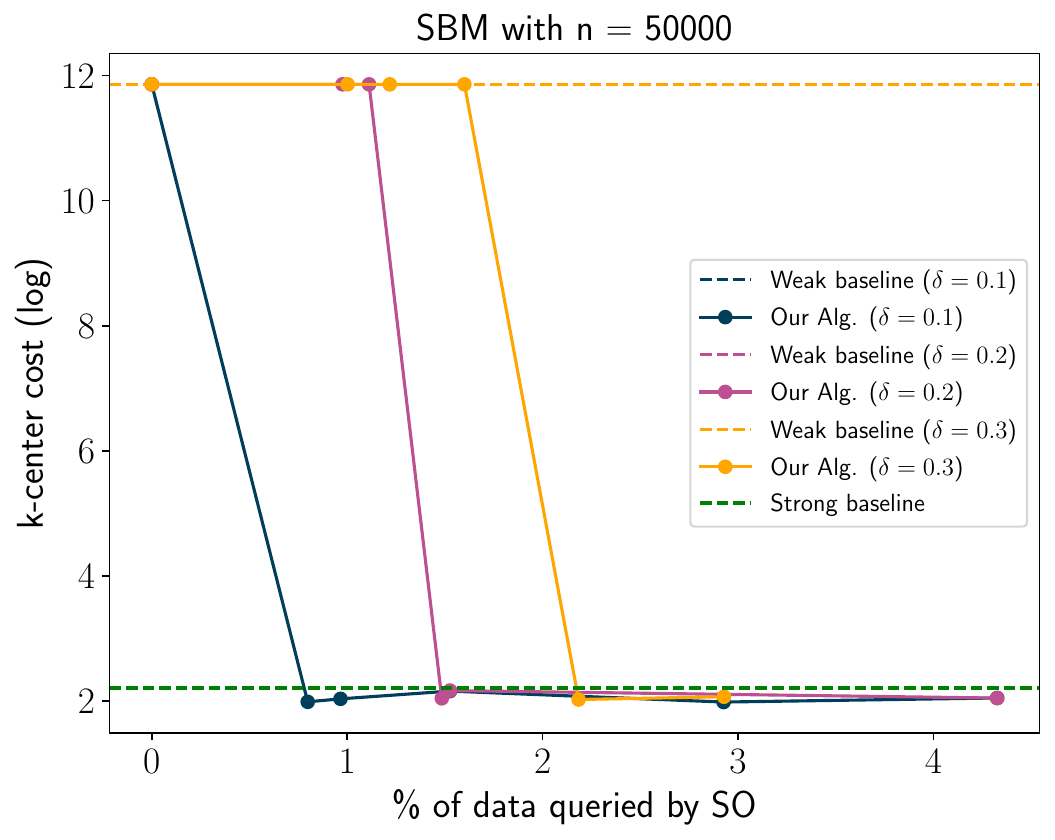}
  \label{fig:sbm50000-kcenter}
\end{subfigure}
\begin{subfigure}{.45\textwidth}
  \centering
  \includegraphics[scale=0.33]{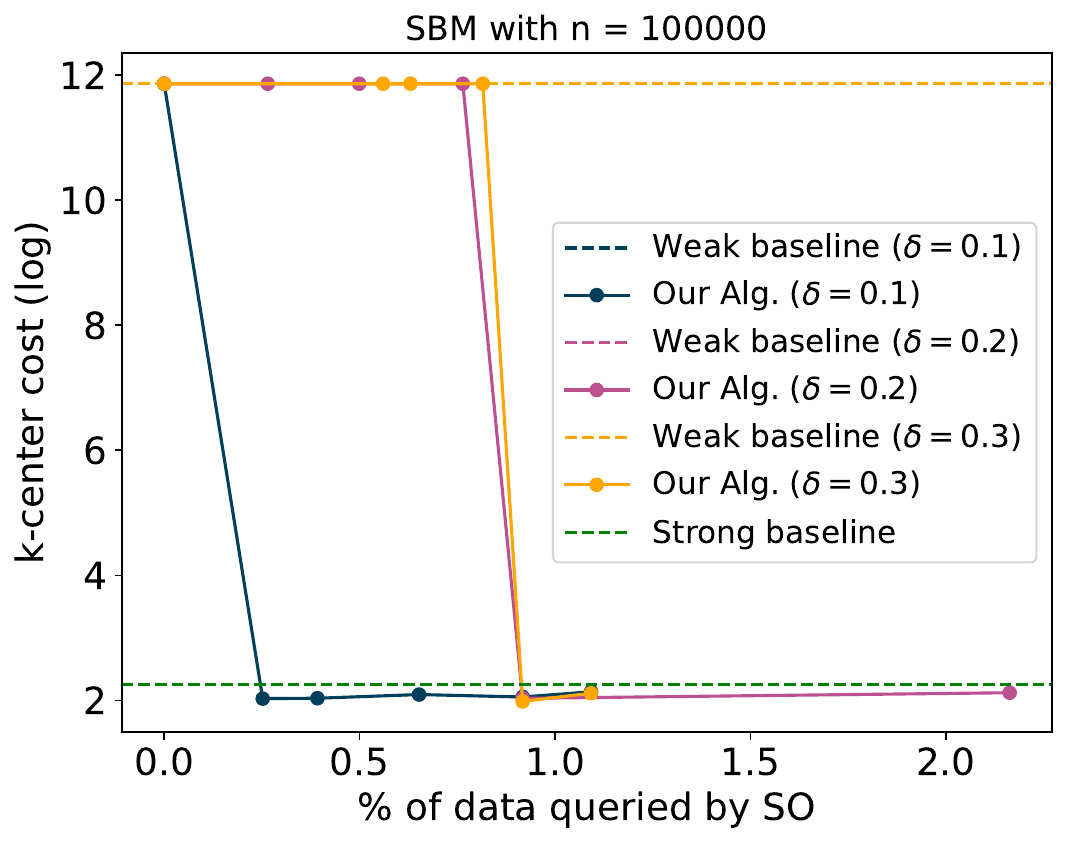}
  \label{fig:sbm100000-kcenter}
\end{subfigure}
\caption{Number of $\SO$ queries vs. clustering cost under the SBM model for \textbf{k-center} with different values of $\delta$ and $n$. Weak baseline - farthest first traversal with $\WO$ queries only and Strong baseline - farthest first traversal with $\SO$ queries on full dataset.
}
\label{fig:sbm-k-center}
\end{figure}

\paragraph{MNIST Experiments:}
We now discuss the results on MNIST with t-SNE and SVD embeddings. It is well-known that the MNIST t-SNE embedding with $d=2$ forms well-seperated clusters; however, the dichotomy between the distances of inter- and intra-cluster points are not as stark as the SBM model. Furthermore, separations between clusters is notably worse for the SVD embedding. Thus, the $t$-SNE and SVD datasets are ``less clustered'' and ``not clustered'' instances, respectively. 
\Cref{fig:mnist} snd \Cref{table:k_means_mnist} show the query-cost curve for MNIST with t-SNE and SVD embeddings. Compared to the SBM model, the curves for these embeddings decrease less rapidly, a consequence of the clusters not being as well-separated. 
Nonetheless, our $k$-means algorithm still outperforms the weak benchmark by a significant margin using a small fraction of $\SO$ queries.
For the t-SNE embeddings, as it is better-clustered, we observe a significant drop in cost after making less than $5\%$ of the $\SO$ queries. 
On the other hand, for the not-well-clustered SVD embedding, although there is only a factor of $\sim 1.2$ between the weak and strong benchmark costs, our algorithm still manages to achieve non-trivial improvements in cost beyond the weak benchmark with fewer than $5\%$ of the $\SO$ queries.



\begin{figure}
\centering
\begin{subfigure}{.45\textwidth}
  \centering
  \includegraphics[scale=0.34]{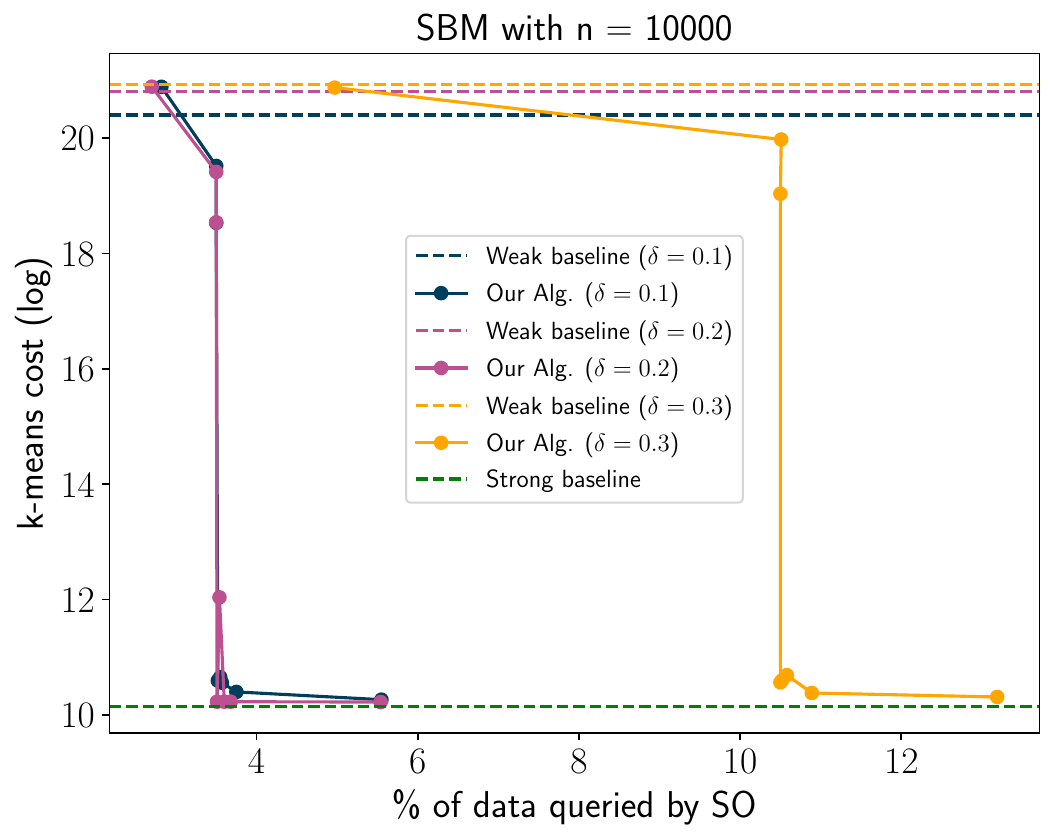}
  \label{fig:sbm10000-kmeans}
\end{subfigure}%
\begin{subfigure}{.45\textwidth}
  \centering
  \includegraphics[scale=0.34]{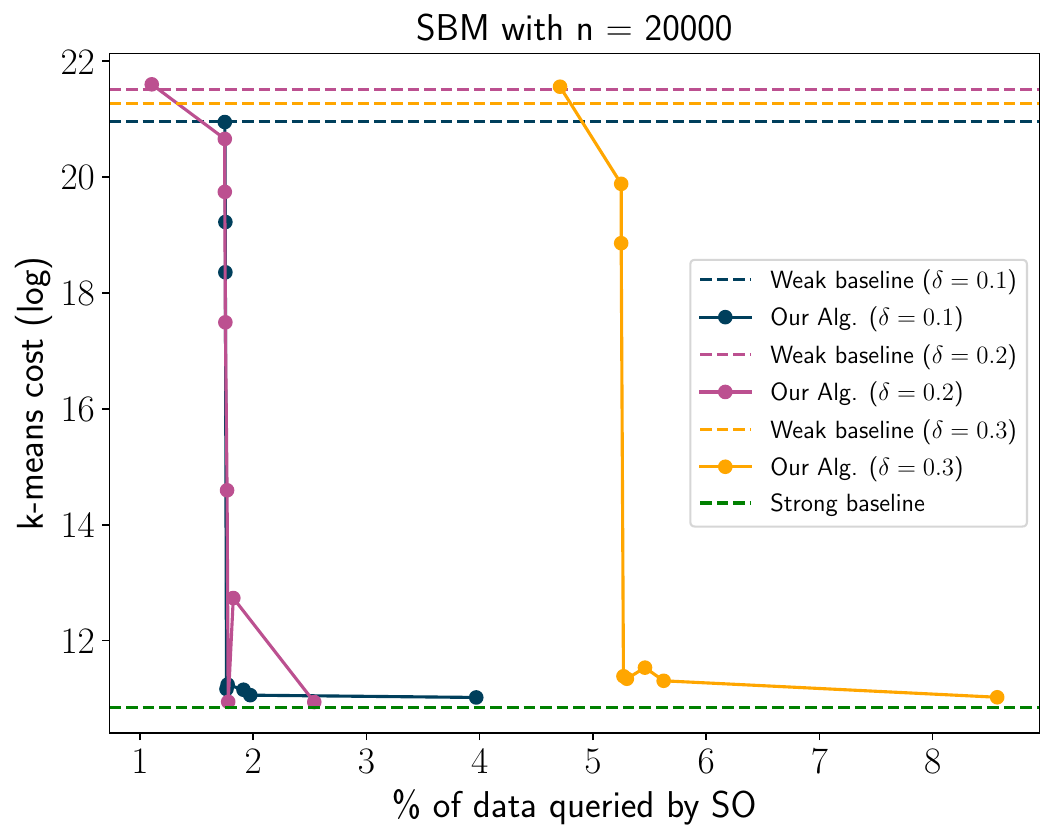}
  \label{fig:sbm20000-kmeans}
\end{subfigure}
\centering
\begin{subfigure}{.45\textwidth}
  \centering
  \includegraphics[scale=0.34]{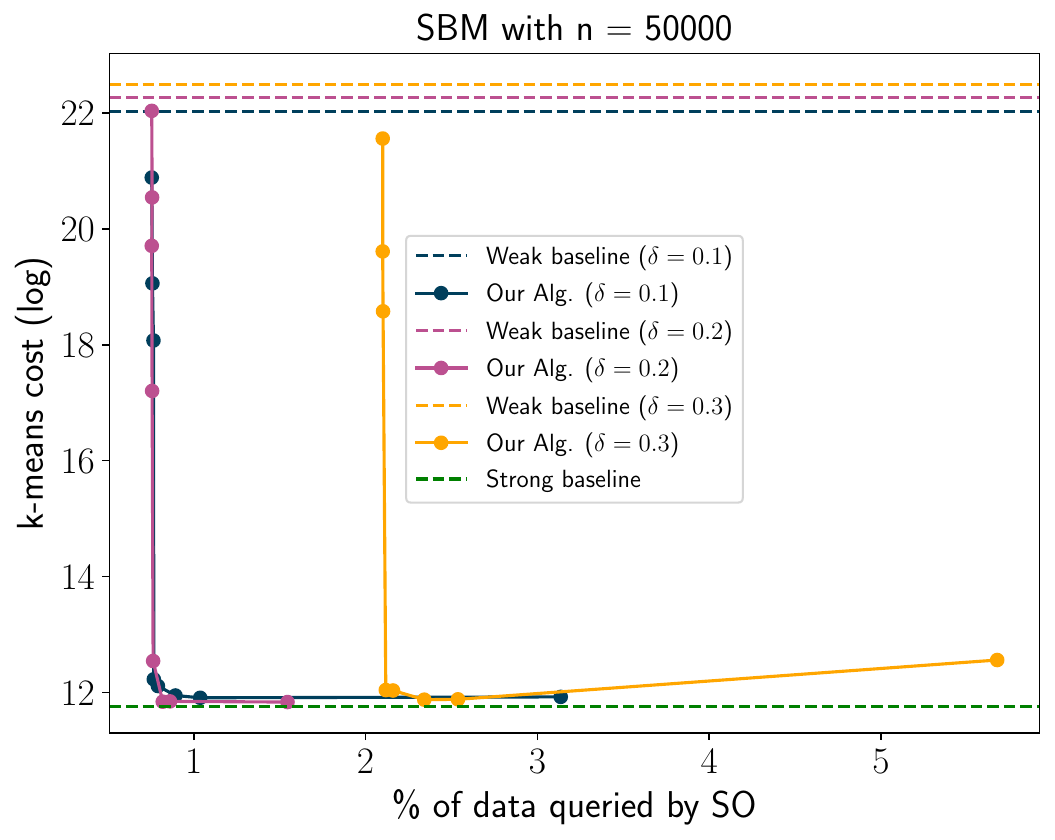}
  \label{fig:sbm50000-kmeans}
\end{subfigure}
\begin{subfigure}{.45\textwidth}
  \centering
  \includegraphics[scale=0.34]{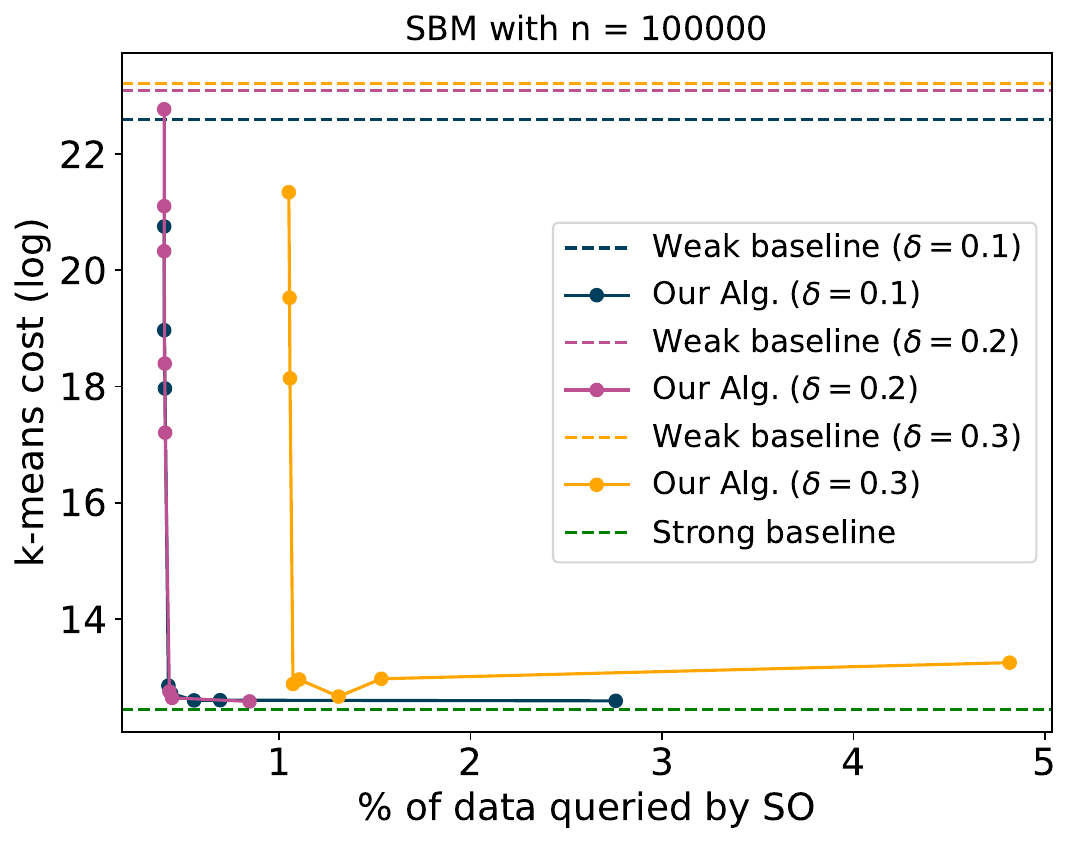}
  \label{fig:sbm100000-kmeans}
\end{subfigure}
\caption{Number of $\SO$ queries vs. clustering cost under the SBM model for \textbf{k-means} with different values of $\delta$ and $n$. Weak baseline - $k$-means++ algorithm with $\WO$ queries only and Strong baseline - $k$-means++ algorithm with $\SO$ queries on full dataset.
}
\label{fig:sbm-k-means}
\end{figure}

\begin{table}[!htb]
\caption{\label{tab:best-trade-off-k-means-MNIST}The best query-cost trade-off point for the $k$-means algorithm on the MNIST embeddings.}
\label{table:k_means_mnist}
\begin{center}
\begin{tabular}{ |*{7}{c|} }
    \hline
    & \multicolumn{3}{|c|}{$\%$ of data queried for $\SO$}
            & \multicolumn{3}{|c|}{Competitive ratio}
                                         \\
    \cline{2-7}
    &  $\delta = 0.1$ &  $\delta = 0.2$ &   $\delta = 0.3$  &   $\delta = 0.1$    &  $\delta = 0.2$  &   $\delta = 0.3$   \\
    \hline
t-SNE   &   4.58  &   4.57   &   6.62  &   1.169  &   1.286  &   1.367   \\
\hline
SVD   &   0.25  &   0.311  &   0.253  &   1.121  &   1.109  &   1.105   \\
\hline
\end{tabular}
\end{center}
\end{table}

\begin{figure}
\centering
\begin{subfigure}{.45\textwidth}
  \centering
  \includegraphics[scale=0.35]{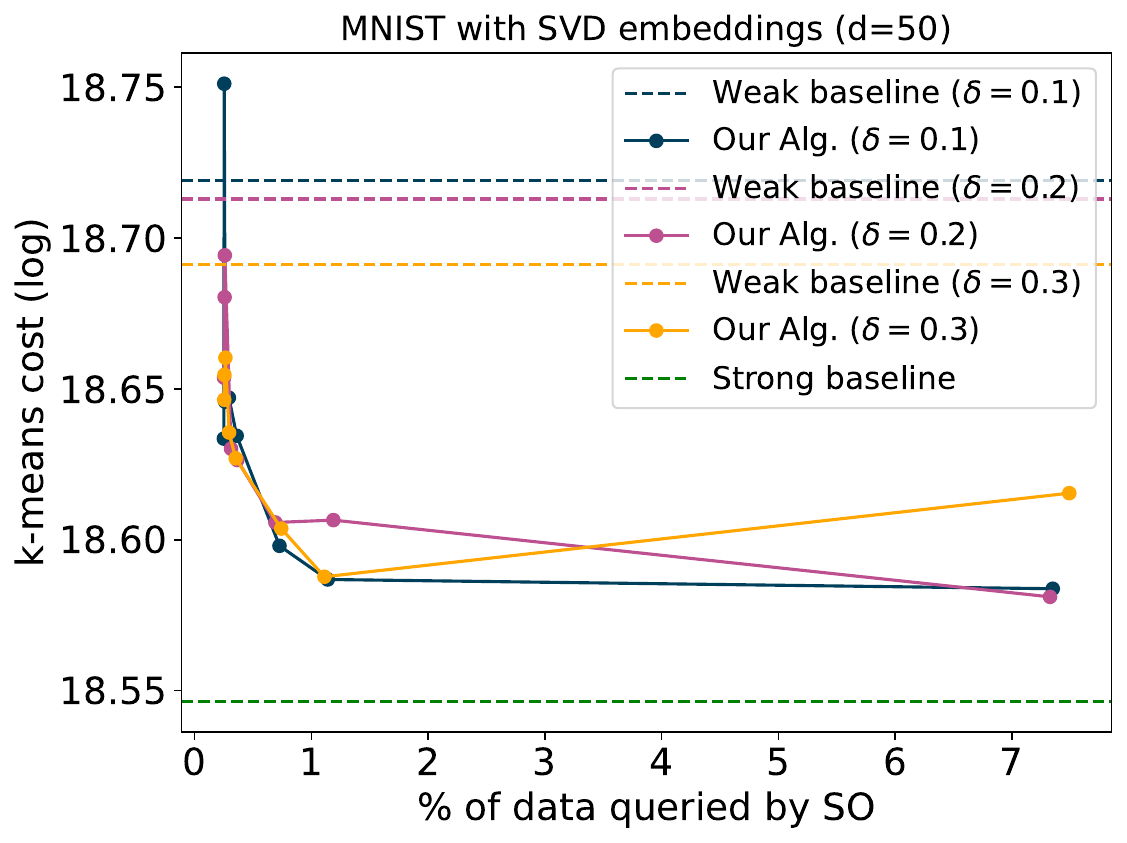}
  \label{fig:mnist-svd}
\end{subfigure}%
\begin{subfigure}{.45\textwidth}
  \centering
  \includegraphics[scale=0.35]{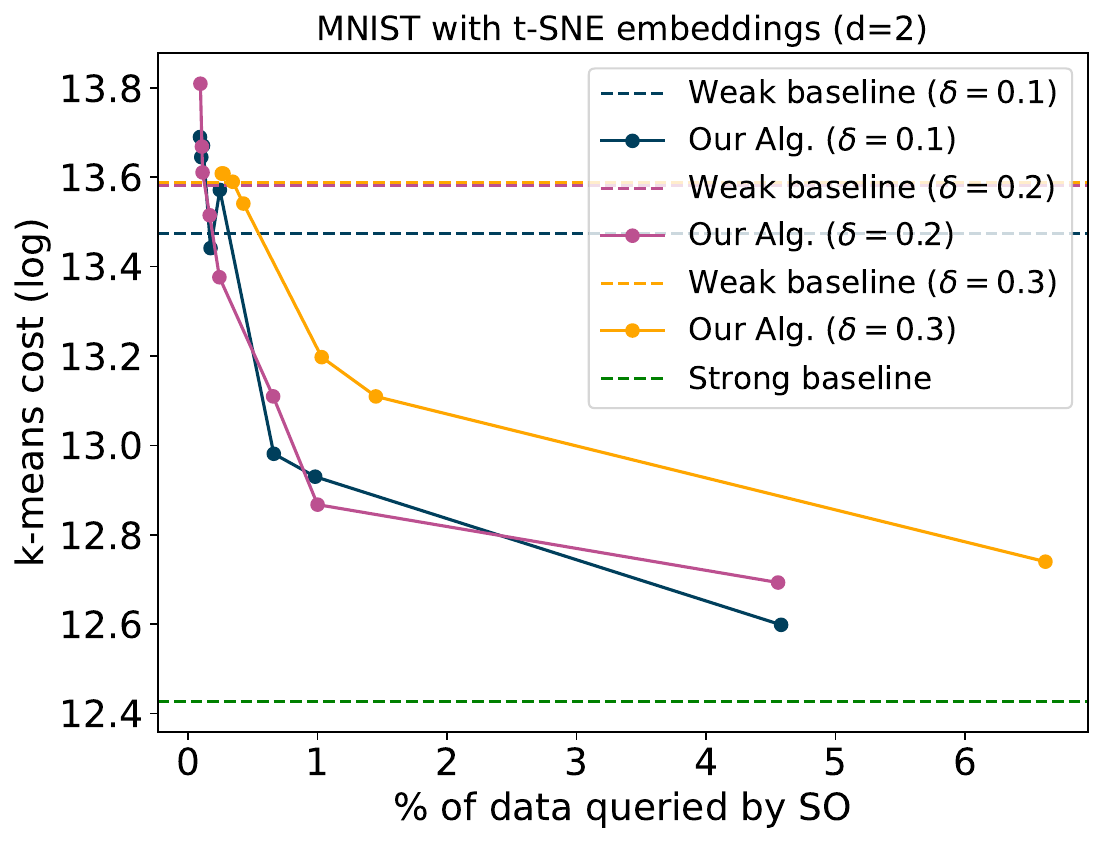}
  \label{fig:mnist-tsne}
\end{subfigure}
\caption{Number of $\SO$ queries vs. $k$-means clustering cost under the \textbf{MNIST dataset} with different values of $\delta$ and $n$. Weak baseline - $k$-means++ algorithm with $\WO$ queries only and Strong baseline - $k$-means++ algorithm with $\SO$ queries on full dataset.
}
\label{fig:mnist}
\end{figure}

\FloatBarrier



\bibliographystyle{plain}
\bibliography{references}
\newpage
\appendix

\section{Technical Preliminaries}
\label{sec:tech-prelim}

\subsection{Concentration inequalities}
We first state several standard concentration inequalities used in the analysis of this paper, beginning with the multiplicative form of the Chernoff bound.


\begin{proposition}[Multiplicative Chernoff bound]\label{prop:chernoff-mult}
	Let $X_1,\ldots,X_n$ be $n$ independent random variables with support in $[0,1]$. Define $X := \sum_{i=1}^{n} X_i$. Then, for every $\gamma >0$, there are 
	\begin{align*}
		& \Pr\paren{X > (1+\gamma)\cdot \expect{X}} \leq \exp\paren{-\frac{\gamma^{2} \expect{X}}{2+\gamma}}; \\
            & \Pr\paren{X < (1-\gamma)\cdot \expect{X}} \leq \exp\paren{-\frac{\gamma^{2} \expect{X}}{2}}.
	\end{align*}
\end{proposition}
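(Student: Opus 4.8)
The plan is the standard exponential-moment (Chernoff) argument. Write $\mu = \expect{X}$ and $p_{i} = \expect{X_{i}}$, so that $\mu = \sum_{i} p_{i}$, and treat the two tails separately.

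\textbf{Upper tail.} For any $t > 0$, Markov's inequality applied to the nonnegative random variable $e^{tX}$ gives $\prob{X > (1+\gamma)\mu} \le e^{-t(1+\gamma)\mu}\cdot \expect{e^{tX}}$. By independence, $\expect{e^{tX}} = \prod_{i}\expect{e^{tX_{i}}}$, and since $X_{i}\in[0,1]$ and $x\mapsto e^{tx}$ is convex it lies below the chord through $(0,1)$ and $(1,e^{t})$ on $[0,1]$, so $e^{tX_{i}}\le 1 + X_{i}(e^{t}-1)$; taking expectations and using $1+z\le e^{z}$ yields $\expect{e^{tX_{i}}}\le \exp(p_{i}(e^{t}-1))$, hence $\expect{e^{tX}}\le \exp(\mu(e^{t}-1))$. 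Thus $\prob{X > (1+\gamma)\mu}\le \exp\paren{\mu(e^{t}-1) - t(1+\gamma)\mu}$, and minimizing the exponent over $t>0$ (at $t=\ln(1+\gamma)$) gives $\prob{X > (1+\gamma)\mu}\le \paren{e^{\gamma}/(1+\gamma)^{1+\gamma}}^{\mu}$. To conclude, I would observe that the claimed bound is equivalent, after taking logarithms and simplifying, to the scalar inequality $\ln(1+\gamma)\ge \frac{2\gamma}{2+\gamma}$, which holds because $h(\gamma) := \ln(1+\gamma) - \frac{2\gamma}{2+\gamma}$ satisfies $h(0)=0$ and $h'(\gamma) = \frac{\gamma^{2}}{(1+\gamma)(2+\gamma)^{2}}\ge 0$.

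\textbf{Lower tail.} Symmetrically, for $t>0$ apply Markov to $e^{-tX}$: $\prob{X < (1-\gamma)\mu}\le e^{t(1-\gamma)\mu}\cdot\expect{e^{-tX}}$, and the same convexity estimate gives $\expect{e^{-tX}}\le \exp(\mu(e^{-t}-1))$. We may assume $\gamma < 1$, since otherwise $(1-\gamma)\mu\le 0\le X$ makes the left side $0$ and the bound is vacuous. Minimizing $\mu(e^{-t}-1)+t(1-\gamma)\mu$ over $t>0$ (at $t=-\ln(1-\gamma)$) gives $\prob{X < (1-\gamma)\mu}\le \paren{e^{-\gamma}/(1-\gamma)^{1-\gamma}}^{\mu}$. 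This is at most $\exp(-\gamma^{2}\mu/2)$ precisely when $(1-\gamma)\ln(1-\gamma)\ge -\gamma + \gamma^{2}/2$, which follows by expanding $\ln(1-\gamma) = -\sum_{k\ge 1}\gamma^{k}/k$: a short computation gives $(1-\gamma)\ln(1-\gamma) = -\gamma + \sum_{k\ge 2}\frac{\gamma^{k}}{k(k-1)}$, and every term of the sum is nonnegative for $\gamma\in(0,1)$.

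\textbf{Expected difficulty.} There is no conceptual obstacle here; the argument is entirely routine. The only places requiring a little care are the two scalar estimates bounding $e^{\gamma}/(1+\gamma)^{1+\gamma}$ and $e^{-\gamma}/(1-\gamma)^{1-\gamma}$, and both reduce to one-line verifications — a monotone-function check for the upper tail and a term-by-term power-series comparison for the lower tail — so I do not anticipate any genuine difficulty.
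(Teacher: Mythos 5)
Your proof is correct: it is the standard exponential-moment argument (Markov on $e^{tX}$, the convexity bound $e^{tX_i}\le 1+X_i(e^t-1)$ valid for $[0,1]$-valued variables, optimization at $t=\ln(1+\gamma)$, and the two scalar inequalities $\ln(1+\gamma)\ge 2\gamma/(2+\gamma)$ and $(1-\gamma)\ln(1-\gamma)\ge -\gamma+\gamma^2/2$, both of which you verify correctly). The paper states this proposition as standard background in its technical preliminaries and offers no proof of its own, so there is nothing to compare against; your write-up would serve as a complete and correct proof.
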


\Cref{prop:chernoff-mult} assumes \emph{independent} random variables. It is known that the Chernoff bound is also applicable in the scenario of \emph{negatively correlated} random variables, defined as follows.

\begin{definition}[Negatively Correlated Random Variables]
\label{def:neg-cor-rvs}
Random variables $X_1,\ldots,X_n$ are said to be \emph{negatively correlated} if and only if
\begin{align*}
\expect{\prod_{i=1}^{n} X_{i}} \leq \prod_{i=1}^{n}\expect{X_{i}}.
\end{align*}
In particular, if $X_{i}$'s are independent, we have $\expect{\prod_{i=1}^{n} X_{i}} = \prod_{i=1}^{n}\expect{X_{i}}$.
\end{definition}

One can easily verify the following sufficient condition to verify the negative correlation between random variables.
\begin{fact}
\label{fct:neg-cor-rvs}
A sufficient condition for $X_{i}$ and $X_{j}$ to be negatively correlated is that conditioning on $X_{i}=1$, the probability for $X_{j}=1$ does not increase.
\end{fact}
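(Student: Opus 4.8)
The plan is to unwind the definition of negative correlation (\Cref{def:neg-cor-rvs}) for the pair $\{X_{i}, X_{j}\}$, using crucially that the random variables in question are $\{0,1\}$-valued — as the hypothesis, phrased via the events $\{X_{i}=1\}$ and $\{X_{j}=1\}$, implicitly assumes. For such indicator random variables the product $X_{i}X_{j}$ is itself the indicator of the event $\{X_{i}=1\}\cap\{X_{j}=1\}$, and $\mathbb{E}[X] = \Pr[X=1]$; hence the target inequality $\mathbb{E}[X_{i}X_{j}]\leq \mathbb{E}[X_{i}]\,\mathbb{E}[X_{j}]$ is precisely
\[
\Pr[X_{i}=1,\, X_{j}=1]\;\leq\; \Pr[X_{i}=1]\cdot\Pr[X_{j}=1].
\]

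First I would dispose of the degenerate case: if $\Pr[X_{i}=1]=0$, then $\Pr[X_{i}=1,\,X_{j}=1]=0$ as well and the inequality is immediate. Otherwise the conditional probability $\Pr[X_{j}=1\mid X_{i}=1]$ is well-defined, and the chain rule gives $\Pr[X_{i}=1,\,X_{j}=1] = \Pr[X_{i}=1]\cdot\Pr[X_{j}=1\mid X_{i}=1]$. Substituting the hypothesis $\Pr[X_{j}=1\mid X_{i}=1]\leq \Pr[X_{j}=1]$ then yields $\Pr[X_{i}=1,\,X_{j}=1]\leq \Pr[X_{i}=1]\cdot\Pr[X_{j}=1]$, which is exactly the asserted negative correlation.

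There is essentially no obstacle here: the argument is a two-line computation. The only points that warrant a sentence of care are the degenerate case above, and the observation that the hypothesis — stated in terms of the event $\{X=1\}$ — converts to a statement about expectations precisely because the variables take values in $\{0,1\}$. I would additionally include a remark that to obtain negative correlation of the entire family $X_{1},\dots,X_{n}$ in the sense of \Cref{def:neg-cor-rvs} (as is needed to invoke the Chernoff bound for negatively correlated variables in, e.g., \Cref{lem:k-means-S-sample}), it suffices that the same monotonicity hold after conditioning on \emph{any} sub-collection of the $X$'s being simultaneously $1$; the statement recorded in the Fact is the $n=2$ instance of this and is all that is literally claimed.
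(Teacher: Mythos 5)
Your proof is correct and is exactly the two-line computation the paper leaves implicit when it says the fact is "easily verified": for indicator variables, $\Exp[X_iX_j]=\Pr[X_i=1]\Pr[X_j=1\mid X_i=1]\leq\Pr[X_i=1]\Pr[X_j=1]$, with the degenerate case $\Pr[X_i=1]=0$ handled separately. Your closing remark is also well taken — the fact as stated gives only pairwise negative correlation, whereas invoking the generalized Chernoff bound in \Cref{lem:k-means-S-sample} requires the product inequality for the whole family, i.e., the monotonicity condition after conditioning on any sub-collection being $1$, which is what the paper actually verifies in that application.
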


And the concentration in \Cref{prop:chernoff-mult} applies to negatively correlated random variables.
\begin{proposition}[Generalized Chernoff]
\label{prop:chernoff-general}
Let $X_1,\ldots,X_n$ be $n$ negatively correlated random variables supported on $\{0,1\}$. Then, the concentration inequality in \Cref{prop:chernoff-mult} still holds.
\end{proposition}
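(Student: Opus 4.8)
The plan is to run the classical exponential-moment (Chernoff) argument, with the only real change being the way one factorizes the moment generating function $\mathbb{E}[e^{tX}]$ when the summands are negatively correlated rather than independent; once that factorization is in place, every remaining estimate is word-for-word the independent-case computation behind \Cref{prop:chernoff-mult}, so there is nothing new to check there. Throughout write $\mu := \mathbb{E}[X] = \sum_i p_i$ with $p_i := \mathbb{E}[X_i] = \Pr[X_i = 1]$.

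For the upper tail, fix $t > 0$ and apply Markov's inequality to the nonnegative variable $e^{tX}$: $\Pr[X > (1+\gamma)\mu] \le e^{-t(1+\gamma)\mu}\,\mathbb{E}[e^{tX}]$. The crux is the claim $\mathbb{E}[e^{tX}] = \mathbb{E}\big[\prod_{i=1}^{n} e^{tX_i}\big] \le \prod_{i=1}^{n}\mathbb{E}[e^{tX_i}]$. Since each $X_i\in\{0,1\}$ we have the identity $e^{tX_i} = 1 + (e^t-1)X_i$; multiplying out, $\prod_i e^{tX_i} = \sum_{S\subseteq[n]} (e^t-1)^{|S|}\prod_{i\in S}X_i$. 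Taking expectations and using (a) $(e^t-1)^{|S|} \ge 0$ because $t>0$, together with (b) the negative-correlation bound $\mathbb{E}\big[\prod_{i\in S}X_i\big] \le \prod_{i\in S}p_i$ applied to each sub-collection $\{X_i\}_{i\in S}$, gives $\mathbb{E}[e^{tX}] \le \sum_S (e^t-1)^{|S|}\prod_{i\in S}p_i = \prod_i\big(1+(e^t-1)p_i\big) = \prod_i \mathbb{E}[e^{tX_i}]$. From here I would just repeat the textbook steps: bound $\prod_i(1+(e^t-1)p_i) \le \exp\big((e^t-1)\mu\big)$ via $1+z\le e^z$, optimize at $t = \ln(1+\gamma)$, and simplify with $(1+\gamma)\ln(1+\gamma) - \gamma \ge \gamma^2/(2+\gamma)$ to reach $\exp\big(-\gamma^2\mu/(2+\gamma)\big)$.

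For the lower tail I would apply the same template to $e^{-tX}$ with $t>0$, namely $\Pr[X < (1-\gamma)\mu] \le e^{t(1-\gamma)\mu}\,\mathbb{E}[e^{-tX}]$, and expand $e^{-tX_i} = 1 + (e^{-t}-1)X_i$. Now $e^{-t}-1 < 0$, so the coefficients $(e^{-t}-1)^{|S|}$ alternate in sign and step (a) above fails; the clean fix is to pass to the complementary indicators $Y_i := 1-X_i$, which again form a negatively correlated family (the condition of \Cref{fct:neg-cor-rvs} is stable under this complementation, and the subset inequality $\mathbb{E}[\prod_{i\in S}(1-X_i)] \le \prod_{i\in S}(1-p_i)$ is the version one works with), and note $Y := \sum_i Y_i = n - X$ with $\{X < (1-\gamma)\mu\} = \{Y > \mathbb{E}[Y] + \gamma\mu\}$. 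Feeding this into the upper-tail bound already proven for $Y$ and simplifying with $(1-\gamma)\ln(1-\gamma) + \gamma \ge \gamma^2/2$ yields $\Pr[X < (1-\gamma)\mu] \le \exp\big(-\gamma^2\mu/2\big)$. The one genuine obstacle is the moment-generating-function factorization under negative correlation — in particular ensuring the defining inequality $\mathbb{E}[\prod_{i\in S}X_i] \le \prod_{i\in S}p_i$ (and its complementary form for the lower tail) holds for all sub-collections $S$, not merely $S = [n]$; after that the proof is a mechanical rerun of the standard Chernoff calculation.
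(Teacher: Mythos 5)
Your argument is correct, and it is the standard Panconesi--Srinivasan-style proof of this fact; the paper itself states \Cref{prop:chernoff-general} without proof, so there is nothing to compare against beyond noting that your route (Markov on $e^{tX}$, the identity $e^{tX_i}=1+(e^t-1)X_i$ for $\{0,1\}$-valued variables, expansion over subsets, and the product inequality applied termwise) is exactly how this proposition is proved in the literature. The caveat you flag is genuine and worth stating explicitly: \Cref{def:neg-cor-rvs} as written only asserts $\Exp[\prod_{i=1}^n X_i]\leq \prod_{i=1}^n\Exp[X_i]$ for the \emph{full} index set, whereas your expansion needs $\Exp[\prod_{i\in S}X_i]\leq\prod_{i\in S}\Exp[X_i]$ for \emph{every} $S\subseteq[n]$ (and, for the lower tail, the analogous inequality for the complements $Y_i=1-X_i$). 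These are the hypotheses under which the proposition is actually true and under which it is invoked elsewhere in the paper, so your proof should simply take the all-subsets (and complement-closed) version as the working definition of negative correlation.

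Two small points to tighten. First, for the lower tail, do not plug into the \emph{final} upper-tail inequality for $Y=n-X$ (that would produce a rate in terms of $\Exp[Y]=n-\mu$, not $\mu$); what you need from the complementation is only the MGF factorization $\Exp[e^{-tX}]=e^{-tn}\Exp[e^{tY}]\leq e^{-tn}\prod_i\Exp[e^{tY_i}]=\prod_i\Exp[e^{-tX_i}]$, after which you run the direct Chernoff optimization for $\Pr[X<(1-\gamma)\mu]$ with the rate function $(1-\gamma)\ln(1-\gamma)+\gamma\geq\gamma^2/2$ — your mention of that inequality indicates you intend exactly this, but the phrasing "feed into the upper-tail bound for $Y$" is misleading as written. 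Second, your claim that \Cref{fct:neg-cor-rvs} is "stable under complementation" is true for pairs (covariance is invariant under complementing both coordinates) but does not by itself give the all-subsets inequality for the $Y_i$; again, the clean fix is to assume the complement-closed, all-subsets form of negative correlation as the hypothesis, which is what the proposition needs anyway.
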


\subsection{Aspect Ratio for Clustering and Guessing $\widetilde{OPT}$}
\label{seubsec:add-detail}

We assumed polynomially-bounded aspect ratio $\Delta$ in our work for the simplicity of presentation.
However, our results can be easily extended to arbitrarily large $\Delta$ by simply increasing the domain of possible guesses for $\widetilde{OPT}$. Specifically, note that only our clustering algorithms depend on the aspect ratio $\Delta$ (not our MST algorithm), and this dependency only appears in the universe of possible guesses for an approximation $\widetilde{OPT}$ of the optimal clustering cost (this approximation is needed for $k$-centers, $k$-means, and $k$-median). Specifically, consider the set of $O(\eps^{-1} \log \Delta)$ possible guesses $t= (1+\eps)^0, (1+\eps)^1,\dots,\Delta$ for the approximate cost $\widetilde{OPT}$, where we set $\eps = 1$ for $k$-means and $k$-medians, and allow for smaller $\eps$  for $k$-centers (as stated in Theorem \ref{thm:alg-k-center}). For each of the clustering problems, it will suffice to find \emph{any} level $t$ such that the guess at $t$ is deemed ``too small'' by the algorithm (described below), and such that running the algorithm at level $t(1+\eps)$ produces a valid solution. 

For $k$-centers, a level $t$ is deemed as ``too small'' of a guess if the number of centers produced by that level is larger than $k$ (see Algorithm \ref{alg:k-center}). For $k$-means and median, the level is ``too small'' if we sample too many (more than $O(k \log^2 m$) centers in Algorithm \ref{alg:k-means}. Thus, we can easily run our algorithm on guess of $\widetilde{OPT}$ which are chosen via binary search to find a guess $t$ which is ``too small''  and such that $t(1+\eps)$ is not too small (and therefore produces a valid solution). This results in a $O(\log \frac{\log \Delta}{\eps})$ factor in the strong oracle query complexity instead of a $O(\log \frac{\log n}{\eps})$ factor. Further problem-specific details are deferred to the proofs of the respective algorithms.

\section{Generalizing Our Clustering Algorithms to Larger $\delta$}
\label{appendix:deltaRemark}
We used the fixed corruption probability $\delta=\frac{1}{3}$ in our proofs of our clustering algorithms for clarity of presentation. However, we remark that our clustering algorithm works for arbitrary $\delta<\frac{1}{2}$ by scaling up the number of weak and strong oracle queries by a factor of $(\frac{1}{1/2-\delta})^2$. To see this, note that we only used the corruption probability in the median estimation, and our goal is show that in a fixed set $S$ of points, with high probability, for any $x\in \cX$, at least half of the distances between $x$ and $y\in S$ are not corrupted. In the Chernoff bound calculation, if the corruption probability is $\delta$, there are in expectation $\expect{X}=(1-\delta)\card{S}$ distances between $x \in \cX$ and $y\in S$ preserved for a fixed $S$. As such, the calculation of Chernoff bound becomes
\begin{align*}
\Pr\paren{X<\frac{\card{S}}{2}} &= \Pr\paren{X<\frac{1/2}{1-\delta}\cdot \expect{X}}\\
&= \Pr\paren{X<(1-\frac{1/2-\delta}{1-\delta})\cdot \expect{X}}\\
&\leq \Pr\paren{X<(1-(1/2-\delta))\cdot \expect{X}}\\
&\leq \exp\paren{-\frac{(1/2-\delta)^2\cdot \expect{X}}{3}}\\
&\leq \exp\paren{-\frac{(1/2-\delta)^2\cdot \card{S}}{6}},
\end{align*}
where the last two inequalities used the condition of $\delta<\frac{1}{2}$. As such, the condition of $\card{S}=O(\frac{1}{(1/2-\delta)^2}\cdot \log{n})$ suffices to keep the statement true with high probability.

\section{Ilustration of the clustering properties of MNIST tSNE and SVD embeddings}
\label{app:mnist-embedding-property}

We have mentioned in \Cref{sec:experiment} that there is a significant difference between the clustering costs for the MNIST embeddings obtained by tSNE and SVD methods. To give an intuitive justification of this statement, we include \Cref{fig:embeddings-comp} for the comparison of the embeddings. We note that these properties are well-known in the area, and we include them for completeness.

\begin{figure}[!h]
\centering
\begin{subfigure}{.5\textwidth}
  \centering
  \includegraphics[scale=0.23]{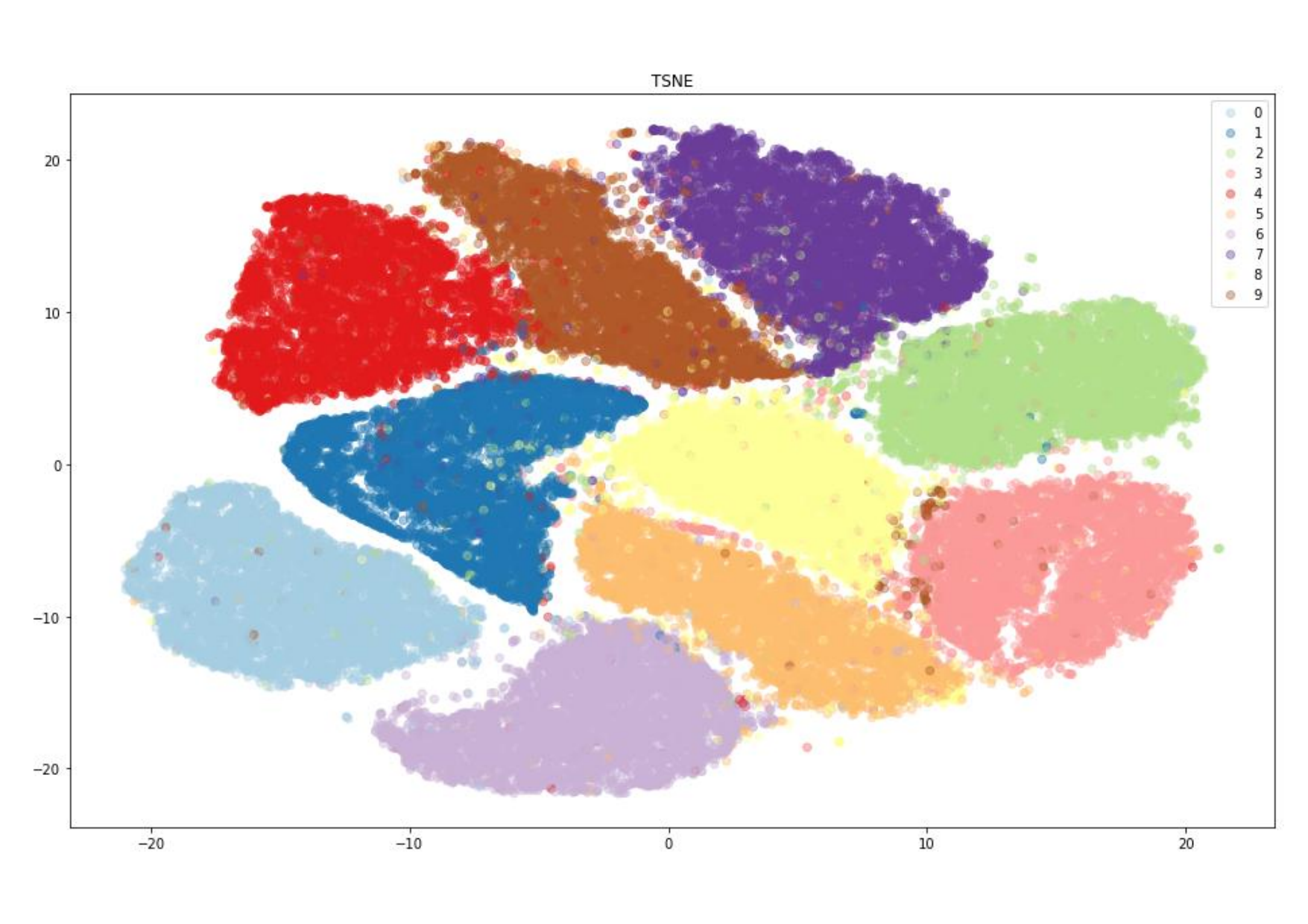}
  \caption{tSNE}
  \label{fig:tsne-embedding}
\end{subfigure}%
\begin{subfigure}{.5\textwidth}
  \centering
  \includegraphics[scale=0.23]{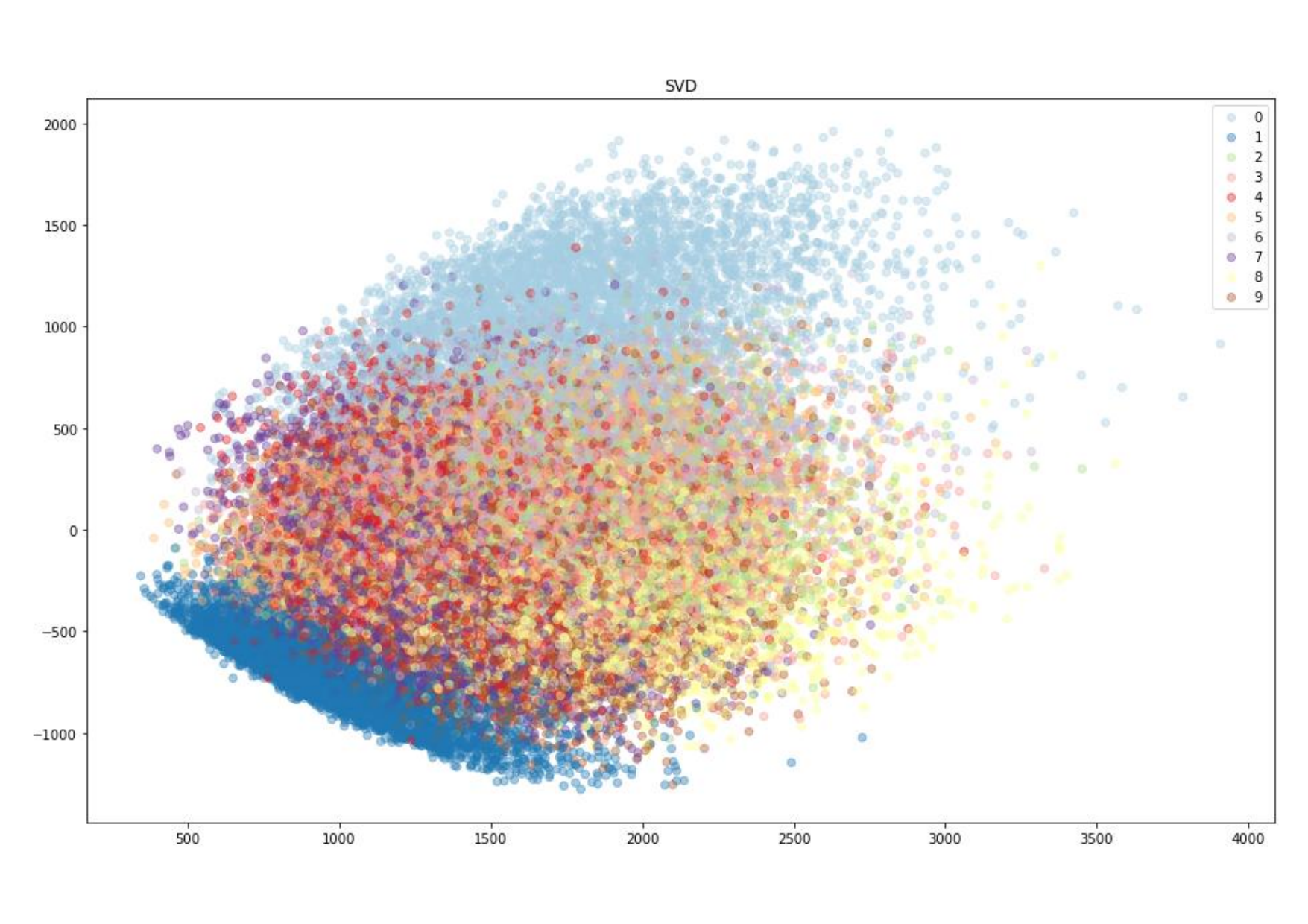}
  \caption{SVD}
  \label{fig:svd-embedding}
\end{subfigure}
\caption{The comparison of the MNIST 60k embeddings plot with 2 dimensions. Left: the embeddings generated by tSNE, which are well-clsutered; Right: the embeddings generated by SVD, and we pick the first 2 dimensions for plot. 
}
\label{fig:embeddings-comp}
\end{figure}

In \Cref{fig:tsne-embedding}, the embeddings of different classes are well-clustered in general, although the distances between the clsuters are not as large as the instanced generated by the Stochastic Block Model. In contrast, in \Cref{fig:svd-embedding}, the separation between classes is unclear, and the costs generated by a ``good'' and a random clustering are comparable.

\clearpage

\end{document}